\documentclass[12pt]{article}

\usepackage{amsfonts,amsmath,amsthm,amsbsy,amssymb,dsfont,bm,mathtools,mathalfa,mathrsfs}
\usepackage[hidelinks]{hyperref} % colorlinks
\usepackage{graphicx}
\usepackage{enumerate}
\usepackage[shortlabels,inline]{enumitem} % enumerate layout
\usepackage{natbib}
\usepackage{multicol} % two column graphics layout
\usepackage[boxed,noline,noend]{algorithm2e} % [ruled] horizontal lines
\SetKwInput{KwData}{Input}
\SetKwInput{KwResult}{Output} 
\usepackage{setspace}
\usepackage{authblk}

\pdfminorversion=4

% DON'T change margins - should be 1 inch all around.
\addtolength{\oddsidemargin}{-.5in}%
\addtolength{\evensidemargin}{-.5in}%
\addtolength{\textwidth}{1in}%
\addtolength{\textheight}{-.3in}%
\addtolength{\topmargin}{-.8in}%
\addtolength{\textheight}{1.15in}%

\newcommand{\1}{\textbf{\textup{1}}}
\newcommand{\I}{\mathds{1}} % indicator
\newcommand{\R}{\mathbb{R}}

\newcommand{\N}{\mathbb{N}}

\newcommand{\km}{{k_\mathrm{max}}}
\newcommand{\T}{^\mathrm{T}}
\newcommand{\test}{_\mathrm{test}}

\newcommand{\norm}[1]{{\left\|#1\right\|}}
\newcommand{\F}{_\mathrm{F}}
\newcommand{\mumax}{\mu_{\mathrm{max}}}
\newcommand{\balpha}{\boldsymbol{\alpha}}

\DeclareMathOperator*{\argmax}{argmax}
\DeclareMathOperator*{\argmin}{argmin}
\DeclareMathOperator{\tr}{tr}
\DeclareMathOperator{\Var}{Var}
\DeclareMathOperator{\E}{\mathbb{E}}
\DeclareMathOperator{\Prob}{\mathbb{P}}
\DeclareMathOperator{\diag}{diag}

% custom macros
\newtheorem{assumption}{Assumption}
\newtheorem{corollary}{Corollary}[section]
\newtheorem{definition}{Definition}

\newtheorem{lemma}{Lemma}[section]
\newtheorem{prop}{Proposition}[section]
\newtheorem{remark}{Remark}[section]
\newtheorem{theorem}{Theorem}[section]

\newcommand{\beginsupplement}{%
	\setcounter{section}{0} 
	\setcounter{table}{0}
	\setcounter{figure}{0} 
	\setcounter{algocf}{0}
	\renewcommand{\thesection}{S\arabic{section}}
	\renewcommand{\thetable}{S\arabic{table}}
	\renewcommand{\thefigure}{S\arabic{figure}}
	\renewcommand{\thealgocf}{S\arabic{algocf}}
	\renewcommand{\theequation}{S\arabic{equation}}
	\renewcommand{\thetheorem}{S\arabic{theorem}}
}

\begin{document}

\def\spacingset#1{\renewcommand{\baselinestretch}%
    {#1}\small\normalsize} \spacingset{1}

%%%%%%%%%%%%%%%%%%%%%%%%%%%%%%%%%%%%%%%%%%%%%%%%%%%%%%%%%%%%%%%%%%%%%%%%%%%%%%

\title{\bf Estimating Graph Dimension with Cross-validated Eigenvalues}
%	\author{Fan Chen
%%		\thanks{The authors gratefully acknowledge National Science Foundation grant DMS-1612456 and DMS-1916378 and Army Research Office grant W911NF-15-1-0423.}\\
%		Department of Statistics, University of Wisconsin--Madison\\
%		and \\
%		S\'{e}bastien Roch \\ 
%		Department of Mathematics, University of Wisconsin--Madison\\
%		and \\
%		Karl Rohe \\
%		Department of Statistics, University of Wisconsin--Madison\\
%		and \\
%		Shuqi Yu \\ 
%		Department of Mathematics, University of Wisconsin--Madison}
\author[1]{Fan Chen}
\author[2]{S\'{e}bastien Roch} 
\author[1]{Karl Rohe}
\author[2]{Shuqi Yu}
\affil[1]{Department of Statistics, University of Wisconsin--Madison}
\affil[2]{Department of Mathematics, University of Wisconsin--Madison}
\maketitle

\bigskip
\begin{abstract}
    % Put your abstract here. Abstracts are limited to 200 words for
    % regular articles and 100 words for Letters to the Editor. Please no
    % personal pronouns, also please do not use the words ``new'' and/or
    % ``novel'' in the abstract. An article usually includes an abstract, a
    % concise summary of the work covered at length in the main body of the
    % article. 
    In applied multivariate statistics, estimating the number of latent dimensions or the number of clusters, $k$, is a fundamental and recurring problem. We study a sequence of statistics called \textit{cross-validated eigenvalues}. Under a large class of random graph models, including both Poisson and Bernoulli edges, without parametric assumptions, we provide a $p$-value for each cross-validated eigenvalue. It tests the null hypothesis that the sample eigenvector is orthogonal to (i.e., uncorrelated with) the true latent dimensions. This approach naturally adapts to problems where some dimensions are not statistically detectable. In scenarios where all $k$ dimensions can be estimated, we show that our procedure consistently estimates $k$. In simulations and data example, the proposed estimator compares favorably to alternative approaches in both computational and statistical performance.
\end{abstract}

\noindent%
{\it Keywords:} central limit theorem, cross-validation, graph dimension, random grap
\vfill

\newpage
\spacingset{1.15} % DON'T change the spacing!

\section{Introduction}

In social network analysis, a large and popular class of models supposes that each person has a set of $k$ latent characteristics and the probability that a pair of people are friends depends only on that pair's $k$ characteristics.  Typically, for example, if two people have  similar characteristics, then they are more likely to become friends. We aim to estimate the number of characteristics $k$ using a class of models where every edge is statistically independent, conditionally on the characteristics.  This includes the Latent Space Model, the Aldous-Hoover representation, and graphons \citep{hoff2002latent, aldous1985exchangeability, hoover1989tail, lovasz2012large, jacobs2014unified}.

We are particularly interested in the class of models called the random dot product model \citep{rdpg}, which includes the Stochastic Blockmodel, along with its degree-corrected and mixed membership variants \citep{karrer_stochastic_2011, airoldi2008mixed}.  
In the Stochastic Blockmodel, $k$ is the number of blocks.

\subsection{Motivating examples} \label{sec:motivation_figure}

Denote the adjacency matrix $A \in \N^{n \times n}$ as recording the number of edges between $i$ and $j$ in element $A_{ij}$.  
In the models that we consider $A$ is a random matrix and $\E(A)$ 
% and the normalized form of this matrix defined in Equation \eqref{eq:L} below 
has $k$ non-zero eigenvalues. We use this fact to estimate $k$.

The eigenvalues of $A$ (i.e., the ``sample eigenvalues'') in the scree plot unfortunately often fail to provide a clear estimate of $k$. 
As an illustration, Figure \ref{fig:motivation_scree} shows such cases in simulated and real data. 
In the left panel, the simulated graph comes from a Degree-Corrected Stochastic Blockmodel, with $k=128$ hierarchically arranged blocks. 
%If you look closely, you can see little jumps in the eigenvalues at $k=16$ and $k=32$; these correspond to the eigenvectors estimating more refined portions of the hierarchy. 
%In this simulation model, there are $k=128$ dimensions.  
Many of the 128 dimensions cannot be estimated from the data. As such, it is not surprising that no artifacts arise in the scree plot around 128.  
%The total number of statistically useful dimensions estimated by the sample eigenvectors is somewhere between 32 and 128.
The right panel gives the scree plot for a citation graph of 22,688 academic journals. This graph was constructed from the Semantic Scholar database \citep{ammar2018construction}, where each journal is a node, and citations between two papers represent edges. In this graph, the average node degree is 35.  
Supplementary Section \ref{sec:motivation_details} provides more details on these two graphs. In Figure \ref{fig:motivation_scree}, neither panel provides clear guidance for how to choose $k$. This is often the fact with scree plots.  Figures \ref{fig:motivation_simulation} and \ref{fig:motivation_data} provide an alternative plot that gives clearer guidance.

\begin{figure}[] %  figure placement: here, top, bottom, or page
	\centering
	\textbf{% The scree plot gives the sample eigenvalues. 
    We often look for an elbow or a gap in scree plots.}
	\vspace{.06in}
	\includegraphics[width=5.5in]{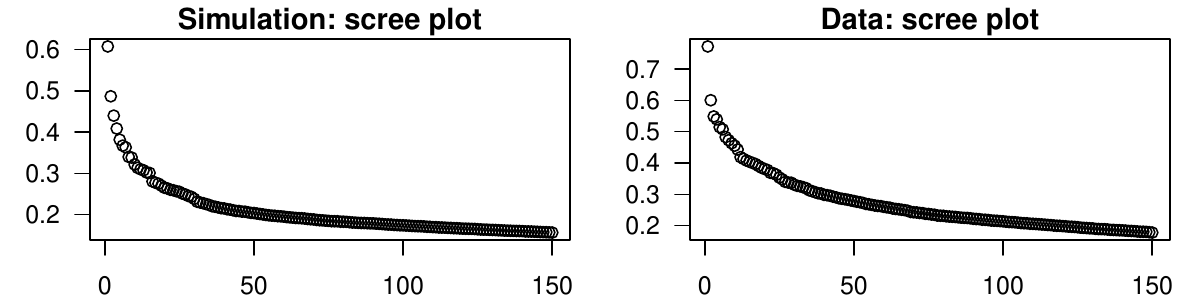} 
	\caption{In these examples, it is difficult to detect a gap or elbow. In the left panel, the graph is simulated from a Degree-Corrected Stochastic Blockmodel with $n=2560$.  In the right panel, the graph is a citation graph among $n = 22,688$ academic journals.  Displayed are the largest 150 eigenvalues of the normalized and regularized adjacency matrix. 
    }
	\label{fig:motivation_scree}
\end{figure}

In the left panel of Figure \ref{fig:motivation_simulation}, the black line is the scree plot from the left panel of Figure \ref{fig:motivation_scree}. In that same panel, the orange line gives the eigenvalues of $\E(A)$, i.e. the ``population eigenvalues''; notice how it goes to zero at $128$. In this panel, the black line is significantly greater than the orange line.  This is why the standard scree plot is so difficult to use. That gap comes from \textit{the bias of the sample eigenvalues}. This paper explains how this bias arises from ``overfitting'' to the noise in the random graph.  
%Unfortunately, it is hard to tell when this overfitting happens. 
%This is the same overfitting that makes the sample eigenvalues in the scree plot differentially biased; in the left panel of Figure \ref{fig:motivation_simulation}, the gap between the black line and the orange line grows wider and wider.  
%Random matrix theory has not yet provided an adequate solution to this problem. 

\begin{figure}[] %  figure placement: here, top, bottom, or page
	\centering
	\includegraphics[width=5.5in]{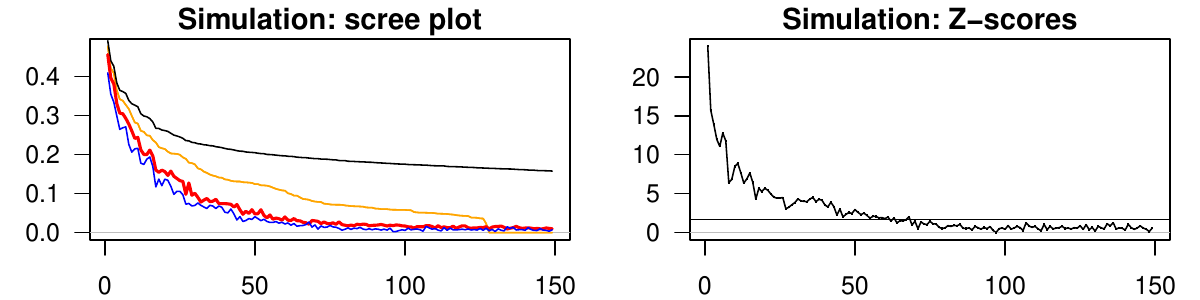} 
	\caption{In the left panel, the black line gives the sample eigenvalues (repeated from the left panel of Figure \ref{fig:motivation_scree}) and the orange line gives the $k=128$ non-zero population eigenvalues. The first two eigenvalues have been removed to improve the display. The blue line gives cross-validated eigenvalues, and the red line gives their population version, cross-population eigenvalue. In the right panel, the black line depicts the Z-scores for cross-validated eigenvalues.  The horizontal black line corresponds to the cutoff for 0.05 significance level (two-side). In this example, a good choice for $\hat k$ would be around 60.}
	\label{fig:motivation_simulation}
\end{figure}

The blue line in the left panel of Figure \ref{fig:motivation_simulation} gives our proposed \textit{cross-validated eigenvalues}.
%, for measuring the signal strength of the leading 150 eigenvectors. %$\tilde x\T \tilde L\test \tilde x$
The red line gives the population version of such quantities, named \textit{cross-population eigenvalues}, where the $\E(A)$ is assumed known in place of $A$. 
The red and blue lines reveal that the signal strength of eigenvectors computed from the data diminishes after around $\hat k \approx 60$. Interestingly, for each cross-validated eigenvalue, one can test the null hypothesis its population counter part (the cross-population eigenvalue) is equal to zero.  So, the right panel of Figure \ref{fig:motivation_simulation} gives the Z-score for each of the cross-validated eigenvalues.  
In that panel, the Z-scores start to cluster around the cutoff at $\hat k \approx 60$. 
The standard scree plot (black line in left panel) does not reveal anything around this value. 
% This illustration splits the edges ten separate times, each with probability $\varepsilon = .1$, and averages the results over those ten folds.

\begin{figure}[] %  figure placement: here, top, bottom, or page
	\centering
	\includegraphics[width=5.5in]{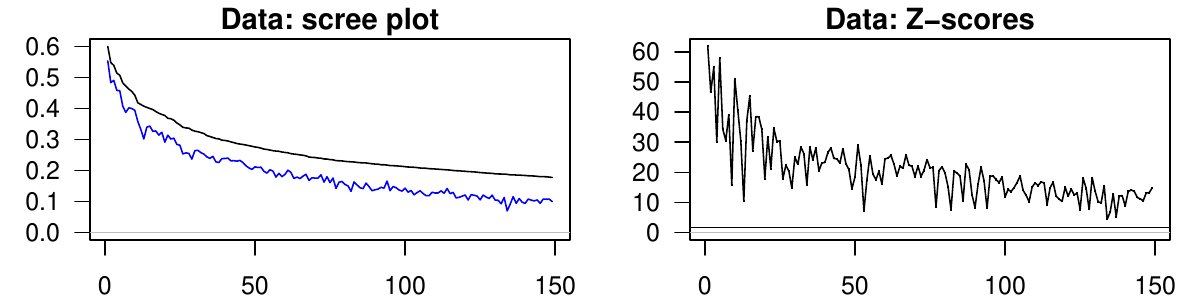} 
	\caption{In the left panel, the black line gives the empirical eigenvalues (repeated from the right panel of Figure \ref{fig:motivation_scree}). The blue line gives cross-validated eigenvalues. In the right panel, the Z-scores are calculated under the null hypothesis that each cross-population eigenvalue is zero.  The horizontal black line corresponds to the cutoff at 0.05 significance level.}
	\label{fig:motivation_data}
\end{figure}

% In the simulation in Figure \ref{fig:motivation_simulation} above, the red and orange lines give ``population quantities,'' constructed with $\E(A)$. 
In Figure \ref{fig:motivation_data}, we return to the citation graph of 22,688 academic journals displayed in the right panel of Figure \ref{fig:motivation_scree}.  
%We use the blue line (cross-validated eigenvalues) as an estimate of the red line (cross-population eigenvalues), since population quantities are unavailable. 
In this example, the cross-validated eigenvalues suggest that all of the leading 150 dimensions are highly statistically significant. This is consistent with the results in \cite{rohe2020vintage} that showed the leading 100 dimensions reveal groups of journals that form coherent academic areas.  
It's worth pointing out that $A$ has 789,980 non-zero elements, spread across 22,688 rows and columns. Despite the relatively large size of this graph, computing all 150 cross-validated eigenvalues and their Z-scores requires less than 10 seconds in \texttt{R} on a 2.0GHz quad‑core processor (Intel Core i5, MacBook Pro 2020).  This performance is enabled by the sparse matrix packages \texttt{Matrix} and \texttt{RSpectra} \citep{rMatrix, RSpectra}.

\subsection{Our contributions}

In this paper, we exploit a notion of cross-validated eigenvalues as a new approach to estimating $k$. 
A key piece for this technique is to split the edges in the graph into two independent sets \citep{abbe_community_2015, abbe_exact_2016}. 
Then, the eigenvectors and cross-validated eigenvalues are defined using the two graphs.  This removes the bias from overfitting. 
Under a large class of random graph models, we provide a simple procedure to compute cross-validated eigenvalues.

A related holdout approach was previously explored in the econometrics literature \citep{abadir_design_2014,lam_nonparametric_2016} for covariance estimation. 
In this paper and in those prior papers, the eigenvectors are estimated with a portion of the data and the ``signal strength'' of those vectors is estimated with the remaining held-out data. There are three key differences with this previous work.  First, the observed data is of a different nature; for a covariance matrix $\Sigma \in \R^{ p \times p}$, it is assumed in \cite{lam_nonparametric_2016} that we observe $y_i = \Sigma^{1/2} x_i$, where $x_i$ are unobserved and contain independent, identically distributed (i.i.d.)~random variables. Second, the notion of sample splitting is different; the approach in \cite{abadir_design_2014} constructs a sample covariance matrix with a subsample of the observed vectors $y_1, \dots, y_n$.  Third, the target of estimation is different; we provide $p$-values to estimate $k$, while the prior work aims to estimate the eigenvalues of $\Sigma$, which is presumed to be full rank.    

Since the preprint version of this paper \citep{chen2021estimating}, related sample-splitting approaches have been developed for settings where traditional sample splitting is infeasible. \cite{neufeld2024data,neufeld2024inference} propose ``data thinning'' and ``count splitting,'' methods that decompose individual observations into independent parts for convolution-closed distributions, enabling model validation tasks including dimension selection for low-rank approximations. \cite{baharav2024oasis} develop finite-sample valid tests for contingency tables using linear test statistics. Most recently, \cite{ancell2025post} apply data thinning to network data, using one network realization to estimate community structure and another to conduct inference---a setting that parallels our use of sample splitting for estimating graph dimension.

For cross-validated eigenvalues, we provide a central limit theorem, which leads to a $p$-value for the statistical significance of a sample eigenvector. This can be used to estimate the number of statistically useful sample eigenvectors, and thus $k$. 
We provide consistency results for the proposed estimator of $k$, allowing for weighted and sparse graphs.
Finally, through simulations and real data applications, we show that this estimator compares favorably to alternative approaches in both computational and statistical performance.

\subsection{Prior literature}

Numerous methods have been proposed to estimate $k$ under the Stochastic Blockmodel and its degree-corrected version \citep{bordenave_non-backtracking_2015,bickel_hypothesis_2016,lei_goodness--fit_2016,wang_likelihood-based_2017,chen_network_2018,ma_determining_2019,le_estimating_2019,liu_community_2019,jin_estimating_2020}. One previous technique has been proposed to estimate the dimension of the more general random dot product graph \citep{li_network_2020}.
These methods roughly fall into one of three categories: spectral, cross-validation, and (penalized) likelihood-based approaches.
Methods based on likelihood or cross-validation are actively researched, yet the majority of them are commonly restrained by the scale of networks. 
Spectral methods are highly scalable for estimating $k$ in large networks, although their rigorous analyses require delicate, highly technical random matrix arguments \citep{ajanki_universality_2017,benaych-georges_largest_2019,chakrabarty_eigenvalues_2020,dumitriu_sparse_2019,benaych-georges_spectral_2020,hwang_local_2020}.

Among the likelihood-based approaches, the authors of \cite{wang_likelihood-based_2017} proposed to estimate $k$ by solving a Bayesian information criterion (BIC) type optimization problem, where the objective function is a sum of the log-likelihood and of the model complexity. The computation is often not feasible because the likelihood contains exponentially many terms. In \cite{ma_determining_2019}, a pseudo-likelihood ratio is used to compare the goodness-of-fit of models with differing $k$s that have been estimated using spectral clustering with regularization \citep{rohe_spectral_2011,qin2013regularized,joseph_impact_2016,su_strong_2019}, speeding up the computation. However, the two methods allow little node degree heterogeneity. Related to the goodness-of-fit technique, the authors of \cite{jin_estimating_2020} present a stepwise testing approach based on the number of quadrilaterals in the networks. Computing the statistic requires at least $n^2$ multiplication operations, regardless of the sparsity of the graph, thus is infeasible for large $n$.  
More recently, cross-validation \citep{picard_cross-validation_1984,arlot_survey_2010} has also been adapted to the context of choosing $k$. 
For example, in \cite{chen_network_2018}, a block-wise node-pair splitting technique is introduced. In each fold, a block of rows of the adjacency matrix are held out from the Stochastic Blockmodel fitting (including the community memberships), then the left-out rows are used to calculate a predictive loss. In \cite{li_network_2020}, the authors propose to hold out a random fraction of node-pairs, instead of nodes (thus all the incidental node-pairs). 
% an edge splitting approach to model selection in a class of low-rank inhomogeneous Erdo\"os–R\'enyi model is presented. 
In addition, they suggest using a general low-rank matrix completion (e.g., a singular value thresholding approach \citep{chatterjee_matrix_2015}) to calculate the loss on the left-out node-pairs. Theoretical conditions for not under-estimating $k$ were established in both cross-validation based methods \citep{chen_network_2018,li_network_2020}. 
Calculating the loss on either held-out rows or on scattered values in the adjacency matrix requires $O(n^2)$ computations, regardless of sparsity. This limits the ability of these techniques to scale to large graphs. 
%Due to the need of calculating loss on either held-out rows or scatters in the adjacency matrix regardless of sparsity, each fold requires about $O(n^2)$ computations thus is also intractable for large networks. 

In \cite{bickel_hypothesis_2016,lei_goodness--fit_2016}, hypothesis tests using the top eigenvalue or singular value of a properly normalized adjacency matrix are proposed, based on edge universality and other related results for general Wigner ensembles \citep{tracy_level-spacing_1994,soshnikov_universality_1999,erdos_spectral_2012,erdos_local_2013,alex_isotropic_2014}. The analyses of these hypothesis tests assume dense graphs. In \cite{liu_community_2019}, a version of the ``elbow in the scree plot'' approach (see, e.g., \cite{zhu_automatic_2006} for a discussion of this approach) is analyzed rigorously under the Degree-Corrected Stochastic Blockmodel, also in the dense case. For sparser graphs, the spectral properties of other matrices associated to graphs have been used to estimate $k$, including the non-backtracking matrix \citep{krzakala_spectral_2013,bordenave_non-backtracking_2015,le_estimating_2019} and the Bethe-Hessian matrix \citep{le_estimating_2019}. However, their theoretical analysis currently allow little node degree heterogeneity in the sparse case.

There is also related work on bootstrapping \citep{snijders1999non,thompson_using_2016,green_bootstrapping_2017,levin_bootstrapping_2019,lin_higher-order_2020}, jackknife resampling \citep{lin_theoretical_2020} and subsampling \citep{bhattacharyya_subsampling_2015,lunde_subsampling_2019,naulet_bootstrap_2021} in network analysis. In particular, in \cite{lunde_subsampling_2019}, subsampling schemes are applied to the nonzero eigenvalues of the adjacency matrix under low-rank graphon models. Weak convergence results are established under some technical conditions, including sufficient edge density (i.e., average degree growing asymptotically faster than $\sqrt{n}$); simulation results also indicate that sparsity leads to poor performance for the estimators considered, especially in the case of the eigenvalues closer to the bulk. 
% Also related to cross-validation, \citet{neufeld2024data} explored the generalization when the elements of the data matrix follow a convolution-closed distribution, a class that includes the Gaussian, Poisson, and negative binomial distributions, but excludes the Bernoulli distribution, with an application to single-cell RNA-sequencing data. More recently, \cite{ancell2025post} also studied the splitting of a single realization of a network and provided the theoretical properties of their estimator in the case of Bernoulli edges, which is also ``conditional'' (see Section \ref{sec:clt-bernoulli}). 

\section{Measuring the signal strength of sample eigenvectors}

% \subsection{Statistical model}
We consider a connected multigraph $G = (V,E)$ consisting of the set of nodes $V = \{1, \dots, n\}$ and undirected edges 
$E$, where we allow multiple edges and self-loops.
%\footnote{For simplicity, this model allows for self-loops.} 
The \textit{adjacency matrix} $A \in \N^{n \times n}$ records the number of edges between $i$ and $j$ in element $A_{ij}$.   
Define the \textit{population (or expectation)} matrix of $A$ as $P = \E(A)$. 
In this paper, we focus on a simple class of random dot product graphs (see remarks after the definitions).

\begin{definition}[Poisson random graph] \label{def:poisson} 
	Let $A$ be the adjacency matrix of a Poisson random graph. Then, $A$ is symmetric, and the upper-triangular elements of $A$ are independent Poisson random variables, and its population matrix $P$ has the eigendecomposition
	\begin{equation} \label{eq:Amodel} %  \eqref{eq:Mmodel} to \eqref{eq:Amodel}
		P = U \Lambda U\T
	\end{equation}
	for $U \in \R^{n \times k}$ with orthonormal columns and diagonal matrix $\Lambda \in \R^{k \times k}$ with positive elements $\lambda_1, \dots, \lambda_k \in \R$ down the diagonal in non-increasing order.
\end{definition}

While Definition \ref{def:poisson} makes two simplifying assumptions, the method generalizes to graph with various edges, which we remark on in order.

\begin{remark}[Directed edges]
The first simplifying assumption is that $A$ is symmetric (i.e., edges are undirected).  This assumption can be relaxed; Remark \ref{remark:rectangle} discusses directed graphs, contingency tables, and rectangular incidence matrices. 
\end{remark}

\begin{remark}[Bernoulli edges]
The second simplifying assumption is that the elements of $A$ are Poisson. 
It is common to model $A_{ij}$ as Poisson because it is convenient and, in sparse graphs specifically, the difference between the Poisson and Bernoulli models becomes negligible \citep{karrer_stochastic_2011, flynn2020profile,crane2018edge, cai2016edge, amini_biclustering}.  
Section \ref{sec:clt-bernoulli} provides the formal extension to Bernoulli graphs via a conditional CLT, with simulations in Supplementary Section \ref{sec:poisson_vs_bernoulli} demonstrating that the proposed technique provides reliable $p$-values.
\end{remark}

\begin{remark}[Random dot product model]
The introduction motivated the paper by expressing  $\E(A_{ij})= \langle Z_i, Z_j\rangle = \sum_{\ell=1}^k Z_{i\ell} Z_{j \ell}$, where $Z_i \in \R^k$ is a vector of characteristics for node $i$. 
If the $Z_1, \dots, Z_n \in \R^k$ span $\R^k$ and the elements $A_{ij}$ are independent Poisson variables, then this model is included in Definition \ref{def:poisson}.
%Many models have this formulation of independent edges and low-rank expectation. For example, this model class has been referred to as the eigenmodel and the random dot product graph model \cite{NIPS2007_766ebcd5, cai2016edge,athreya2017statistical}. 
\end{remark}

The diagonal of $\Lambda$ contains the leading $k$ \textit{population eigenvalues} of $P$ and their corresponding eigenvectors are in the columns of $U$.  
For $j>k$, the population eigenvalues of $P$ are $\lambda_j = 0$. 
In this paper, we aim to estimate the number of nonzero eigenvalues.
%Because $P$ has exactly $k$ non-zero eigenvalues.  Thus, we can estimate $k$ by estimating the number of non-zero eigenvalues. 
%Define $Z \in \R^{n \times k}$ such that the $i$th row of $Z$ is $Z_i$, then  $P=ZZ\T$. 
%For the rest of the paper, we will instead express $P$ with its eigenvectors and eigenvalues in Equation \eqref{eq:Amodel}. 

\subsection{Sample eigenvalues: a poor diagnostic}

A common approach to estimating the eigenvalues of $P$ is to use a plug-in estimator, i.e., estimating the  eigenvalues of $P$ with the eigenvalues of $A$.
Specifically, for $j= 1, \dots, n$, the $j$th \textit{eigenvectors} 
$\hat x_j \in \R^n$ of a symmetric matrix $A \in \N^{n \times n}$ is defined as  
\begin{eqnarray} \label{eq:eig}
	\hat x_j &=& \argmax_{x\in \hat S_j} \ \ x\T A x, 
\end{eqnarray}
where $\hat S_j = \{x \in R^n: \|x\|_2 = 1 \mbox{ and } x\T\hat x_\ell = 0 \mbox{ for } \ell = 1, \dots, j-1\}$.
Correspondingly, the objective values for $j= 1, \dots, n$ are called the \textit{eigenvalues} $\hat{\lambda}_j$,
\begin{equation}\label{eq:sample_eigenvalue}
	\hat{\lambda}_j =  \hat x_j\T A \hat x_j.
\end{equation}
The  eigenvalues of $A$, i.e., $\hat{\lambda}_1, \hat{\lambda}_2, \hat{\lambda}_3, \dots$, are often plotted against their index $1, 2, 3, \dots$.  This is called a scree plot and it is used as a diagnostic to estimate $k$ (Figure \ref{fig:motivation_scree}).  In this scree plot, there might be a ``gap" or an ``elbow" at the $k$th eigenvalue, which reveals $k$.

However, there is an overfitting and bias problem with the plug-in estimator for the population eigenvalues which can make the ``gap'' or ``elbow'' in the scree plot more difficult to observe. 
The leading eigenvalue estimates $\hat{\lambda}_1, \dots, \hat{\lambda}_k$ are asymptotically unbiased, so long as their corresponding population eigenvalues $\lambda_1, \dots, \lambda_k$ are large enough (see, e.g., \cite{chakrabarty_eigenvalues_2020} for related results).  
By contrast, $\lambda_{k+1}=0$ is not growing; $\hat{\lambda}_{k+1}$ is a biased estimate of $\lambda_{k+1}0$, with $\E(\hat{\lambda}_{k+1}) > \lambda_{k+1}=0$ (see, e.g., \cite{benaych-georges_spectral_2020} for related results).  
So, if $\lambda_k$ is not large enough, then this bias diminishes the appearance of a ``gap'' or ``elbow'' between $\hat \lambda_k$ and $\hat \lambda_{k+1}$ in the scree plot.

This can be seen in the left panel of Figure \ref{fig:motivation_simulation}.  In that figure, the gap between the sample eigenvalues (black line) and the population eigenvalues (orange line) grows wider as $k$ increases.\footnote{In this figure, it is not the eigenvalues of $A$ and $\E(A)$ that are shown, but rather the eigenvalues of the normalized and regularized forms of $A$ and $\E(A)$.}

\subsection{Population-validated eigenvalues}

Even if an oracle were to tell us that the population eigenvector $x_j$ has population eigenvalue $\lambda_j\ne 0,$ we should only use the sample eigenvector $\hat x_j$ for statistical inference if $\hat x_j$ is close to $x_j$ and other leading population eigenvectors.  Because of this requirement, the population eigenvalues $\lambda_j$ do not measure the signal strength of $\hat x_j$.

In the realistic setting where the signal is not overwhelming for every sample eigenvector $\hat x_j$ for $j<k$, we propose to measure the signal strength of this vector in the following way.
We define the \textit{population-validated eigenvalue}\footnote{The notion of $\lambda_P(\hat x_j)$ was previously proposed and studied in \cite{abadir_design_2014} and \cite{lam_nonparametric_2016} for optimal estimation of eigenvalue shrinkage under a different statistical model.}
of a sample eigenvector $\hat x_j$ as
\begin{equation}\label{eq:lambdaP}
	\lambda_{P}(\hat x_j)= \hat x_j\T P \hat x_j.
\end{equation}
Although this is not an eigenvalue in the traditional sense, there are three reasons to think of this quantity as an analogue.

First, the quadratic form in Equation \eqref{eq:lambdaP} mimics the form for the sample eigenvalue in Equation \eqref{eq:sample_eigenvalue}, which is also the objective function that the sample eigenvectors optimize in Equation \eqref{eq:eig}.  As such, for a population eigenvector $x_j$, $\lambda_{P}(x_j)$ is the corresponding population eigenvalue $\lambda_j$.

The second reason to think of $\lambda_P$ as an analogue of an eigenvalue is that for a sample eigenvector $\hat x$, 
there always exists a vector $\hat x^\perp \in \R^n$ that is orthogonal to $\hat x$ and
$P \hat x = \hat \lambda \hat x +  \hat x^\perp$
for some value $\hat \lambda \in \R.$ 
If $\hat x^\perp =0$, then $\hat x$ is an eigenvector of $P$ with eigenvalue $\hat \lambda$.  Even when $\hat x^\perp \ne 0$,
if $\|\hat x\|_2 = 1$, then $\hat \lambda = \lambda_{P}(\hat x )$. 
This is because
\[\lambda_{P}(\hat x )= \hat x \T P \hat x  =  \hat x \T (\hat \lambda \hat x +  \hat x^\perp) = \hat \lambda\]

The third reason also provides intuition for why $\lambda_P$ is a measure of signal strength; indeed 
$\lambda_P$ provides for the optimal reconstruction of $P$ as conceptualized in the following proposition (a proof is included in Supplementary Section \ref{sec:proof} for completeness). 
\begin{prop}[\cite{lam_nonparametric_2016}] \label{prop:signal_strength}
For $j = 1, \dots, q$, $\hat \lambda_j = \lambda_{P}(\hat x_j)$ is the solution to
$$\min_{\hat \lambda_1, \dots, \hat \lambda_q} \left\| P - \sum_{j = 1}^q \hat \lambda_j \hat x_j \hat x_j\T\right\|_F.$$
\end{prop}

A key advantage of the population quantity $\lambda_P(\hat x_j)$ over $\lambda_j$ is that it reveals information about the vector that we can compute, i.e., $\hat x_j$, not the eigenvector that we wish we had, i.e., $x_j$.  If a sample eigenvector $\hat x_j$ is close to its population counterpart $x_j$, then for the reasons above, it is reasonable to presume that $\lambda_{P}(\hat x_j)$ is close to the population eigenvalue $\lambda_j$. However, if the estimation problem is too difficult and $\hat x_j$ is nearly orthogonal to the eigenvectors of $P$ that have non-zero eigenvalues, then $\lambda_{P}(\hat x_j) \approx 0$. 
Notably, and most importantly, this can happen even if $\hat x_j$'s corresponding population eigenvector $x_j$ has a non-zero eigenvalue.  For example, this happens when the estimation problem is too difficult as happens for the 60th-128th sample eigenvectors in Figure \ref{fig:motivation_simulation}.

% The next three sections develop and study a statistical testing procedure for the null hypothesis
% \begin{equation} 
% 	H_0: \lambda_P(\tilde x_j) = 0  
% \end{equation}
% conditionally on $\tilde x_j$, where $\tilde x_j$ are estimates of the eigenvectors of $P$ (or its normalized form) constructed from a large subsample of the edges in the graph. 

\section{The three key pieces for cross-validated eigenvalues}
\label{sec:splitting}

It is tempting to compute the eigenvectors $\hat x_j$ of $A$ with all of the edges and then choose $k$ based on the population-validated eigenvalues $\lambda_P(\hat x_j)$. 
However, $\lambda_P(\hat x_j)$ is not observed (as $P$ is unobserved). 
In this section, we detail the three key pieces that enable an unbiased estimator of population-validated eigenvalues, which we called \textit{cross-validation eigenvalues}. 
For cross-validation, we describe an edge splitting procedure to create two Poisson random graphs that are independent and have identical population eigenvectors. The cross-validation eigenvalue is an intuitive plug-in estimator using information from both split graphs, and mostly importantly, it is asymptotically normal.

\subsection{The first piece: edge splitting creates independent Poisson graphs}

The edge splitting (\texttt{ES}) procedure is described in Algorithm \ref{alg:edge_splitting}. \texttt{ES} splits the edges of a graph into two graphs and outputs the two adjacency matrices $\tilde A$ and $\tilde A\test$ (we omit ``train'' in $\tilde A_\text{train}$ for simplicity). 
Notice that under \texttt{ES} and conditionally on $A_{ij}$, $\tilde A_{ij} \sim \text{Binomial}(A_{ij}, 1-\varepsilon)$ and $[\tilde A\test]_{ij} = A_{ij} - \tilde A_{ij}.$ 
% If $\varepsilon = 1/2$, then $\tilde A$ and $\tilde A\test$ are identically distributed.  

\begin{algorithm}
    \setstretch{1.35}
	\DontPrintSemicolon
	\caption{Edge splitting}
	\label{alg:edge_splitting}
	\KwData{adjacency matrix $A \in \mathbb{N}^{n \times n}$ and edge splitting probability $\varepsilon \in (0,1)$.\;}
	\textbf{Procedure} \texttt{ES}$(A,\varepsilon)$\textbf{:}\;
	\Indp
	1. Convert $A$ into $G = (V,E)$, where $\{i,j\}$ is repeated in the edge set $E$ potentially more than once if $A_{ij}>1$.\;
	2. Initiate $\tilde E\test$ and $\tilde E$, two empty edge sets on $V$.\;
	3. \For{each copy of edge $\{i,j\} \in E$}{assign it to $\tilde E\test$ with probability $\varepsilon$. Otherwise, assign it to $\tilde E$.}
	4. Convert $(V,\tilde E\test)$ into an adjacency matrix $\tilde A\test \in \mathbb{N}^{n \times n}$ and $(V,\tilde E)$ into an adjacency matrix $\tilde A \in \mathbb{N}^{n \times n}$\;
	\Indm
	\KwResult{$\tilde A$ and $\tilde A\test$.}
\end{algorithm}

For the first piece, the independence of $\tilde A$ and $\tilde A\test$ follows from the next lemma, often referred to as thinning (see, e.g., \citet[Section 3.7.2]{durrett_2019}).

\begin{lemma}\label{lemma:poisson_binomial}
	Define $X \sim \text{Poisson}(\lambda)$ and conditionally on $X$, define $Y \sim \text{Binomial}(X, p)$ and $Z = X-Y$.  Unconditionally on $X$, the random variables $Y$ and $Z$ are independent Poisson random variables and, further, $Y \sim \text{Poisson}(p \lambda)$ and $Z \sim \text{Poisson}((1-p) \lambda)$. 
\end{lemma}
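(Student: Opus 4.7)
My plan is to compute the joint probability mass function of $(Y, Z)$ directly and show that it factors as a product of two Poisson pmfs; this simultaneously establishes independence and identifies the marginal distributions.

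The first step is to observe that the event $\{Y = y, Z = z\}$ is the same as $\{Y = y, X = y + z\}$, since $Z = X - Y$. Conditioning on $X$ then gives
\[
\Prob(Y = y, Z = z) = \Prob(Y = y \mid X = y + z)\,\Prob(X = y + z) = \binom{y+z}{y} p^y (1-p)^z \cdot e^{-\lambda}\frac{\lambda^{y+z}}{(y+z)!}.
\]
The key simplification, and really the only calculation in the proof, is to cancel $(y+z)!$ against the binomial coefficient, so that $\binom{y+z}{y}/(y+z)! = 1/(y!\,z!)$, and to split $\lambda^{y+z} = \lambda^y \lambda^z$ together with $e^{-\lambda} = e^{-p\lambda} e^{-(1-p)\lambda}$. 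After regrouping, the expression becomes
\[
\left(e^{-p\lambda}\frac{(p\lambda)^y}{y!}\right)\left(e^{-(1-p)\lambda}\frac{((1-p)\lambda)^z}{z!}\right),
\]
which is exactly the product of the pmfs of a $\text{Poisson}(p\lambda)$ and a $\text{Poisson}((1-p)\lambda)$ random variable evaluated at $y$ and $z$ respectively.

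The final step is to invoke the standard fact that if a joint pmf factors as $g(y)\,h(z)$ with $g, h$ equal to valid marginal pmfs, then the two random variables are independent with marginals $g$ and $h$. This immediately yields both conclusions of the lemma. There is no real obstacle here: the entire argument is a one-line manipulation of factorials and exponentials, which is why the result is usually cited as a textbook fact (as the paper does by referencing Durrett). If one wanted an even shorter route, the probability generating function calculation $\E[s^Y t^Z] = \E[(ps + (1-p)t)^X] = e^{p\lambda(s-1)}\,e^{(1-p)\lambda(t-1)}$ gives the same conclusion instantly, but the direct pmf computation is the most transparent.
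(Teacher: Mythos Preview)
Your proof is correct and is the standard direct computation of the joint pmf for Poisson thinning. The paper itself does not give a proof of this lemma at all; it simply cites it as a well-known fact (referring to \citet[Section 3.7.2]{durrett_2019}), so there is nothing to compare against beyond noting that your argument is exactly the textbook derivation the paper is implicitly invoking.
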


To apply the lemma, let $X$ be $A_{ij}$, let $\lambda$ be $P_{ij}$, let $Y$ and $Z$ be the $(i,j)$-th elements of $\tilde A$ and $\tilde A\test$ respectively, and let $p = \varepsilon$. This results in the independence between $\tilde A$ and $\tilde A\test$.

\begin{corollary}
    $\tilde A$ and $\tilde A\test$ are the adjacency matrices of two independent Poisson random graphs, and $\E(\tilde A)=(1-\varepsilon) A$, and $\E(\tilde A\test)=\varepsilon A$.
\end{corollary}

\subsection{The second piece: the splitting population graphs have identical eigenvectors}

The next proposition shows that \texttt{ES} preserves the spectral properties of the population adjacency matrices $\E(\tilde A)$ and $\E(\tilde A\test)$ from $\E(A)$. This result does not require any distributional assumptions on $A$, only that its elements are integers (so that $\tilde A$ and $\tilde A\test$ are defined).

\begin{prop} \label{prop:split}   
	If $\tilde A$ and $\tilde A\test \in \mathbb{N}^{n \times n}$ are generated by applying \texttt{ES} to $A \in \mathbb{N}^{n \times n}$ with splitting probability $\varepsilon$, then
	\begin{enumerate}
		\item The eigenvectors of $\E(A), \E(\tilde A),$ and $\E(\tilde A\test)$ are  identical.
		\item If $\lambda_j$ is an eigenvalue of $\E(A)$, then $(1-\varepsilon) \lambda_j$ is an eigenvalue of $\E(\tilde A)$ and $\varepsilon \lambda_j$ is an eigenvalue of $\E(\tilde A\test)$.
	\end{enumerate}
	Here, all expectations are unconditional on $A$.
\end{prop}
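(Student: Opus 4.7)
The plan is to reduce both claims to a single elementary observation: each of $\E(\tilde A)$ and $\E(\tilde A\test)$ is a positive scalar multiple of $\E(A)$. Once this is in hand, both conclusions follow immediately from basic linear algebra, without any distributional assumption on $A$ beyond integrality (which is needed only to define the thinning).

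First, I would compute the entrywise expectations of $\tilde A$ and $\tilde A\test$ via the tower property. The procedure \texttt{ES} assigns each of the $A_{ij}$ parallel copies of edge $\{i,j\}$ independently to $\tilde E\test$ with probability $\varepsilon$, so conditionally on $A_{ij}$ we have $\tilde A_{ij}\sim \text{Binomial}(A_{ij},1-\varepsilon)$ and $[\tilde A\test]_{ij} = A_{ij} - \tilde A_{ij}$. Therefore $\E(\tilde A_{ij}\mid A_{ij}) = (1-\varepsilon)A_{ij}$, and taking expectations on both sides yields $\E(\tilde A_{ij}) = (1-\varepsilon)\E(A_{ij})$. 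Assembling these entrywise identities into a matrix identity gives
\[\E(\tilde A) = (1-\varepsilon)\E(A), \qquad \E(\tilde A\test) = \varepsilon\,\E(A),\]
the second equality following from $\E([\tilde A\test]_{ij}) = \E(A_{ij}) - (1-\varepsilon)\E(A_{ij}) = \varepsilon\,\E(A_{ij})$.

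Second, I would invoke the trivial spectral fact that for a symmetric matrix $M$ with eigendecomposition $M = U\Lambda U\T$ and any positive scalar $c$, the matrix $cM$ has the same eigenvectors $U$ and eigenvalues $c\lambda_j$. Applying this with $M = \E(A)$ and $c \in \{1-\varepsilon,\varepsilon\}$ immediately proves both parts: the eigenvectors of $\E(A)$, $\E(\tilde A)$, and $\E(\tilde A\test)$ coincide, and the eigenvalues scale by the respective factors.

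There is essentially no technical obstacle here. The only substantive input is the tower property applied to a binomial with a random parameter, and the rest is the observation that scalar multiplication of a symmetric matrix leaves its eigenvectors unchanged. It is worth noting explicitly that one does not need $A$ to be Poisson for this proposition; independence of $\tilde A$ and $\tilde A\test$ (which does use the Poisson assumption via Lemma~\ref{lemma:poisson_binomial}) is not invoked, and the argument goes through as soon as $\tilde A, \tilde A\test$ are well-defined.
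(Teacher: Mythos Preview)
Your proposal is correct and follows essentially the same approach as the paper: establish $\E(\tilde A)=(1-\varepsilon)\E(A)$ and $\E(\tilde A\test)=\varepsilon\,\E(A)$ from the construction of \texttt{ES}, then read off the eigendecomposition of a scalar multiple. The only difference is cosmetic---you spell out the tower-property step, whereas the paper asserts the scalar-multiple identity directly.
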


\begin{proof}
	Define $P = \E(A)$ and let $P=U \Lambda U\T$ be its eigendecomposition; if $A$ is not random, then $P=A$ and $U, \Lambda$ potentially have $n$ columns. It follows directly from the construction in \texttt{ES} that $\E(\tilde A) = (1-\varepsilon) P$ and $\E(\tilde A\test) = \varepsilon P$. 
	Rearranging terms reveals the eigendecomposition of $\E(\tilde A\test)$,
	\[\E(\tilde A\test) = \varepsilon P =   U (\varepsilon \Lambda) U\T\]
	and similarly for $\E(\tilde A)$.
	This shows that they have the same eigenvectors and the simple relationship between their eigenvalues in the statement.  
\end{proof}

By making $\varepsilon$ small we can ensure that the subsampled eigenvectors $\tilde x_j$ are nearly as good as $\hat x_j$. 
Without loss of generality, let $\tilde x$ be any eigenvector of $\tilde{A}$. 
% $\tilde x$ is independent of $\tilde A\test$. 
We define the \textit{cross-validated eigenvalues} as 
$$\lambda\test(\tilde x)=\tilde x\T \tilde A\test \tilde x.$$
This estimator is a weighted sum of quantities from both split graphs that are independent and have identical population eigenvectors.

\subsection{The third piece: the cross-validated eigenvalues are asymptotically normal}

% We consider a sequence of random adjacency matrices with index $n$, $\tilde A\test^{(n)} \in \mathbb{N}^{n \times n}$, each sampled from a Poisson random graph with population $\E(\tilde A\test^{(n)}) = \varepsilon P^{(n)} \in \R^{n \times n}$.
% % satisfying $\max_{ij} Q^{(n)}_{ij} \leq 1$, 
% We also consider a sequence of unit vectors $\tilde x^{(n)} \in \R^{n}$. 
% To simplify the notation, we suppress the explicit superscript $(n)$. 

In this section, we provide a central limit theorem (CLT) for $\lambda\test(\tilde x)$.
We define the sample and population total variance as
$$\hat{\sigma}^2 = 2(\tilde x^2)\T \tilde A\test (\tilde x^2)-(\tilde x^2)\T \diag(\tilde A\test) (\tilde x^2)$$
and
$$\sigma^2 = 2(\tilde x^2)\T (\varepsilon P) (\tilde x^2) - (\tilde x^2)\T \diag(\varepsilon P) (\tilde x^2).$$
Next, we define the following delocalization condition on $\tilde x$:
\begin{equation}\label{eq:deloc-cond}
	\|\tilde x\|_\infty^2 = o(\sigma)
\end{equation}
where $\tilde{x}^2\in\R^n$ is the vector $x$ with all entries squared. 
% Since $\tilde A$ and $\tilde A\test$ are independent under the Poisson model, $\tilde x$ is independent of $\tilde A\test$ in .

\begin{theorem}[CLT for cross-validated eigenvalues] \label{thm:cltA}
	Assume that $\tilde x$ satisfies the delocalization condition \eqref{eq:deloc-cond}.
	Then,
	\begin{equation}\label{eq:clt}
		\frac{\lambda\test(\tilde x) - \lambda_{\varepsilon P}(\tilde x)}{\hat \sigma} \Rightarrow N(0,1).
	\end{equation}
\end{theorem}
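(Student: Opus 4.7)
The plan is to apply the Lyapunov central limit theorem to $\lambda_B(x) - \lambda_Q(x)$, which can be written as a weighted sum of independent centered Poisson variables, and then invoke Slutsky's theorem to replace $\sigma$ by $\hat\sigma$.

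First, since $B$ is symmetric with independent Poisson entries on and above the diagonal, I would write
\[
\lambda_B(x) - \lambda_Q(x) = \sum_{i \le j} c_{ij}\,(B_{ij} - Q_{ij}), \qquad c_{ii} := x_i^2,\ \ c_{ij} := 2 x_i x_j \text{ for } i<j.
\]
A direct variance computation using $\Var(B_{ij}) = Q_{ij}$ confirms that $\sum_{i \le j} c_{ij}^2\, Q_{ij} = \sigma^2$, matching the expression in the statement. This identity is what couples the sum-of-independent-variables structure to the stated normalization.

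Second, I would apply Lyapunov with $\delta = 1$. The required uniform Poisson moment bound, valid because $Q_{ij} \le 1$, is $\E|B_{ij}-Q_{ij}|^3 \le C\,Q_{ij}$, obtained from $\Var(B_{ij}) = Q_{ij}$ and the standard formula $\E(B_{ij}-Q_{ij})^4 = Q_{ij}(1+3Q_{ij})$ via Cauchy--Schwarz. Set $m := \|x\|_\infty^2$, so that $|c_{ij}| \le 2m$. Then
\[
\sum_{i \le j} \E\bigl|c_{ij}(B_{ij}-Q_{ij})\bigr|^3 \le 2m\,C \sum_{i \le j} c_{ij}^2\, Q_{ij} = 2Cm\,\sigma^2,
\]
so the Lyapunov ratio is $O(m/\sigma)$, which is $o(1)$ by the delocalization assumption~\eqref{eq:deloc-cond}. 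This gives $(\lambda_B(x)-\lambda_Q(x))/\sigma \Rightarrow N(0,1)$.

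Third, I would establish $\hat\sigma/\sigma \to 1$ in probability and finish via Slutsky. The difference
\[
\hat\sigma^2 - \sigma^2 = \sum_i x_i^4 (B_{ii}-Q_{ii}) + 4 \sum_{i<j} x_i^2 x_j^2 (B_{ij}-Q_{ij})
\]
is again a linear form in the independent variables $B_{ij}-Q_{ij}$. Its variance is bounded using $x_i^8 \le m^2 x_i^4$ and $x_i^4 x_j^4 \le m^2 x_i^2 x_j^2$, yielding $\Var(\hat\sigma^2 - \sigma^2) \le C\,m^2\sigma^2$. Chebyshev then gives $\hat\sigma^2/\sigma^2 = 1 + O_P(m/\sigma) = 1+o_P(1)$, hence $\hat\sigma/\sigma \to 1$ in probability, and Slutsky's theorem delivers~\eqref{eq:clt}.

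The main obstacle is the clean pairing of the identity $\sum_{i\le j} c_{ij}^2 Q_{ij} = \sigma^2$ with the crude bound $|c_{ij}|^3 \le 2m\, c_{ij}^2$, because it is this combination that converts the delocalization hypothesis directly into smallness of the Lyapunov ratio; without the sharp matching of denominators, one would pick up extraneous factors that do not obviously vanish in the sparse regime. The rest is routine: the Poisson third moment bound is uniform thanks to $Q_{ij} \le 1$, and the consistency of $\hat\sigma^2$ is comfortable since its fluctuation is $O_P(m\sigma)$, i.e., one power of $m$ smaller than $\sigma^2$, which is strictly more margin than the CLT itself requires.
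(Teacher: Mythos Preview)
Your proposal is correct and follows essentially the same route as the paper: write $\lambda_B(x)-\lambda_Q(x)$ as a weighted sum over $i\le j$ of independent centered Poissons, verify Lyapunov's condition using the delocalization assumption $\|x\|_\infty^2=o(\sigma)$, then upgrade $\sigma$ to $\hat\sigma$ via Chebyshev and Slutsky. The only cosmetic difference is that the paper checks the Lyapunov condition with the fourth moment directly (using $\E(B_{ij}-Q_{ij})^4\le 4Q_{ij}$), whereas you pass through the third moment via Cauchy--Schwarz; both yield a ratio of order $\|x\|_\infty^4/\sigma^2$ and the same $\hat\sigma^2$ variance bound, so nothing substantive changes.
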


The proof of Theorem \ref{thm:cltA} is in Supplementary Section \ref{sec:proof}. When $\tilde x$ is the eigenvector of $\tilde A$'s normalized form, a similar CLT result exists.

\begin{remark}[Sufficient condition]
Regarding the delocalization condition~\eqref{eq:deloc-cond}, when all entries of $\varepsilon P$ are of the same order $\rho = o(1)$, then  
$\sigma = \Theta(\rho^{1/2})$ and the condition boils down to $\|x\|_\infty = o(\rho^{1/4})$.
In Supplementary Section \ref{sec:inhomogeneous} (Corollary \ref{cor:hetero}), we discuss a sufficient condition for $\|x\|^2_\infty = o(\sigma)$ to hold in terms of the expected number of edges in $\tilde A\test$. 
\end{remark}

\begin{remark}[Bernoulli graphs]
Algorithm \ref{alg:eigcv} operates identically for Bernoulli graphs. While the theoretical justification requires conditioning on the edge-splitting pattern (Section \ref{sec:clt-bernoulli}), this conditioning is implicit in the randomization and does not affect the implementation.
\end{remark}
%\begin{remark}\label{remark:clt-bernoulli}
%If the elements of $A$ are Bernoulli instead of Poisson, then a similar conditional CLT holds. A couple of subtleties arise.
%First, the total variance takes a new form that contains $\sum_{ij}\tilde x_i^2 \tilde x_j^2 P_{ij}(1 -P_{ij})$. The additional $(1-P_{ij})$ is difficult to estimate, but because $(1-P_{ij})\le 1$, the Poisson model formula for $\sigma^2$ provides an upper bound for $\sigma^2$.  
%As such, $\sigma^2$ and $\hat \sigma^2$ can still be used to provide conservative inference. 
%% Moreover, when the graph is sparse, we have $(1-P_{ij}) \rightarrow 1$ and $\sigma^2$, and $\hat \sigma^2$ becomes a better approximation.
%Second, $\tilde A$ and $\tilde A\test$ are \emph{no longer independent}, as Lemma \ref{lemma:poisson_binomial} has no counterpart in this setting. So the CLT, as stated in Theorem \ref{thm:cltA}, cannot be applied to $\lambda\test(\tilde x)$ if $A$ contains Bernoulli elements. 
%Instead, one can condition on which pairs of vertices are set aside for the test graph---before generating $\tilde A$ and $\tilde A\test$---in order for a conditional CLT result. 
%In that case, the two graphs are conditionally independent and one can apply a similar CLT to $\lambda\test(\tilde x)$. 
%We provide this conditional CLT and its proof in Supplementary Section \ref{sec:proof}. We further compared the performance of cross-validated eigenvalues under Poisson and Bernoulli graphs from an empirical point of view in Supplementary Section \ref{sec:poisson_vs_bernoulli}.
%\end{remark}

Based on Theorem \ref{thm:cltA}, we test the null hypothesis that the expected value of $\lambda\test(\tilde x)$ is zero,
% \begin{equation}\label{eq:pop_lambda}
$$H_0: \lambda_{\varepsilon P}(\tilde x) = 0 $$
% \end{equation}
% When $\E(\lambda\test(\tilde x)|\tilde A)$ is zero, it means that 
Note that the null is equivalent to population-validate eigenvalue $\lambda_{P}(\tilde x)$ is zero, since $\varepsilon>0$ is a constant.
Under the null, $\tilde x$ is orthogonal to the latent space; in this case, we say that $\tilde x$ is not statistically useful (although, perhaps, it is still useful for tasks other than estimating graph dimension $k$).

\subsection{Extension to Bernoulli graphs} \label{sec:clt-bernoulli}

While the three key pieces presented above provide clean intuition and theory for Poisson graphs, extending to Bernoulli graphs requires a modified approach. The fundamental challenge is that Lemma 3.1, which guarantees the independence between $\tilde{A}$ and $\tilde{A}\test$ under edge splitting for Poisson graphs, has no counterpart for Bernoulli graphs; when $A_{ij} \in \{0,1\}$, an edge cannot simultaneously appear in both $\tilde{A}$ and $\tilde{A}\test$, creating a negative dependence.

To generate an ``independent copy'' in the Bernoulli case, we instead split possible edges; let $\Omega \subseteq \{(i,j): 1 \leq i < j \leq n\}$ be a random subset of all $\binom{n}{2}$ possible edges, where each pair is included in $\Omega$ independently with probability $\varepsilon$. This set $\Omega$ determines which edges are \emph{eligible} for the test set, regardless of whether they appear in the observed graph. Conditionally on $\Omega$, we define:
\begin{align}
[\tilde A\test^\Omega]_{ij}&\overset{ind.}{\sim}\text{Bernoulli}([P_\Omega]_{ij}),\\
[\tilde A^\Omega]_{ij}&\overset{ind.}{\sim}\text{Bernoulli}([P-P_\Omega)]_{ij}).
\end{align}
where $[P^\Omega]_{ij}=P_{ij}$ if edge $(i,j)\in\Omega$ otherwise 0. Given $\Omega$, $\tilde A\test^\Omega$ and $\tilde A^\Omega$ are conditionally independent. For example, if $\tilde x\in\R^n$ is a eigenvector of $\tilde A^\Omega$, then $\tilde x$ and $\tilde A\test^\Omega$ are conditionally independent given $\Omega$.

Crucially, Algorithm \ref{alg:edge_splitting} (\texttt{ES}) provides an efficient implementation of this conditioning. Rather than explicitly generating $\Omega$ for all $O(n^2)$ possible edges, \texttt{ES} only processes the observed edges in $A$, which is $O(m)$ where $m$ is the number of edges. The resulting split is distributionally equivalent to first sampling $\Omega$, then generating the conditional graphs. Thus, the computational efficiency of the algorithm remains unchanged.
Under this conditioning, we have the following result (the proof is in Supplementary Section \ref{sec:proof}):

\begin{theorem}[Conditional CLT for Bernoulli graphs] \label{thm:bernoulli-clt}
Assume $\max_{i,j} P_{ij} = o(1)$. Suppose that with probability 1, a unit vector $\tilde{x}$ satisfies the delocalization condition $\|\tilde{x}\|^2_{\infty} = o(\sigma_{\Omega})$, where the population total variance is
$$\sigma^2_{\Omega} = 2(\tilde{x}^2)\T(P_\Omega - P_\Omega^2)\tilde{x}^2 - (\tilde{x}^2)\T \diag(P_\Omega - P_\Omega^2)\tilde{x}^2$$ 
with $[P_\Omega^2]_{ij}=[P_\Omega]_{ij}^2$ for simplicity,
and that $\sigma_{\Omega} > 0$ eventually. Then, conditionally on $\Omega$,
$$\frac{\lambda^{\Omega}\test(\tilde{x}) - \lambda_{P_\Omega}\tilde{x}}{\hat{\sigma}_{\Omega}} \Rightarrow N(0,1), \quad \text{almost surely},$$
where $\lambda^{\Omega}\test(\tilde{x}) = \tilde{x}^T \tilde{A}^{\Omega}\test \tilde{x}$ and $\hat\sigma^2_\Omega = 2(\tilde x^2)\T \tilde A\test^\Omega \tilde x^2 - (\tilde x^2)\T \diag(\tilde A\test^\Omega) \tilde x^2$ is the sample variance estimate.
\end{theorem}

The variance formula $\sigma^2_{\Omega}$ includes terms $(1 - [P_\Omega]_{ij})$ reflecting the Bernoulli structure, in contrast to the Poisson case in Theorem \ref{thm:cltA}. Since $(1 - [P_\Omega]_{ij}) \leq 1$, the Poisson variance formula provides an upper bound, enabling conservative inference. 

This theoretical modification has two practical implications. First, Algorithm \ref{alg:eigcv} (\texttt{EigCV}) operates identically for both Poisson and Bernoulli graphs—the conditioning is implicit in the randomization. Second, the negative dependence makes the test conservative, particularly for denser graphs where more edges create more dependence. As demonstrated in Supplementary Section \ref{sec:poisson_vs_bernoulli}, this conservativeness is most pronounced in denser graphs, for eigenvectors beyond the true dimension $k$. Despite reduced power, this conservativeness maintains valid inference, making the method reliable for the Bernoulli graphs that arise in practice.

Our main consistency result (Theorem \ref{thm:consistency}) in the next section extends to the Bernoulli case with minor modifications (see Supplementary Remark \ref{remark:consistency_bernoulli}), where $\varepsilon P$ is replaced with the conditional expectation. Thus, our method provides both valid inference and consistent estimation of $k$ for Bernoulli graphs, with the primary cost being conservative p-values that may reduce power in dense graphs.

\section{Cross-validated eigenvalue estimation}

In this section, we use Theorem~\ref{thm:cltA} to test the null hypothesis $H_0: \lambda_P(\tilde x_j) = 0$, where $\tilde x_j$ is the $j$th eigenvector of $\tilde A$. 
Ideally, the first $k$ eigenvectors will provide large values of $\lambda\test(\tilde x_j)$ for $j=1,2,...,k$, while the following eigenvectors have $\lambda\test(\tilde x_j) \approx 0$ for $j>k$, whose Z-scores distributed normally around zero (with variance one). 
Section \ref{sec:alg} states the algorithm and Section \ref{sec:consistency} provides the main theoretical result, i.e., this estimation is consistent. 

\subsection{The algorithm}\label{sec:alg}

%We propose an eigenvalue cross-validation algorithm to estimate whether each eigenvector of $A$ is correlated with the population eigenspace. 
Algorithm \ref{alg:eigcv} revokes \texttt{ES} and calculated up to $\km$ cross-validated eigenvalues.
The algorithm reports a $p$-value for each eigenvector.  These are then used to estimate $k$. 
% In addition to the splitting probability $\varepsilon$, the algorithm takes two more parameters: (i) the maximum number $\km$ of eigenvectors to consider and (ii) the significance level $\alpha$.  
For simplicity, we describe the algorithm for an undirected graph with the adjacency matrix $A\in \N^{n \times n}$; see Remark \ref{remark:rectangle} for rectangular or asymmetric $A$.

\begin{algorithm}
    \setstretch{1.35}
	\DontPrintSemicolon
	\caption{Eigenvalue cross-validation}
	\label{alg:eigcv}
	\KwData{
        symmetric adjacency matrix $A \in \mathbb{N}^{n \times n}$, edge splitting probability $\varepsilon \in (0,1)$, maximum possible graph dimension $\km>1$}
    \textbf{Optional} \KwData{
        number of cross-validation repetitions $n_\text{cv}$ (default to 10), and significance level $\alpha$ (default to 0.05)}
	\textbf{Procedure} \texttt{EigCV}$(A,\varepsilon,\km, \alpha, n_\text{cv})$\textbf{:}\;
	\Indp
	1. \For{$i = 1, \dots, n_\text{cv}$}{
		a. $\tilde{A}$, $\tilde{A}\test\leftarrow$\texttt{ES}$(A,\varepsilon)$
		\tcp*{Algorithm \ref{alg:edge_splitting}} 
		b. (Optional) Compute graph Laplacian $\tilde L$ with $\tilde A$, 
            $$\tilde L = \tilde D \tilde A \tilde D,$$
            where $\tilde D\in\R^{n \times n}$ is a diagonal matrix with element $\tilde D_{ii} = (\tilde d_i + \tilde \tau)^{-1/2}$ and node degrees $\tilde d_i = \sum_j \tilde A_{ij}$ and regularization parameter $\tilde \tau = \sum_i \tilde d_i / n$.\;
		c. Compute the leading $\km$ eigenvectors of $\tilde L$ (or $\tilde A$), as $\tilde x_1, \dots, \tilde x_\km$. \;
		d. \For{$j = 2, \dots, \km$}{
            Compute the test statistic
			$$T_{f,j} = \frac{\lambda\test (\tilde x_j)}{\hat \sigma_j},$$
			where $\lambda\test(\tilde x) = \tilde x\T \tilde A\test \tilde x$, 
            and $\hat \sigma_j = \sqrt{2\varepsilon (\tilde x_j^2)\T {A} \tilde x_j^2-\varepsilon(\tilde x_j^2)\T \diag(A) \tilde x_j^2}$ is the standard error evaluated using the full graph,
            and $\tilde x_j^2$ is the vector $\tilde x_j$ with each element squared.\;}}
	2. \For{$j = 2, \dots, \km$}{ 
        Compute $T_j$ as the mean of the $T_{1,j}, \dots, T_{n_\text{cv},j}$ and compute the one-sided $p$-value $p_j = 1-\Phi(T_j)$, where $\Phi$ is the cumulative distribution function of the standard normal distribution.\;}
	% \vspace{.1in}
	\Indm
	\KwResult{The graph dimensionality estimate: $\argmin_{k\le\km}\{p_{k} \ge \alpha\} - 1$.\;}
\end{algorithm}

In Algorithm \ref{alg:eigcv}, we allow for several algorithmic modifications from the main statistical theory in Section \ref{sec:consistency} (Theorem \ref{thm:consistency}) because they are practically advantageous. These include repetition of cross-validation (Step 1), graph Laplacian (Step 1b), and a delocalization check. We discuss the choices in order.

\begin{remark}[Repetition of cross-validation]
Increasing $n_\text{cv}>1$ helps to remove the randomness in the $p$-values generated from edge splitting $\texttt{ES}$ (see Supplementary Section \ref{app:bernoullivspoisson10fold} for further discussion of $\textrm{folds} > 1$).  
\end{remark}

\begin{remark}[Graph Laplacian]
While the technical parts of the paper use the eigenvectors of an adjacency matrix, the proposed algorithm instead uses eigenvectors from the normalized and regularized adjacency matrices. 
Importantly, the normalized and regularized form of $\E(A)$ has the same number $k$ of non-zero eigenvalues as $\E(A)$.
For dense graphs, $L$ has similar statistical properties to $ A$. However, for sparse graphs, $L$ helps reduces localization of eigenvectors \citep{le_concentration_2017, zhang2018understanding}.
\end{remark}

\begin{remark}[Check for delocalization]
Finally, there is an additional step needed in Theorem \ref{thm:consistency} to check for delocalization. This technical requirement is further discussed in Section \ref{sec:consistency}. This step is not used in \texttt{EigCV} or in our implementation.  We do not include a check for delocalization because we find in the simulation in Supplementary Section \ref{sec:poisson_vs_bernoulli} that when an eigenvector $\tilde x_j$ delocalizes, $(\tilde x_j^2)\T {A} \tilde x_j^2$ in the formula for $\hat \sigma_j$ is very large, thus leading to conservative inferences.
\end{remark}

% \begin{remark}[]
% 	If $\mathrm{folds}=1$ and the $p$-values $p_k$ are used to select eigenvectors, then the eigenvectors $\tilde x_k$ should be used (not $\hat x_k$). 
% 	This is because the $p$-value $p_k$ is only associated with the eigenvector $\tilde x_k$. 
% 	It is tempting to compute the eigenvectors of $A$ or $L$ with all of the edges and then give the $k$-th eigenvector $\hat x_k$ the $p$-values $p_k$.  However, when the left-out edges are also used to compute the eigenvectors, this alters the eigenvectors.  In addition to slightly changing the elements of the eigenvectors, it is common for the order of the eigenvectors to also change.  Or, for the new eigenvectors to be a more general rotation of the subsampled eigenvectors.  It is an area for future research to understand if and how the $p$-values can be extended.  By making $\varepsilon$ small we can ensure that the subsampled eigenvectors $\tilde x_k$ are nearly as good as $\hat x_k$. 
% \end{remark}

Algorithm \ref{alg:eigcv} (\texttt{EigCV}) operates identically for Bernoulli graphs. Furthermore, the framework of \texttt{EigCV} easily extends to two other settings, rectangular incidence matrices and a test of independence for contingency tables. We remark on the two extensions in order.

\begin{remark}[Rectangular incidence matrices] \label{remark:rectangle} 
If the matrix $A \in \N^{r \times c}$ is either rectangular or asymmetric (e.g., the adjacency matrix for a directed graph, the incidence matrix for a bipartite graph, a contingency table, etc.), then eigenvectors should be replaced by singular vectors. In Step 1c of \texttt{EigCV}, compute the singular vector pairs $\tilde u_j$ and $\tilde v_j$.  Then, the define test statistic in Step 1d as 
\[T_j = \frac{ \tilde u_j\T \tilde A\test \tilde v_j}{\sqrt{\varepsilon (\tilde u_j^2)\T A \tilde v_j}}.\]
Theorem \ref{thm:cltA} extends under analogous conditions to this setting. Our \texttt{R} package, \texttt{gdim} (available from CRAN), includes this extension. 
\end{remark}

\begin{remark}[Contingency tables] 
Suppose $A \in \mathbb{N}^{r \times c}$ is a contingency table with multinomial elements.  Note that the  $\chi^2$ test of independence tests the null hypothesis that
$\E(A)$ is rank 1, i.e., $k = 1$.  To test if $\E(A)$ has rank greater than $k=1$ and potentially reject independence, one can apply \texttt{EigCV} to $A$ with $k_{max} =2$, using the extension to rectangular matrices in Remark \ref{remark:rectangle}.  
The three key pieces for the cross-validation apply to this setting, thus enabling this approach. First, if the distribution of $A$ is multinomial, then \texttt{ES} provides two independent matrices.  Second, in expectation, those matrices have identical singular vectors and the same number of non-zero singular values. Third, the CLT in Theorem \ref{thm:cltA} extends to this data generating model with analogous conditions.
\texttt{EigCV} is potentially powerful for alternative hypotheses where $\E(A)$ has a large second eigenvalue. 
In contrast to the traditional Pearson's $\chi^2$ test for independence, \texttt{EigCV} handles a large number of rows and columns and a sparse $A$, where the vast majority of elements are zeros.  Moreover, it has the added advantage that the singular vectors $\tilde u_2$ and $\tilde v_2$ estimate where the deviation from independence occurs, thus making the results more interpretable.  
This is an area of our ongoing research.  
\end{remark}

\subsection{Statistical consistency}\label{sec:consistency}

%\subsubsection{The strong signal case}
This section states a consistency result for a modified version of the algorithm stated in Algorithm \ref{alg:eigcv-mod} in Supplementary Section \ref{sec:consistency-proof}.
The main modification is the addition of a delocalization test. We use $K$ for the true latent dimension and $\hat K$ for its estimate.

We will make some further assumptions.
Let $P=\rho_nP^0$, where $0 < \rho_n < 1$ controls the sparsity of the network, and $P^0=U\Lambda^0 U\T$ is a matrix of rank $k$ with $P^0_{ij} \leq 1$ for all $i,j$. Here, $\Lambda^0=\diag(\lambda_1^0,\cdots,\lambda^0_{K})$ is the diagonal matrix of its non-increasing eigenvalues, and $U=(u_1,\cdots,u_{K})$ contains the corresponding eigenvectors.
We first consider the signal strength in the population adjacency matrix. 
The magnitude of the leading eigenvalues characterize the useful signal in the data; only if they are sufficiently large is it possible to identify them from a finite graph sample. As such, the first assumption requires that the leading eigenvalues of the population graph are of sufficient and comparable magnitude. We also include necessary assumptions on the sparsity of the graph. 
\begin{assumption}[Signal strength and sparsity]\label{assump:matrix_SSNR}
	We assume that there exist positive constants $\psi_1,\psi_1'$ such that $$\kappa:={\lambda_1^0}/{\lambda_K^0}\in(0,\psi_1),\qquad \lambda_1^0\geq \psi_1' n.$$  In addition, we assume that $P^0_{ij} \leq 1$ for all $i,j$ and that the network sparsity satisfies $ c_0 \frac{\log^{\xi_0}n}{n}\leq \rho_n\leq c_0'n^{-\xi_1}$, for some constants $\xi_0>2$, $\xi_1\in(0,1)$, $c_0,c_0'>0$.
\end{assumption}
Observe that Assumption \ref{assump:matrix_SSNR} implies in particular that $\psi_1^{-1}\psi_1'n\rho_n\leq\lambda_K\leq \lambda_1\leq n\rho_n$ since $\lambda_1\leq \tr(P)\leq n\rho_n$. Assumption \ref{assump:matrix_SSNR} is less strict than the assumptions in \cite{li_network_2020}. This is because we do not require a minimum gap between distinct eigenvalues, which is hard to satisfy in practice.

Next, we consider a property of the population eigenvectors. The notion of coherence was previously introduced by \cite{candes2009exact}. 
Under the parametrization of Assumption~\ref{assump:matrix_SSNR}, the coherence of $U$ 
%(versus the standard basis $e_i$ for $i=1,2,...,n$) 
is defined as
\begin{align*}
	\mu(U)=\max_{i\in[n]}\frac{n}{K}\Vert U\T e_i\Vert^2=\frac{n}{K}\Vert U\Vert_{2,\infty}^2,
\end{align*}
where $e_i$ is the $i$-th standard basis vector.
A lower coherence indicates that the population eigenvectors are more spread-out---that is, they are not concentrated on a few coordinates.  
\begin{assumption}[Coherence]\label{assump:coherence}
	We assume $\mu(U)\leq \mu_0$, for some constant $\mu_0>1$.
\end{assumption}
%In the modified algorithm, we calculate the test statistic $T_k$ only if $\Vert \tilde{ x}_k\Vert_\infty^2\leq \log n/n$.

Our main theoretical result asserts the consistency of our cross-validated eigenvalue estimator for estimating the latent dimension.
The proof of Theorem \ref{thm:consistency} is in Supplementary Section \ref{sec:proof}.
\begin{theorem}[Consistency]\label{thm:consistency}
	Suppose $A\in\R^{n \times n}$ is a Poisson graph satisfying Assumptions~\ref{assump:matrix_SSNR} and~\ref{assump:coherence}. Let $K$ be the true latent space dimension, and let $\hat K$ be the output of Algorithm~\ref{alg:eigcv-mod} (see Supplementary Section \ref{sec:consistency-proof}) with edge splitting probability $\varepsilon$. 
	Then,
	\begin{eqnarray*}
		\mathbb P\left(\hat K=K\right) \to 1 & \text{as} &  n\to \infty.
	\end{eqnarray*}
\end{theorem}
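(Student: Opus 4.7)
The plan is to prove consistency by establishing two complementary events: (i) no under-estimation, $p_\ell<\alpha$ for all $2\le\ell\le K$, and (ii) no over-estimation, $p_{K+1}\ge\alpha$; together they imply $\hat K=K$. The two structural ingredients linking the algorithm to the population are Proposition~\ref{prop:split}, which says $\E(\tilde A)=(1-\varepsilon)P$ and $\E(\tilde A\test)=\varepsilon P$ share the eigenstructure of $P$, and Theorem~\ref{thm:cltA} applied conditionally on $\tilde A$ to $B=\tilde A\test$ and $x=\tilde x_\ell$; the conditioning is legitimate because $\tilde A$ and $\tilde A\test$ are independent by Lemma~\ref{lemma:poisson_binomial}.

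For the first event, I would start with spectral concentration. Under Assumption~\ref{assump:matrix_SSNR}, a matrix Bernstein inequality for the inhomogeneous Poisson matrix $\tilde A-(1-\varepsilon)P$ gives $\|\tilde A-(1-\varepsilon)P\|_2=O(\sqrt{n\rho_n\log n})$ with high probability, while the $K$-th eigenvalue of $(1-\varepsilon)P$ is $\Theta(n\rho_n)$. Davis--Kahan then yields a sine-theta bound of order $\sqrt{K\log n/(n\rho_n)}$ between $\mathrm{span}(\tilde x_1,\dots,\tilde x_K)$ and $\mathrm{span}(U)$. Entrywise eigenvector perturbation (of Abbe--Fan--Wang--Zhong type) together with Assumption~\ref{assump:coherence} then transfers delocalization from $U$ to each $\tilde x_\ell$ with $\ell\le K$, yielding $\|\tilde x_\ell\|_\infty^2=O(\mu_0 K/n)$ and verifying the delocalization condition~\eqref{eq:deloc-cond}. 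This in turn gives $\lambda_{\varepsilon P}(\tilde x_\ell)=\Theta(\varepsilon n\rho_n)$ and $\tilde\sigma_\ell=\Theta(\sqrt{\varepsilon n\rho_n})$, so the test statistic $T_\ell$ has signal-to-noise ratio of order $\sqrt{n\rho_n}\to\infty$; Theorem~\ref{thm:cltA} then forces $p_\ell\to 0$ in probability.

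For the second event, the orthogonality $\tilde x_{K+1}\perp\mathrm{span}(\tilde x_1,\dots,\tilde x_K)$, combined with the preceding sine-theta bound, forces $\tilde x_{K+1}$ to lie (up to small error) in $\mathrm{span}(U)^\perp$, so $\lambda_P(\tilde x_{K+1})=o(\sqrt{n\rho_n})$. Applying Theorem~\ref{thm:cltA} then gives $T_{K+1}\Rightarrow N(0,1)$, hence $\mathbb P(p_{K+1}\ge\alpha)\to 1-\alpha$. To upgrade the $1-\alpha$ to $1-o(1)$, the modification in Algorithm~\ref{alg:eigcv-mod} apparently allows the significance level $\alpha=\alpha_n$ to decay (e.g.\ polynomially in $n^{-1}$), so that the required rejection threshold $\Phi^{-1}(1-\alpha_n)\lesssim\sqrt{\log n}$ still lies below the $\sqrt{n\rho_n}$-size signal for $\ell\le K$ but strictly above the $O(1)$-size $T_{K+1}$.

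The main obstacle is handling the noise eigenvectors $\tilde x_\ell$ for $\ell>K$ in the sparse regime. These can localize on high-degree vertices, violating~\eqref{eq:deloc-cond} and invalidating Theorem~\ref{thm:cltA}; this is precisely the role of the explicit delocalization test added in Algorithm~\ref{alg:eigcv-mod} (see Remark~\ref{rmk:regularize}), which filters out any candidate $\tilde x_\ell$ whose infinity norm is too large before a $p$-value is computed. A secondary difficulty is that Assumption~\ref{assump:matrix_SSNR} controls only the ratio $\lambda_1^0/\lambda_K^0$ and not individual eigengaps, so the perturbation argument for $\ell\le K$ must be phrased at the subspace level and then used to deduce a uniform lower bound on $\lambda_P(\tilde x_\ell)$ for every unit vector inside the estimated top-$K$ subspace; any attempt to track eigenvectors one at a time would fail when population eigenvalues coalesce.
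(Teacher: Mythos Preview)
Your overall architecture---splitting into ``no under-estimation'' and ``no over-estimation,'' invoking matrix Bernstein for $\|\tilde A-(1-\varepsilon)P\|$, Davis--Kahan at the subspace level, and Abbe--Fan--Wang--Zhong entrywise perturbation for delocalization of the top-$K$ sample eigenvectors---matches the paper's proof closely. Your remark that individual eigengaps are not available and that one must argue via the whole top-$K$ subspace is exactly right.

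There is, however, a genuine gap in the over-estimation half, and it stems from a miscalculation of the variance scale. You write $\tilde\sigma_\ell=\Theta(\sqrt{\varepsilon n\rho_n})$, but in fact $\tilde\sigma_\ell^2=\Theta\!\big((\tilde x_\ell^2)^\top P\,\tilde x_\ell^2\big)=\Theta(\rho_n)$, since $\sum_{ij}\tilde x_{\ell,i}^2\tilde x_{\ell,j}^2 P_{ij}\le \rho_n\|\tilde x_\ell\|_2^4=\rho_n$; so $\tilde\sigma_\ell=\Theta(\sqrt{\rho_n})$, not $\Theta(\sqrt{n\rho_n})$. With the correct scale the signal statistic satisfies $T_s\asymp n\sqrt{\rho_n}$ (not $\sqrt{n\rho_n}$), and, more importantly, for $\ell>K$ the centering in the CLT is
\[
\frac{\varepsilon\,\lambda_P(\tilde x_\ell)}{\tilde\sigma_\ell}
\;\lesssim\;\frac{\log n}{\sqrt{\rho_n}}
\;\asymp\;\sqrt{n}\,\log^{1-\xi_0/2}n
\]
at the sparse end of Assumption~\ref{assump:matrix_SSNR}. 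This diverges, so $T_{K+1}\Rightarrow N(0,1)$ is \emph{false} in general, and no threshold of order $\sqrt{\log n}$ (i.e., no polynomially decaying $\alpha_n$) can separate signal from noise. The paper does not use Theorem~\ref{thm:cltA} at all in the consistency proof; instead it applies a Bernstein bound (its Lemma~\ref{lem:concen_A_test_p}) to get $|\tilde x_\ell^\top(\tilde A\test-\varepsilon P)\tilde x_\ell|\le\sqrt{\rho_n\log n}$ for any $\ell$ passing the delocalization screen, combines this with $\lambda_P(\tilde x_\ell)=O(\log n)$ for $\ell>K$, and uses the screen itself to force $\tilde\sigma_\ell\ge \log n/\sqrt{n}$. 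This yields only $|T_\ell|=O(\sqrt{n})$ for noise eigenvectors, and the modified Algorithm~\ref{alg:eigcv-mod} therefore thresholds at $\tau_n=\sqrt{n\log n}$ (with no $p$-values), which sits strictly between $O(\sqrt{n})$ and the signal size $n\sqrt{\rho_n}$.

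In short: your guess that the modification is a vanishing $\alpha_n$ is incorrect---the modification is a growing deterministic threshold---and the reason is that the bias $\lambda_P(\tilde x_{K+1})/\tilde\sigma_{K+1}$ cannot be controlled down to $O(1)$ under the stated assumptions. The delocalization screen is not only there to validate a CLT; in the paper its second role is to \emph{lower bound} $\tilde\sigma_\ell$ for noise eigenvectors via $\|\tilde x_\ell\|_\infty^2\le \tilde\sigma_\ell^2/\log^2 n$, which is what makes the $O(\sqrt{n})$ bound on $|T_\ell|$ go through.
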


\section{Comparing to other approaches}

This section compares the proposed method (\texttt{EigCV}) with some existing graph dimensionality estimators using both simulated and real graph data. 
An \texttt{R} implementation of \texttt{EigCV} is available from the CRAN package \texttt{EigCV}. 
Throughout, we set the graph splitting probability $\varepsilon$ to 0.05, set the significance level cut-off at $\alpha=0.05$, and $n_\text{cv} = 10$.

We compare \texttt{EigCV} to (1) BHMC, a spectral method based on the Bethe-Hessian matrix with correction \citep{le_estimating_2019}; (2) LR, a likelihood ratio method adapting a Bayesian information criterion \citep{wang_likelihood-based_2017}; (3) ECV, an edge cross-validation method with an area under the curve criterion \citep{li_network_2020}; (4) NCV, a node cross-validation using an binomial deviance criterion \citep{chen_network_2018}; and (5) StGoF (with $\alpha=0.05$), a stepwise goodness-of-fit estimate \citep{jin_estimating_2020}. 
We performed all computations in \texttt{R}.
For (1)-(4), we invoked the \texttt{R} package \texttt{randnet}, and for (5), we implemented the original Matlab code (shared by the authors) in \texttt{R}.

\subsection{Numerical experiments} \label{sec:numerical_simulations}

This section presents several simulation studies that compare our %testing 
method with other approaches to graph dimensionality. 
We sampled random graphs with $n=2,000$ nodes and $k=10$ blocks from the Degree-Corrected Stochastic Blockmodel (DCSBM). Specifically, for any $i,j=1,2,...,n$, $$A_{ij}\overset{\text{ind.}}{\sim}\text{Bernoulli}\left(\theta_i\theta_j B_{z(i)z(j)}\right),$$ where $z(i)\in\{1,2,...,k\}$ is the block membership of node $i$, and $B\in\R^{k \times k}$ is the block connectivity matrix, with $B_{ii}=0.28$ and $B_{ij}=0.08$ for $i,j=1,2,...,k$, and $\theta_i>0$ is the degree parameters of node $i$. We investigated the effects of %node 
degree heterogeneity by drawing $\theta_i$'s from three distributions (before scaling to unit sum): (i) a point mass distribution, (ii) an Exponential distribution with rate 5, (iii) a Pareto distribution with location parameter 0.5 and dispersion parameter 5. From (i) to (iii), the node degrees become more heterogeneous. Finally, to examine the effects of sparsity, we chose the expected average node degree in $\{25, 30, ..., 60\}$. For each simulation setting, we evaluated all methods 100 times. 

Figure \ref{fig:accuracy} displays the accuracy of all graph dimensionality methods. Here, the accuracy is the fraction of times an estimator successfully identified the true underlying graph dimensionality (which is 10).\footnote{Besides comparison of accuracy, we also compared the deviation of the estimation by each method, for which similar results hold consistently (see Supplementary Figure \ref{fig:dev}).} 
% MSE quantifies the squared biases and variance of the estimates. 
From the results, both BHMC and ECV offered satisfactory estimation when the graph is degree-homogeneous and the average degree becomes sufficiently large, while they were affected drastically by the existence of %node 
degree heterogeneity. 
The LR estimate was affected by degree heterogeneity as well (although less than BHMC or ECV) and also required a relatively large average node degree to estimate the graph dimensionality. 
The NCV methods failed to estimate the graph dimensionality under %generally all 
most settings. 
The StGoF estimate worked better for degree-heterogeneous graphs but required a larger average node degree for accuracy. 
It is also worth pointing out that the LR and StGoF methods tended to over-estimate the graph dimensionality when the average degree is large, especially for the power-law graphs (see Supplementary Figure \ref{fig:dev}). 
Finally, our method provided a much more accurate dimensionality estimate overall, requiring smaller average node degree and allowing degree heterogeneity. 
In addition, our testing approach also enjoys a strong advantage of reduced computational cost. To show this, Figure \ref{fig:time} depicts the average runtime for each method. %compared. 
It can be seen that the proposed method and BHMC are faster than competing methods by several orders of magnitude. The computational complexity of each StGoF iteration (or test) is at least $O(n^2)$, regardless of whether the graph is sparse or not. Consequently, StGoF requires the longest runtime.

\begin{figure}
	\centering
	\includegraphics[width=1\linewidth]{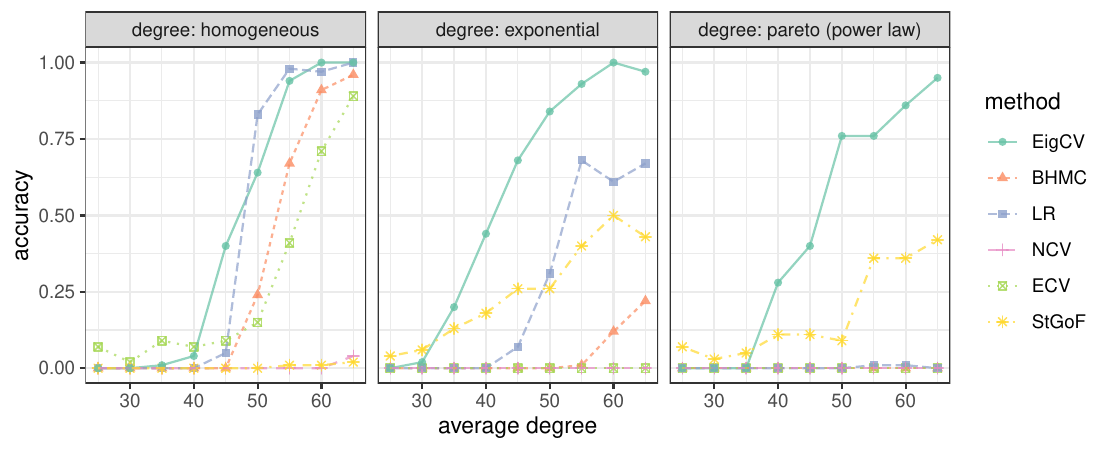}
	\caption{Comparison of accuracy for different graph dimensionality estimates under the DCSBM. The panel strips on the top indicate the node degree distribution used. Within each panel, each colored line depicts the relative error of each estimation method as the average node degree increases. Each point on the lines are averaged across 100 repeated experiments. }
	\label{fig:accuracy}
\end{figure}

\begin{figure}
	\centering
	\includegraphics[width=0.5\linewidth]{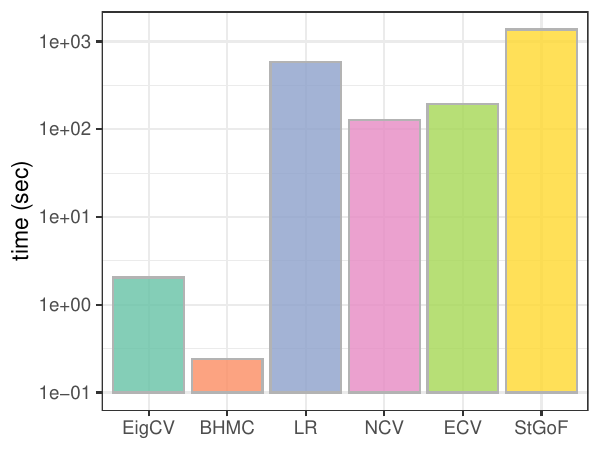}
	\caption{Comparison of runtime for the different graph dimensionality methods. Each colored bar indicates the runtime of applying each method on a DCSBM graph with 2000 nodes and 10 blocks. The maximum graph dimensionality is set to 15 for all methods. The runtime was averaged across 100 repeated experiments.}
	\label{fig:time}
\end{figure}

\subsection{Email network} \label{sec:email}
A real data network was generated using email data within a large European research institution, with each node representing one of the 1005 core members \citep{leskovec2007graph}. There is an edge from node $i$ to node $j$, if $i$ sent at least one email to $j$. 
The dataset also contains 42 ``ground-truth'' community memberships of the nodes. That is, each individual belongs to exactly one of 42 departments at the research institute. For simplicity, we removed the 14 small departments that consist of less than 10 members (see Supplementary Table \ref{tbl:email_42} for similar results without the filter). 
This resulted in a directed and unweighted network with a total of 936 nodes from 28 communities. 

We applied the graph dimensionality methods to estimate the number of clusters in the network. 
For the randomized methods (including ECV, NCV, and our proposed method), we ran them 25 times and report the mean and standard deviation of the estimates. 
For the methods that report a $p$-value (including StGoF and our proposed method), we use a significance level of $\alpha = 0.01$, followed by a multiplicity correction using the procedure of \cite{benjamini1995controlling}. 
We set $\km=50$.
Finally, we chose the splitting probability to be 0.05, as the network is sparse with an average node degree of 23.5. 
Table \ref{tbl:email} lists the inferences made by each method. As shown, our method provided an estimate that is close to the true number of departments within the institute.
BHMC, LR, NCV, and ECV all estimated small numbers of clusters, while StGoF went significantly larger ($\ge50$). These observations were consistent with the simulation results (see Supplementary Figure \ref{fig:dev}). Among all the others, only the proposed method provided a close estimate ($\approx28$) to the true number of departments. 
Similarly to the simulation results, the BHMC method and our method are more computationally efficient, with much shorter runtime than the others.

\begin{table}[ht]
	\centering
	\caption{Comparison of graph dimensionality estimates using the email network among members in a large European research institution. Each members belongs to one of 28 departments.}
	\label{tbl:email}
	\smallskip
	\begin{tabular}{lrr}
		\hline
		Method & Estimate (mean) & Runtime (second) \\ 
		\hline 
		EigCV & 28.3 & 0.68 \\ 
		BHMC & 14 & 0.02 \\ 
		LR & 17 & 85.19 \\ 
		NCV & 6.5 & 204.97 \\
		ECV & 16.5 & 41.07 \\ 
		StGoF & $\ge50$ & 397.47 \\
		\hline
	\end{tabular}
\end{table}

\section{Discussion}

In this paper, we proposed a new way to estimate the number of latent dimensions in a graph $k$ using the concept of cross-validated eigenvalues. Through edge splitting and thanks to a simple central limit theorem (or conditional CLT for Bernoulli graphs), the estimation of cross-validated eigenvalues is efficient for very large graphs. The paper also provides theoretical justification showing that the estimator is consistent. Our simulations and empirical data application validate the theory and further demonstrate the efficacy of the proposed method.

In addition to being quickly computable, a key advantage of cross-validated eigenvalues is our rigorous understanding of their behavior outside of the asymptotic setting where all $k$ dimensions can be estimated.  Theorem \ref{thm:cltA} encodes this rigorous understanding into a $p$-value. This theorem requires very little of the population matrix $P$; it does not presume that it is from a Degree-Corrected Stochastic Blockmodel, nor does it presume the actual rank of $P$ or the order of the eigenvalue being tested. Of course, this level of ease and generality comes with a price.  For Bernoulli graphs specifically, the inference is conservative rather than exact, though still valid.  In addition, we only get to compute the eigenvectors with $1-\varepsilon$ fraction of the edges.  The natural possibility is to estimate $k$ with a fraction of the edges and then recompute the eigenvectors with the full graph.  Going forward, we hope others will join us in crafting new estimators for $\lambda_P(\hat x_j)$ that do not require leaving out edges.

%% before appendix (optional) and bibliography:
\section*{Acknowledgments}
This research is partially supported by the National Science Foundation under Grant DMS-1612456 (to K.R.), DMS-1916378 (to S.R. and K.R.), CCF-1740707 (TRIPODS Phase I) and DMS-2023239 (TRIPODS Phase II), and the Army Research Office under Grant W911NF-15-1-0423 (to K.R.). 
We thank S\"{u}nd\"{u}z Kele\c{s}, Po-Ling Loh, Michael A Newton, Kris Sankaran, Muzhe Zeng, Alex Hayes, E Auden Krauska, Sijia Fang, and Jitian Zhao for all the helpful discussions.

\clearpage
\appendix
\begin{center}
    {\LARGE\bf Supplementary Materials}
\end{center}
	
\beginsupplement

\spacingset{1.5} % DON'T change the spacing!

\section{Technical proofs} \label{sec:proof}

\subsection{Proof of Proposition \ref{prop:signal_strength}}

\begin{proof} 
Let $\hat X \in \mathbb{R}^{n \times q}$ contain the leading $q$ sample eigenvectors $\hat x_1, \dots, \hat x_q$ in its columns.  We aim to show that 
\[\min_{\Gamma\text{\normalfont~is~diagonal}} \| P - \hat X \Gamma \hat X\T \|_F\]
contains $\lambda_{P}(\hat x_1), \dots, \lambda_{P}(\hat x_q)$ down the diagonal.

Using the symmetry of $P$ and the cyclic property of the trace, we obtain
\begin{eqnarray*}
	\| P - \hat X \Gamma \hat X\T \|_F^2  &=& 
	\tr(P^2) + \tr(\hat X \Gamma^2 \hat X\T) - 2 \tr(P\hat X \Gamma \hat X\T) \\
	&=&\tr(P^2) + \tr(\Gamma^2) - 2 \tr(\hat X\T P \hat X \Gamma).
\end{eqnarray*}
Taking a derivative with respect to the diagonal of $\Gamma$ and setting equal to zero gives 
\[\Gamma = \mathrm{diag}(\hat X\T P \hat X)\]
which contains $\lambda_P( \hat x_1), \dots, \lambda_P( \hat x_q) $ down the diagonal.
\end{proof}

\subsection{Proof of asymptotic normality}
This section proves the two versions of the central limit theorem (CLT) and a sufficient condition described in Section \ref{sec:splitting}.

\subsubsection{CLT in the Poisson case}
\begin{proof}[Proof of Theorem \ref{thm:cltA}]
We first show the CLT with the population total variance $\sigma^2$ (in the denominator), 
\begin{equation}\label{eq:cltpopulation}
    \frac{\lambda\test(\tilde x) - \lambda_{\varepsilon P}(\tilde x)}
    {\sigma} \Rightarrow N(0,1).
\end{equation}
by verifying Lyapunov's condition. 
Then, we show the CLT holds when replacing population variance with the sample total variance $\hat\sigma$ via Slutsky’s Lemma. 

Since $\tilde A\test$ is symmetric, only the triangular elements are independent. That is, for all $i \le j$, $[\tilde A\test]_{ij}$ are independent. 
For any $1\le i\le j \le n$, define random variables $X_{ij}=x_ix_j([\tilde A\test]_{ij}-\varepsilon P_{ij})(2-\I(i=j))$, where $\I(\cdot)$ is the indicator function.
Then, $X_{ij}$ has mean zero and variance $\tilde x_i^2\tilde x_j^2\left(\varepsilon P_{ij}\right)(2-\I(i=j))^2$. 
Furthermore, the numerator in Equation \eqref{eq:cltpopulation} can be written as the sum of $X_{ij}$'s, $\lambda\test(\tilde x) - \lambda_{\varepsilon P}(\tilde x)=\sum_{i\le j}X_{ij}$. The sum of $X_{ij}$'s variances is equal to the definition of $\sigma^2$, 
$$\sum_{i \le j} \tilde x_i^2\tilde x_j^2 (\varepsilon P_{ij}) (2-\I(i=j))^2=2(\tilde x^2)\T (\varepsilon P) (\tilde x^2)- (\tilde x^2)\T \diag(\varepsilon P) (\tilde x^2)=\sigma^2.$$

We verify the fourth moment Lyapunov's condition on the summation of $X_{ij}$ for $i\le j$ (see, e.g.,~\citet[Exercise 3.4.12]{durrett_2019}). 
\begin{eqnarray*}
    &&\frac
    { \sum_{i \le j}  \E\left| \tilde x_i \tilde x_j ([\tilde A\test]_{ij} - \varepsilon P_{ij}) (2-\I(i=j))\right|^4}
    {\sigma^4} \\
    &\overset{\text{(i)}}{\leq}&
    \frac{ \sum_{i \le j} (\tilde x_i\tilde x_j)^4 (4\varepsilon P_{ij}) (2-\I(i=j))^4}
    {\sigma^4} \\
    &\le& 
    \frac{ 16 \|\tilde x\|_\infty^4 \sum_{i \le j} \tilde x_i^2\tilde x_j^2 (\varepsilon P_{ij}) (2-\I(i=j))^2}
    {\sigma^4} \\
    &=& 
    \frac{ 16 \|\tilde x\|_\infty^4 }
    {\sigma^2} \\
    &\overset{\text{(ii)}}{=}& o(1), 
\end{eqnarray*}
% \end{align}
where (i) comes from the fourth central moment of Poisson variable, $\E([\tilde A\test]_{ij}-\varepsilon P_{ij})^4=\varepsilon P_{ij}(1 + 3 \varepsilon P_{ij}) \leq 4 \varepsilon P_{ij}$, combining the assumption $\varepsilon P_{ij} \leq 1$, and (ii) is due to the delocalization condition. 
This shows Equation \eqref{eq:cltpopulation}.

Via Slutsky's Lemma, we can multiply the ratio in Equation \eqref{eq:cltpopulation} by any sequence that converges to one in probability and the result still holds. The proof is then concluded by showing that $\sigma/\hat \sigma$ converges to one in probability.  
Indeed, we have
\begin{eqnarray*}
    \Var\left(\frac{\hat \sigma^2}{\sigma^2}\right) &=& \frac{\Var(\hat \sigma^2)}{\sigma^2} \\
    &=& \frac{\Var\left(2 (\tilde x^2)\T \tilde A\test (\tilde x^2) - (\tilde x^2)\T \diag(\tilde A\test) (\tilde x^2) \right)}{\sigma^4} \\
    &=& \frac{\Var\left(\sum_{i \le j}\tilde x_i^2\tilde x_j^2[\tilde A\test]_{ij}(2-\I(i=j)^2\right)}{\sigma^4}\\
    % &=& \frac{\sum_{i,j}\tilde x_i^4\tilde x_j^4\Var\left([\tilde A\test]_{ij}\right)}{\sigma^4}\\
    &\overset{\text{(i)}}{=}& \frac{\sum_{i \le j}\tilde x_i^4\tilde x_j^4\varepsilon P_{ij}(2-\I(i=j))^4}{\sigma^4}\\
    &\le& \frac{4\|\tilde x\|_\infty^4\sum_{i \le j}\tilde x_i^2\tilde x_j^2\varepsilon P_{ij}(2-\I(i=j))^2}{\sigma^4}\\
    &=& \frac{4\|\tilde x\|_\infty^4}{\sigma^2}\\
    &\overset{\text{(ii)}}{=}& o(1),
\end{eqnarray*}
where (i) is due to the mutual independence among upper triangular elements of $\tilde A\test$, and (ii) is due to the delocalization condition.
So, by Chebyshev's inequality, $\hat \sigma^2/\sigma^2$ converges in probability to its expectation.  Note that
$\E (\hat \sigma^2 / \sigma^2)=1$ and that taking the inverse and the square root is continuous transformation.  So, the ratio
$\sigma/\hat \sigma$ 
%\[\frac{s_n}{\hat \sigma} \rightarrow 1,\]
converges in probability to one. 
\end{proof}

\subsubsection{A sufficient condition of CLT in the Poisson case} \label{sec:inhomogeneous}

The following corollary gives a sufficient condition for $\|\tilde x\|^2_\infty = o(\sigma)$ to hold in terms of $m$ and the expected number of edges in $\tilde A\test$. 
\begin{corollary} \label{cor:hetero}
	Using the setting of Theorem \ref{thm:cltA},  
	%define $d = P \1_n \in \R^n$ to be the expected degrees of the nodes $1, \dots, n$, where $\1_n\in\R^n$ is a vector of 1's. 
	let $\pi \in \R^n$ be a probability distribution on the nodes with $\pi_i$ proportional to a node's expected degree. Define $\langle \pi, x^2 \rangle$ be the expected value of $x_I^2$ for $I$ drawn from $\pi$ and define $m= \sum_{i} d_{i}/2$ as the expected total number of edges.
	%define  $y = x^2 \in \R^n$ and note that the elements of $y$ sum to $n$. 
	% Define $d = P \1_n \in \R^n$ to be the expected degrees of the nodes $1, \dots, n$.  
	If $P$ is positive semi-definite and  
	\begin{equation*} \label{eq:degcor}
		\frac{ \|\tilde x\|^2_\infty}{\langle \pi, \tilde x^2 \rangle} = o\left(\sqrt{m}\right), 
	\end{equation*}
	then the CLT in Equation \eqref{eq:clt} holds. 
\end{corollary}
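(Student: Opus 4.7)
The plan is to verify the delocalization condition $\|x\|_\infty^2 = o(\sigma)$ from Condition~\eqref{eq:deloc-cond}, after which the CLT in Equation~\eqref{eq:clt} follows directly from Theorem~\ref{thm:cltA}. The whole task thus reduces to producing a useful lower bound on $\sigma^2$ expressed in terms of $m$ and $\langle \pi, x^2 \rangle$, the two ingredients that the hypothesis controls.

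First I would translate the ingredients in the hypothesis into quadratic forms involving $Q$. Writing $\mathbf{1}$ for the all-ones vector and using $d_i = \sum_j Q_{ij}$, one has $(x^2)\T Q \mathbf{1} = \sum_i x_i^2 d_i = 2m \langle \pi, x^2 \rangle$ and $\mathbf{1}\T Q \mathbf{1} = \sum_i d_i = 2m$. The key step is then a Cauchy--Schwarz inequality in the positive semidefinite bilinear form $(u,v) \mapsto u\T Q v$, which is well defined because $Q$ is PSD. Taking $u = x^2$ and $v = \mathbf{1}$ yields
\[
(2m \langle \pi, x^2 \rangle)^2 \;=\; \bigl((x^2)\T Q \mathbf{1}\bigr)^2 \;\le\; (x^2)\T Q (x^2) \cdot \mathbf{1}\T Q \mathbf{1} \;=\; 2m \cdot (x^2)\T Q (x^2),
\]
so $(x^2)\T Q (x^2) \ge 2m \langle \pi, x^2 \rangle^2$.

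Next, I would use the entrywise non-negativity of $Q$ (which holds since $Q_{ij} = \E(B_{ij})$ for Poisson $B_{ij}$) to absorb the $\diag(Q)$ correction inside $\sigma^2$: since $(x^2)\T Q (x^2) = \sum_{ij} Q_{ij} x_i^2 x_j^2 \ge \sum_i Q_{ii} x_i^4 = (x^2)\T \diag(Q) (x^2)$, we obtain
\[
\sigma^2 \;=\; 2 (x^2)\T Q (x^2) - (x^2)\T \diag(Q) (x^2) \;\ge\; (x^2)\T Q (x^2) \;\ge\; 2m \langle \pi, x^2 \rangle^2.
\]
Taking square roots gives $\sigma \ge \sqrt{2m}\, \langle \pi, x^2 \rangle$, so
\[
\frac{\|x\|_\infty^2}{\sigma} \;\le\; \frac{1}{\sqrt{2m}} \cdot \frac{\|x\|_\infty^2}{\langle \pi, x^2 \rangle},
\]
and the hypothesis $\|x\|_\infty^2 / \langle \pi, x^2 \rangle = o(\sqrt{m})$ immediately yields $\|x\|_\infty^2 / \sigma = o(1)$, verifying~\eqref{eq:deloc-cond} and completing the reduction to Theorem~\ref{thm:cltA}.

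The argument is short, and the only mildly clever point is recognizing that testing the PSD Cauchy--Schwarz inequality against the vector $\mathbf{1}$ produces exactly the combination $m\cdot\langle \pi, x^2\rangle^2$ that the hypothesis controls, while entrywise non-negativity of $Q$ makes the diagonal subtraction in $\sigma^2$ harmless. I do not anticipate any serious obstacle; essentially all of the analytic work has already been carried out inside Theorem~\ref{thm:cltA}, and the corollary is a one-line reformulation of its delocalization hypothesis in terms of the graph's expected edge count and the $\pi$-weighted mass of $x^2$.
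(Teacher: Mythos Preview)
Your proof is correct and yields exactly the same key inequality as the paper, namely $(x^2)\T Q (x^2) \ge \langle d, x^2\rangle^2/\sum_i d_i = 2m\langle \pi, x^2\rangle^2$, followed by the same observation that entrywise non-negativity of $Q$ gives $\sigma^2 \ge (x^2)\T Q (x^2)$. The difference is in how that inequality is obtained. The paper proves it as a separate lemma (Lemma~\ref{lem:lb_sigma}) by passing to the normalized matrix $\mathscr{L} = \Theta^{-1} Q \Theta^{-1}$ with $\Theta = \diag(d^{1/2})$, invoking Perron--Frobenius to identify $\theta = d^{1/2}$ as the top eigenvector of $\mathscr{L}$ with eigenvalue $1$, and then dropping all but the leading term in the spectral expansion of $y_\theta\T \mathscr{L} y_\theta$. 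Your route---a direct Cauchy--Schwarz in the PSD bilinear form $u\T Q v$ against $v = \mathbf{1}$---is more elementary and shorter, and it sidesteps the need for Perron--Frobenius or any spectral decomposition. The two arguments are really the same inequality in disguise (keeping the top eigencomponent is Cauchy--Schwarz against the top eigenvector), but your formulation is the cleaner one here.
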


\begin{proof}
	The proof of Corollary \ref{cor:hetero} follows directly from the next lemma.  
	
	\begin{lemma}\label{lem:lb_sigma}
		Suppose $Q\in\R^{n\times n}_+$ is positive semi-definite. 
		Define $d = Q \1_n \in \R^n$ to be the expected degrees of the nodes $1, \dots, n$, where $\1_n\in\R^n$ is a vector of 1's. 
		Then,
		\[\sigma^2 = 2(\tilde x^2)\T Q \tilde x^2-(\tilde x^2)\T \diag(Q) \tilde x^2 \ge  \frac{\langle d, \tilde x^2 \rangle^2} {\sum_i d_i}.\]
	\end{lemma}
	\begin{proof}
		Define $y= \tilde x^2, \theta = d^{1/2}, \Theta = \diag(\theta) \in \R^{n \times n}, y_\theta = \Theta y$, and $\mathscr{L} = \Theta^{-1} Q \Theta^{-1}$.  
		Because the elements of $\theta$ are non-negative, $\mathscr{L}$ is non-negative definite.  
		
		The first part of the proof is to show that $\mathscr{L} \theta = \theta$.  This is because $\Theta^{-2} Q$ is a Markov transition matrix. So,
		\[\Theta^{-2} Q \1_n = \1_n \implies \Theta^{-1} Q \Theta^{-1} \Theta \1_n = \Theta \1_n\]
		and this implies that $\mathscr{L} \theta = \theta$. So, by the Perron-Frobenius Theorem, $\theta$ is the leading eigenvector of $\mathscr{L}$ with eigenvalue 1. 
		
		Let $\mathscr{L}$ have eigenvectors and eigenvalues $(\phi_1, \lambda_1), \dots, (\phi_n, \lambda_n)$,
		where $\phi_1 = \theta / \|\theta\|_2$, $\lambda_1=1$ and $0 \leq \lambda_j \leq 1$ for $j \neq 1$.
		Then,
		\[y\T Qy =  y_\theta\T \mathscr{L} y_\theta = \sum_{\ell=1}^n \lambda_\ell \langle \phi_\ell, y_\theta \rangle^2.\] 
		Keeping only the first order term on the right-hand side, we have
		$$y\T Qy \ge \lambda_1 \langle \phi_1, y_\theta \rangle^2 = \frac{\langle d, \tilde x^2\rangle^2}{\sum_i d_i}.$$
		The desired result follows from the fact that $\sigma^2=y\T Qy+y\T (Q-\diag (Q))y\geq y\T Qy$, since $y$ and $Q$ have non-negative entries.
	\end{proof}
	Applying the bound in the lemma to the delocalization condition and rearranging gives the claim.
\end{proof}

\subsubsection{Conditional CLT in the Bernoulli case}\label{}

In Bernoulli graphs, the conditional mean of $\lambda\test(\tilde x)$ is no longer $\lambda_{\varepsilon P}(\tilde x)$ but takes a conditional form.
Let $\Omega\in\{0,1\}^{n\times n}$ be the adjacency matrix of an Erd\"os--R\'enyi graph with edge probability $\varepsilon$. 
Denote as $\circ$ the element-wise matrix multiplication, 
$[P\circ\Omega]_{ij}:= P_{ij} \Omega_{ij},$
and let $J$ be the matrix filled with 1. 
We define the following random variables: %conditional on $\Omega$:
$$[\tilde A\test^\Omega]_{ij}\overset{ind.}{\sim}\text{Bernoulli}([P\circ\Omega]_{ij}),$$
$$[\tilde A^\Omega]_{ij}\overset{ind.}{\sim}\text{Bernoulli}([P\circ(J-\Omega)]_{ij}).$$
%Observe that the random matrices $\tilde A$ and $\tilde A\test$ are conditionally independent given $\Omega$. 
Observe that the random matrices $\tilde A^\Omega$ and $\tilde A\test^\Omega$ are conditionally independent given $\Omega$. 
Let $\tilde x\in\R^n$ be any vector that only depends on $\tilde A^\Omega$ (e.g., an eigenvector). It follows that $\tilde x$ and $\tilde A\test^\Omega$ are conditionally independent given $\Omega$ as well.

Given any unit vector $\tilde x\in\R^n$, we define the test statistic and sample total variance %conditional on $\Omega$, with 
which will appear in the conditional CLT:
$$\lambda\test^\Omega(\tilde x)= \tilde x\T \tilde A\test^\Omega \tilde x,$$
$$\hat\sigma^2_\Omega = 2(\tilde x^2)\T \tilde A\test^\Omega \tilde x^2 - (\tilde x^2)\T \diag(\tilde A\test^\Omega) \tilde x^2.$$
The conditional expectation of $\lambda^\Omega\test(\tilde x)$ given $\Omega$ is $\lambda_{P\circ\Omega}(\tilde x)=\tilde x\T (P\circ\Omega) \tilde x$. Below, $\tilde x$ and $\Omega$ are technically sequences depending on $n$, but we do not explicitly indicate the dependence on $n$ for notational simplicity.
% This turns out not to be an issue in sparse enough graphs, that is, when $\rho = o(n^{-1/2})$. 
% Indeed, Lemma 2 in the Supplementary materials of~\citep{li_network_2020} shows that $\mathcal{P}_\Omega P$ and $\varepsilon P$ are close in spectral norm -- hence so are the associated quadratic forms. We omit the details. 

% \begin{theorem}[Conditional CLT]
% Assume $\max_{i,j} P_{ij}=
% o(1)
% %o(n^{-1/2})
% $.
% Suppose that, with probability 1, $\tilde x$ satisfies the delocalization condition
% $$\norm{\tilde x}^2_\infty=o(\sigma_\Omega),$$
% where the population total variance $\sigma_\Omega$ is defined by
% $$\sigma^2_\Omega = 2(\tilde x^2)\T (P\circ\Omega-[(P\circ\Omega)]^2) (\tilde x^2)- (\tilde x^2)\T \diag(P\circ\Omega-[(P\circ\Omega)]^2) (\tilde x^2),$$ and that $\sigma_\Omega > 0$ eventually.
% Then, conditionally on $\Omega$, 
% $$\frac{\lambda\test^\Omega(\tilde x)-\lambda_{P\circ\Omega}(\tilde x)}{\hat\sigma_\Omega}\Rightarrow N(0,1),$$
% almost surely.
% \end{theorem} 

\begin{proof}[Proof of Theorem \ref{thm:bernoulli-clt}]
We first show the CLT with the population variance $\sigma^2_\Omega$ (in the denominator), 
\begin{equation}\label{eq:cltpopbern}
    \frac{\lambda\test^\Omega(\tilde x) - \lambda_{P\circ\Omega}(\tilde x)}
    {\sigma_\Omega} \Rightarrow N(0,1)
\end{equation}
by verifying Lyapunov's condition. 
Then, we show the CLT holds when replacing population variance with the sample variance $\hat\sigma_\Omega$ via Slutsky’s Lemma. 
% Finally, we show the CLT holds when replace conditional expectation with population expectation $\varepsilon\lambda_{P}(x)$ by applying Lemma 2 in the supplementary materials of \citet{li_network_2020}.

Since $\tilde A\test^\Omega$ is symmetric, only the triangular elements are independent. That is, for all $i \le j$, $[\tilde A\test^\Omega]_{ij}$ are independent. 
For any $1\le i\le j \le n$, define random variables $X_{ij}=x_ix_j([\tilde A\test^\Omega]_{ij}-[P\circ\Omega]_{ij})(2-\I(i=j))$.
Then, $X_{ij}$ has conditional mean zero and conditional variance ${\tilde x_i^2\tilde x_j^2[P\circ\Omega]_{ij}\left(1-[P\circ\Omega]_{ij}\right)}(2-\I(i=j))^2$. 
Furthermore, the numerator in Equation \eqref{eq:cltpopbern} can be written as the sum of the $X_{ij}$'s, i.e, $\lambda\test^\Omega(\tilde x) - \lambda_{P\circ\Omega}(\tilde x) = \sum_{i\le j}X_{ij}$. The sum of the $X_{ij}$'s conditional variances is equal to $\sigma_\Omega^2$ by definition. 
% \begin{eqnarray*}
%     && \sum_{i \le j} {\tilde x_i^2\tilde x_j^2[P\circ\Omega]_{ij}\left(1-[P\circ\Omega]_{ij}\right)} (2-\I(i=j))^2 \\
%     &=& 4(\tilde x^2)\T (P\circ\Omega-[(P\circ\Omega)]^2) (\tilde x^2)- 3(\tilde x^2)\T \diag(P\circ\Omega-[(P\circ\Omega)]^2) (\tilde x^2) \\
%     &=& \sigma^2_\Omega. 
% \end{eqnarray*}

We verify the fourth moment Lyapunov's condition on the summation of $X_{ij}$ for $i \le j$ (see, e.g.,~\citet[Exercise 3.4.12]{durrett_2019})
\begin{eqnarray*}
    \frac
    { \sum_{i \le j}  \E\left[\left|X_{ij}\right|^4\mid \Omega \right]}
    {\sigma^4_\Omega} &=& \frac
    { \sum_{i \le j}  \E\left[\left( x_ix_j([\tilde A\test^\Omega]_{ij}-[P\circ\Omega]_{ij})(2-\I(i=j))\right)^4\mid \Omega \right]}
    {\sigma^4_\Omega} \\
    &\overset{\text{(i)}}{\leq}&
    \frac{ \sum_{i \le j} (\tilde x_i\tilde x_j)^4 [P\circ\Omega]_{ij}(1-[P\circ\Omega]_{ij}) (2-\I(i=j))^4}
    {\sigma^4_\Omega} \\
    &\le& 
    \frac{4 \|\tilde x\|_\infty^4 \sum_{i \le j} \tilde x_i^2\tilde x_j^2 [P\circ\Omega]_{ij}(1-[P\circ\Omega]_{ij}) (2-\I(i=j))^2}
    {\sigma^4_\Omega} \\
    &=& 
    \frac{4 \|\tilde x\|_\infty^4 }
    {\sigma^2_\Omega} \\
    &\overset{\text{(ii)}}{=}& o(1), 
\end{eqnarray*}
where (i) comes from that the fourth central moment of a Bernoulli variable, $X\sim\text{Bernoulli}(p)$, satisfies $\E(X-p)^4= p(1-p)(p^3 + (1-p)^3) \leq p(1-p)$ %, assuming $0<p\le1$, 
and (ii) is due to the delocalization condition assumed. 
We also used that $\sigma_\Omega > 0$ eventually. This shows Equation \eqref{eq:cltpopbern}.

Via Slutsky's Lemma, we can multiply the ratio in Equation \eqref{eq:cltpopbern} by any sequence that converges to one in probability and the result still holds. We then show that $\sigma_\Omega/\hat \sigma_\Omega$ converges to one in probability.  
Indeed, we have 
\begin{eqnarray*}
    \Var\left(\frac{\hat \sigma^2_\Omega}{\sigma^2_\Omega}\mid \Omega \right) &=& \frac{\Var\left(\sum_{i \le j}\tilde x_i^2\tilde x_j^2[\tilde A\test^\Omega]_{ij}(2-\I(i=j))^2\mid \Omega\right)}{\sigma^4_\Omega}\\
    &=& \frac{\sum_{i \le j}\tilde x_i^4\tilde x_j^4[P\circ\Omega]_{ij}(1-[P\circ\Omega]_{ij})(2-\I(i=j))^4}{\sigma^4_\Omega}\\
    &\le& \frac{4\|\tilde x\|_\infty^4\sum_{i \le j}\tilde x_i^2\tilde x_j^2[P\circ\Omega]_{ij}(1-[P\circ\Omega]_{ij})(2-\I(i=j))^2}{\sigma^4_\Omega}\\
    &=& \frac{4\|\tilde x\|_\infty^4}{\sigma^2_\Omega}\\
    &\overset{\text{(i)}}{=}& o(1)
\end{eqnarray*}
where (i) is the delocalization assumption.
So, by Chebyshev's inequality, $\hat \sigma^2_\Omega/\sigma^2_\Omega$ converges in probability to its expectation given $\Omega$.  Note that
\begin{eqnarray*}
\E \left(\frac{\hat \sigma^2_\Omega}{ \sigma^2_\Omega}\mid \Omega\right) &=& 
\frac{
2(\tilde x^2)\T %\tilde 
(P\circ\Omega) \tilde x^2 - (\tilde x^2)\T \diag(P\circ\Omega) \tilde x^2}{2(\tilde x^2)\T %\tilde 
(P\circ\Omega-P\circ P\circ\Omega) \tilde x^2 - (\tilde x^2)\T \diag(P\circ\Omega-P\circ P\circ\Omega) \tilde x^2}\\
&\le&\frac{1}{1-\max_{i,j} P_{ij}}\\
&\overset{\text{(i)}}{=}&\frac{1}{1-o(
1
%n^{-1/2}
)}
\end{eqnarray*}
where (i) is due to the graph sparsity assumption. 
Similarly $\E \left(\frac{\hat \sigma^2_\Omega}{ \sigma^2_\Omega}\mid \Omega\right) \geq 1$.
Since taking the inverse and the square root is continuous transformation.  So, the ratio
$\sigma_\Omega/\hat \sigma_\Omega$ 
%\[\frac{s_n}{\hat \sigma} \rightarrow 1,\]
converges in probability to one, conditionally on $\Omega$. 
% This gives that
% \begin{equation}
% \frac{\lambda\test^\Omega(x) - \lambda_{P\circ\Omega}(x)}{\hat\sigma_\Omega} \Rightarrow N(0,1).
% \end{equation}
\end{proof}

\subsection{Proof of consistency}
\label{sec:consistency-proof}

This section details the proof of Theorem \ref{thm:consistency}. 

\paragraph{Notation}
We use the notation $[n]$ to refer to $\{1,2,...,n\}$. For any real numbers $a, b \in \mathbb R$, we denote $a \lor b = \max\{a, b\}$ and $a \land b = \min\{a, b\}.$ For non-negative $a_n$ and $b_n$ that depend on $n$, we write $a_n \lesssim b_n$ to mean $a_n \leq C b_n$ for some constant $C>0$, and similarly for $a_n \gtrsim b_n$. Also we write $a_n =O(b_n)$ to mean $a_n \leq C b_n$ for some constant $C>0$. The matrix spectral norm is
$\|M\| = \max_{\|x\|_2 = 1} \|M x\|_2$, the matrix max-norm is
$\|M\|_{\max} = \max_{i,j} |M_{ij}|$, 
and the matrix $2 \to \infty$ norm is
$\|M\|_{2, \infty} 
= \max_i \|M_{i,\cdot}\|_2$. 

\subsubsection{Modified algorithm}

Algorithm~\ref{alg:eigcv-mod} is used in the consistency result.
\begin{algorithm}
    \setstretch{1.35}
	\DontPrintSemicolon
	\caption{Modified eigenvalue cross-validation}
	\label{alg:eigcv-mod}
	\KwData{Adjacency matrix $A \in \mathbb{N}^{n \times n}$, edge splitting probability $\varepsilon \in (0,1)$\;}
	\textbf{Procedure} \texttt{EigCV'}$(A,\varepsilon,\km)$\textbf{:}\;
	\Indp
	1. Obtain $\tilde{A}$, $\tilde{A}\test\leftarrow$\texttt{ES}$(A,\varepsilon)$ from splitting $A$ and set $S = \emptyset$. \tcp*{Algorithm \ref{alg:edge_splitting}}
	
	2. \For{$k = 2, \dots, \km$}{
		a - compute $\tilde\lambda\test (\tilde x_k) = \tilde x_k\T \tilde A\test \tilde x_k$ and $\tilde \sigma_k = \sqrt{\frac{\varepsilon}{1-\varepsilon} (\tilde x_k^2)\T \left(2\tilde A-\diag(\tilde A)\right) \tilde x_k^2}$\;
		b - if 
		$$
		\norm{\tilde{x}_k}^2_\infty\leq\min\left\{\frac{\tilde\sigma^2_k}{\log^2 n}, \frac{\log n}{n}\right\},
		$$
		add $k$ to $S$ and compute
		$$T_k =  \frac{\tilde \lambda\test (\tilde x_k)}{\tilde \sigma_k}.$$
	}
	\Indm
	\KwResult{The graph dimensionality estimate: $\hat{K} = |\{T_k \geq \sqrt{n \log n} : k \in S\}|$. }
\end{algorithm}

\subsubsection{Some concentration bounds}
\label{sec:proof_prelim}

We will need several concentration bounds for Poisson random variables. We derive them from standard results. 

We begin with a simple moment growth bound.
\begin{lemma}[Poisson moment growth]
	\label{lem:poisson-growth}
	Let $Z$ be a Poisson random variable with mean $\mu \leq 1$. There exists a universal constant $C > 0$ such that, for all integers $p \geq 2$,
	$$
	\E[|Z-\mu|^p] \leq C \mu \, \frac{p!}{2} \left(\frac{e}{2}\right)^{p-2}.
	$$
\end{lemma}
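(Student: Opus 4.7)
The plan is to bound the $p$-th central moment of a Poisson($\mu$) random variable by exploiting the very simple cumulant structure of $Z-\mu$: all cumulants equal $\mu$ from order $2$ onwards, while $\kappa_1=0$. By the moment--cumulant formula,
\[
\E[(Z-\mu)^p] \;=\; \sum_{\pi \in \Pi_p^*} \mu^{|\pi|},
\]
where $\Pi_p^*$ denotes the set of partitions of $[p]$ all of whose blocks have size at least $2$ (singleton blocks kill a term because $\kappa_1=0$). Since $p \geq 2$, every such partition has $|\pi|\geq 1$, and using $\mu \leq 1$ gives $\mu^{|\pi|} \leq \mu$. Hence for even $p$,
\[
\E[|Z-\mu|^p] \;=\; \E[(Z-\mu)^p] \;\leq\; \mu \cdot a_p, \qquad a_p := |\Pi_p^*|.
\]

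The second step is to control the purely combinatorial sequence $a_p$. I would use its exponential generating function
\[
\sum_{p \geq 0} \frac{a_p}{p!}\, x^p \;=\; \exp(e^x - 1 - x),
\]
and extract coefficients via Cauchy's inequality, which for any radius $r>0$ yields $a_p/p! \leq r^{-p} \sup_{|z|=r}\lvert \exp(e^z-1-z)\rvert \leq r^{-p} \exp(e^r + r - 1)$. Choosing $r = 2/e$ produces the desired rate, $a_p \leq C'\cdot p!\cdot(e/2)^p$ for a universal constant $C'$; absorbing the extra $(e/2)^2$ into the constant gives $\E[(Z-\mu)^p] \leq C\,\mu\,(p!/2)(e/2)^{p-2}$, which is the even-$p$ case.

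For odd $p \geq 3$ the identity $\E[|Z-\mu|^p]=\E[(Z-\mu)^p]$ no longer holds, so I would close the gap by Cauchy--Schwarz:
\[
\E[|Z-\mu|^p]^2 \;\leq\; \E[(Z-\mu)^{p-1}]\cdot \E[(Z-\mu)^{p+1}],
\]
both of which are even central moments and hence controlled by the previous paragraph. The identity $\sqrt{(p-1)!(p+1)!} = (p-1)!\sqrt{p(p+1)} \leq \sqrt{2}\,p!$ restores the $p!$ factor on the right-hand side at the cost only of a harmless multiplicative constant, giving the claimed bound for odd $p$ as well.

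The main obstacle is extracting the correct exponential rate $(e/2)^p$ in $a_p$; this is what the generating function identity is doing for us, and crucially the key cancellation $e^x - 1 - x = O(x^2)$ at the origin (reflecting the fact that blocks must have size at least $2$) is precisely what allows one to push the radius out to $r=2/e$ rather than some worse value. Alternative routes --- Bennett-type tail bounds followed by layer-cake integration, or direct Stirling-number estimates --- reach the same conclusion but are considerably fussier in tracking constants. The Cauchy integral at $r=2/e$ is the crispest way to pin down the rate, and the remaining work is routine algebra.
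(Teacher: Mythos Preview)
Your argument is correct and genuinely different from the paper's. The paper proceeds by direct manipulation of the Poisson probability mass function: it splits off the $z=0$ and $z=1$ terms (each contributing at most $\mu$), bounds the tail $z\geq 2$ by $\mu^2 e$ times the $p$-th raw moment of a Poisson$(1)$ variable, and then invokes a sharp moment bound of Ahle (2021) to obtain $\E[|Z-\mu|^p]\leq C'\mu\,(p/2)^p$, finally passing to the stated form via Stirling. Your route instead exploits the cumulant structure of the centered Poisson to reduce the even-moment bound to the purely combinatorial count $a_p$ of partitions with no singleton blocks, then controls $a_p$ through its exponential generating function and Cauchy's coefficient inequality, and closes the odd case by Cauchy--Schwarz interpolation. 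Your approach is self-contained (no external citation for the key growth rate) at the price of importing the moment--cumulant formula and a complex-analytic coefficient bound; the paper's approach is more elementary in spirit but outsources the crucial estimate to the literature. One harmless slip: the bound on $\sup_{|z|=r}|\exp(e^z-1-z)|$ should read $\exp(e^r-1-r)$ rather than $\exp(e^r+r-1)$, since $|e^z-1-z|\leq\sum_{k\geq 2}r^k/k!=e^r-1-r$; this only affects the value of the universal constant and not the argument.
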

\begin{proof}
	We show that 
	\begin{equation}\label{eq:moment-growth}
		\E[|Z-\mu|^p] \leq C' \mu \, \left(\frac{p}{2}\right)^p,
	\end{equation}
	for some constant $C' > 0$. The claim then follows
	from Stirling's formula in the form
	$$
	\sqrt{2 \pi} p^{p+1/2} e^{-p} \leq p!,\qquad \forall p \geq 1.
	$$
	By the definition of the Poisson distribution
	and using the fact that $0 \leq \mu \leq 1$ by assumption, we have
	\begin{eqnarray*}
		\E[|Z-\mu|^p]
		&=& \sum_{z \geq 0} |z - \mu|^p e^{-\mu} \frac{\mu^z}{z!}\\
		&=& |\mu|^p e^{-\mu} + |1-\mu|^p e^{-\mu} \mu +  \sum_{z \geq 2} |z - \mu|^p e^{-\mu} \frac{\mu^z}{z!}\\
		&\leq& 2 \mu + \mu^2 e \left\{\sum_{z \geq 0} z^p \frac{e^{-1}}{z!}\right\}.
	\end{eqnarray*}
	The term in curly brackets on the last line is the $p$-th moment of a Poisson random variable with mean 1, which is $\leq C'' \left(\frac{p}{2}\right)^p$ for some constant $C'' > 0$ by \citet[Theorem 1]{ahle_sharp_2021}. Eq.~\eqref{eq:moment-growth} follows.
\end{proof}

The moment growth bound implies 
concentration for linear combinations
of independent Poisson random variables. 
\begin{lemma}[General Bernstein for Poisson variables]\label{lem:bernstein-poisson}
	Let $Z_1,\ldots,Z_m$ be independent
	Poisson random variables with respective means $\mu_1,\ldots,\mu_m \leq 1$. For any $\balpha = (\alpha_1,\ldots,\alpha_m) \in \R^m$ and
	$t > 0$,
	$$
	\Prob\left[\sum_{i=1}^m \alpha_i (Z_i - \mu_i) \geq t\right]
	\leq \exp\left(- \frac{t^2}{C' \mumax \|\balpha\|_2^2 + C'' \|\balpha\|_\infty t}\right),
	$$
	where $\mumax = \max_i \mu_i$ and $C', C'' > 0$ are universal constants.
\end{lemma}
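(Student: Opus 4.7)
The plan is to reduce this to a direct application of the classical Bernstein inequality for sums of independent centered random variables satisfying a moment growth condition. The preceding Lemma on Poisson moment growth is precisely designed to supply that moment condition, so the main task is a clean bookkeeping argument to assemble the individual bounds into a joint one.

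First, I would rescale. Set $Y_i = \alpha_i(Z_i - \mu_i)$, so that the $Y_i$ are independent, centered, and the quantity of interest is $\sum_i Y_i$. Applying the Poisson moment growth lemma together with $|Y_i|^p = |\alpha_i|^p |Z_i - \mu_i|^p$ yields, for every integer $p \geq 2$,
\begin{equation*}
\E\!\left[|Y_i|^p\right] \;\leq\; C\, \mu_i\, \alpha_i^2 \cdot \frac{p!}{2} \cdot \left(\frac{|\alpha_i|\, e}{2}\right)^{p-2}.
\end{equation*}
Summing over $i$ and using $\mu_i \leq \mumax$ together with $|\alpha_i| \leq \|\balpha\|_\infty$ in the geometric factor, I obtain a joint Bernstein moment bound
\begin{equation*}
\sum_{i=1}^{m} \E\!\left[|Y_i|^p\right] \;\leq\; \frac{p!}{2}\, v\, b^{p-2}, \qquad v := C\, \mumax\, \|\balpha\|_2^2,\quad b := \frac{e}{2} \|\balpha\|_\infty.
\end{equation*}

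With this moment condition in hand, the standard Bernstein inequality (see, e.g., Boucheron--Lugosi--Massart, Theorem 2.10, or Bennett's bound) asserts that for any $t>0$,
\begin{equation*}
\Prob\!\left[\sum_{i=1}^{m} Y_i \geq t\right] \;\leq\; \exp\!\left(-\frac{t^2}{2(v + b t)}\right).
\end{equation*}
Substituting the expressions for $v$ and $b$ and absorbing the harmless constants $2$ and $e/2$ into two universal constants $C'$ and $C''$ yields the claimed inequality.

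There is no real obstacle: the only technical point is that the classical Bernstein inequality is typically stated for all $p \geq 2$ integer moments with the factorial growth condition above, which is exactly what the Poisson moment growth lemma supplies (with the constant $C$ pushed into $C'$). The restriction $\mu_i \leq 1$ is also inherited from that lemma and is consistent with the sparse regime in which this bound will be applied. If one preferred a self-contained derivation rather than invoking the classical result as a black box, one could equivalently bound the moment generating function $\E[e^{\lambda Y_i}]$ by expanding it as a power series, using the moment condition term-by-term to produce $\E[e^{\lambda Y_i}] \leq \exp(v_i \lambda^2 / (2(1 - b\lambda)))$ for $0 < \lambda < 1/b$, multiplying across $i$, and optimizing over $\lambda$ in the Markov--Chernoff bound; this yields the same exponent with explicit constants.
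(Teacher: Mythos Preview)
Your proposal is correct and follows essentially the same route as the paper: both use the Poisson moment growth lemma to verify the Bernstein moment condition and then invoke the standard Bernstein inequality from Boucheron--Lugosi--Massart (the paper cites Corollary~2.11 with positive parts, you cite Theorem~2.10 with absolute moments, which is immaterial here). The only cosmetic difference is that the paper sets $v = \max\{1,C\}\,\mumax\|\balpha\|_2^2$ to be safe about the $p=2$ case, whereas you take $v = C\,\mumax\|\balpha\|_2^2$; since the moment growth lemma necessarily has $C \geq 1$ (the $p=2$ case forces it), this is harmless.
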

\begin{proof}
	We use~\citet[Corollary 2.11]{boucheron_concentration_2013}. Observe that
	$$
	\sum_{i=1}^m \E[\alpha_i (Z_i - \mu_i)^2]
	= \sum_{i=1}^m \alpha_i^2 \mu_i
	\leq \mumax \|\balpha\|_2^2.
	$$
	Moreover, by Lemma~\ref{lem:poisson-growth} and Stirling's formula, 
	\begin{eqnarray*}
		\sum_{i=1}^m \E[\alpha_i^p(Z_i - \mu_i)_+^p]
		&\leq& \sum_{i=1}^m \alpha_i^p C \mu_i \, \frac{p!}{2} \left(\frac{e}{2}\right)^{p-2}\\
		&\leq&  C \mumax \|\balpha\|_2^2 \frac{p!}{2} \left(\frac{e}{2} \|\balpha\|_\infty\right)^{p-2}\\
		&\leq& \frac{p!}{2} v
		\left(\frac{e}{2} \|\balpha\|_\infty\right)^{p-2},
	\end{eqnarray*}
	where we define
	$$
	v := \max\left\{1,C \right\} \mumax \|\balpha\|_2^2.
	$$
	The claim then follows from ~\citet[Corollary 2.11]{boucheron_concentration_2013}.
\end{proof}

The moment growth bound also implies spectral norm concentration.
\begin{lemma}[Spectral norm of Poisson graph]
	\label{lem:concentration_p}
	Suppose $B\in\mathbb R^{n\times n}$ is the adjacency matrix of a Poisson graph with mean matrix $Q$ satisfying $Q_{ij} \leq 1$ for all $i,j$. 
	Let $q_{\max} = \max_{ij} Q_{ij}$
	and assume that $n q_{\max} \geq c_0\log^{\xi_0} n$ for some $\xi_0> 2$.
	%in Definition\eqref{def:poisson} that satisfies Assumption~\ref{assump:matrix_SSNR}. 
	Then, for any $\delta > 0$, there exists a constant $C''' >0$ such that
	\begin{align*}
		\| B-Q \| \leq C''' \sqrt{n q_{\max} \log n},
	\end{align*}
	with probability at least $1-n^{-\delta}$.
\end{lemma}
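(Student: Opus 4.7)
The plan is to bound $\|B-Q\|$ via a truncation step followed by a symmetric matrix Bernstein inequality applied to the independent-entry decomposition of $B-Q$. Writing
\[
B-Q=\sum_{i\le j} Y_{ij},
\]
where $Y_{ij}=(B_{ij}-Q_{ij})(e_ie_j\T+e_je_i\T)$ for $i<j$ and $Y_{ii}=(B_{ii}-Q_{ii})e_ie_i\T$, exhibits $B-Q$ as a sum of independent centered symmetric random matrices. The only obstacle to invoking the bounded-summands version of matrix Bernstein is that $B_{ij}$ is Poisson and hence unbounded; every other ingredient behaves like the sparse Erd\H{o}s--R\'enyi case.

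First I would truncate. Applying a single-variable Poisson tail bound, deducible from Lemma~\ref{lem:poisson-growth} by Markov's inequality on a sufficiently high moment (or equivalently from Lemma~\ref{lem:bernstein-poisson} with one summand), and using $Q_{ij}\le 1$, the factorial decay $\Prob(B_{ij}\ge L)\le (e/L)^{L}$ gives, for $L=K(\delta)\log n$ with $K(\delta)$ large enough,
\[
\Prob\!\Bigl(\max_{i,j}|B_{ij}-Q_{ij}|>L\Bigr)\le n^{-\delta-1},
\]
by a union bound over the $n^{2}$ entries. On the complementary event $\mathcal E$ the summands satisfy $\|Y_{ij}\|\le L$, and the mean shift caused by replacing $Y_{ij}$ by its truncation is negligible in operator norm (controlled by the same Poisson-tail estimate).

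Next I would compute the matrix variance. A direct computation using $\E(B_{ij}-Q_{ij})^{2}=Q_{ij}$ yields
\[
\Bigl\|\sum_{i\le j}\E[Y_{ij}^{2}]\Bigr\|=\|\mathrm{diag}(d)\|\le nq_{\max},\qquad d_i=\sum_j Q_{ij}.
\]
Tropp's symmetric matrix Bernstein inequality applied to the truncated summands then gives, on $\mathcal E$,
\[
\Prob\!\Bigl(\bigl\|\textstyle\sum_{i\le j}Y_{ij}\bigr\|\ge t\Bigr)\le 2n\exp\!\Bigl(-\tfrac{t^{2}/2}{nq_{\max}+Lt/3}\Bigr).
\]
Substituting $t=C'''\sqrt{nq_{\max}\log n}$ with $C'''$ depending on $\delta$, and combining with the truncation event via union bound, produces the claimed bound with probability at least $1-n^{-\delta}$.

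The main technical hurdle is ensuring that the sub-exponential term $Lt/3$ in the denominator does not swamp the Gaussian variance term $nq_{\max}$ at the scale $t\asymp\sqrt{nq_{\max}\log n}$. This amounts to requiring $nq_{\max}\gtrsim L^{2}\log n$, which the sparsity assumption $nq_{\max}\ge c_0\log^{\xi_0}n$ with $\xi_0>2$ supplies once $L$ is taken as $O(\log n/\log\log n)$, exploiting the factorial rather than merely exponential decay of the Poisson tail. A cleaner alternative that also yields the threshold $\xi_{0}>2$ is to invoke a Bandeira--van Handel style bound for Hermitian matrices with independent entries, specialized to the truncated Poisson summands, which gives $\E\|B-Q\|\lesssim \sqrt{nq_{\max}}+L\sqrt{\log n}$ directly.
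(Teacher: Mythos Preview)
Your decomposition and overall strategy match the paper's, but there is a genuine arithmetic gap in the truncation route. You correctly identify that the bounded-summands matrix Bernstein requires $nq_{\max}\gtrsim L^{2}\log n$. However, with $L=O(\log n/\log\log n)$ this reads $nq_{\max}\gtrsim \log^{3}n/(\log\log n)^{2}$, and the hypothesis $nq_{\max}\ge c_{0}\log^{\xi_{0}}n$ with $\xi_{0}>2$ does \emph{not} imply this when $\xi_{0}\in(2,3)$: one would need $(\log\log n)^{2}\gtrsim \log^{3-\xi_{0}}n$, which fails for any $\xi_{0}<3$. So the truncation-plus-bounded-Bernstein argument, as written, only proves the lemma under the stronger condition $\xi_{0}\ge 3$.

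The paper sidesteps this by avoiding truncation altogether. It uses the \emph{sub-exponential} (moment-growth) version of matrix Bernstein, namely \citet[Theorem~6.2]{tropp2012user}, together with the Poisson moment bound of Lemma~\ref{lem:poisson-growth}, $\E|B_{ij}-Q_{ij}|^{p}\le Cq_{\max}\,\tfrac{p!}{2}(e/2)^{p-2}$. The point is that this yields a sub-exponential scale $R=e/2=O(1)$ in the Bernstein denominator, rather than the truncation level $R=L\asymp\log n/\log\log n$ you end up with. With $R=O(1)$ the condition for the Gaussian term to dominate becomes merely $nq_{\max}\gtrsim\log n$, comfortably covered by $\xi_{0}>2$. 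Your Bandeira--van Handel alternative would in fact close the gap (it only needs $nq_{\max}\gtrsim L^{2}$, which $\xi_{0}>2$ does supply), but you present it as an afterthought; the direct moment-growth Bernstein is both simpler and what the paper actually does.
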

\begin{proof}
	We use~\citet[Theorem 6.2]{tropp2012user}.
	We first rewrite the matrix as a finite sum of independent symmetric random matrices
	\begin{align*}
		B-Q=\sum_{i=1}^n\sum_{j=i}^n(B_{ij}-Q_{ij})E^{i,j},
	\end{align*}
	where $E^{i,j}\in \mathbb R^{n\times n}$ with $E^{i,j}_{ij}=E^{i,j}_{ji}=1$ and $0$ elsewhere. 
	
	Observe that, for $i \neq j$,
	\begin{align*}
		(E^{i,j})^p =
		\left\{
		\begin{array}{ll}
			E^{i,i}+E^{j,j}  & \mbox{if }p= 2,4,\ldots \\
			E^{i,j} & \mbox{if } p=3,5,\ldots
		\end{array}
		\right.
	\end{align*}
	while, if $i = j$,
	\begin{align*}
		(E^{i,i})^p = E^{i,i}, \qquad p \geq 2.
	\end{align*}
	
	Let $X^{i,j}:=(B_{ij}-Q_{ij})E^{i,j}$. %be the matrices in lemma~\ref{lem:matrix_bern}, 
	Then $\mathbb EX^{i,j}=0$. 
	Moreover, for $i \neq j$ and $p = 2, 4, \ldots$, we have 
	$$
	\mathbb E (X^{i,j})^p
	= 
	\E(B_{ij}-Q_{ij})^p \, (E^{i,i}+E^{j,j})
	\preceq 
	C q_{\max} 
	\frac{p!}{2}
	\left(\frac{e}{2}\right)^{p-2}(E^{i,i}+E^{j,j}),
	$$
	by Lemma~\ref{lem:poisson-growth}.
	Similarly, for $i \neq j$ and $p = 3, 5, \ldots$,
	$$
	\mathbb E (X^{i,j})^p
	= 
	\E(B_{ij}-Q_{ij})^p \, E^{i,j}
	\preceq 
	C q_{\max} 
	\frac{p!}{2}
	\left(\frac{e}{2}\right)^{p-2}(E^{i,i}+E^{j,j}),
	$$
	where we used the fact that the matrix $(\begin{smallmatrix}1 & \alpha\\ \alpha & 1\end{smallmatrix})$ has eigenvalues $1 + \alpha, 1- \alpha \geq 0$ when $|\alpha| \leq 1$. When $i = j$, 
	$$
	\mathbb E (X^{i,i})^p
	= 
	\E(B_{ii}-Q_{ii})^p \, E^{i,i}
	\preceq 
	C q_{\max} 
	\frac{p!}{2}
	\left(\frac{e}{2}\right)^{p-2}(2 E^{i,i}).
	$$
	
	Define
	$$
	(\Sigma^2)^{i,j}
	:= C q_{\max} 
	(E^{i,i}+E^{j,j}).
	$$
	and
	\begin{eqnarray*}
		\sigma^2
		&=&
		\norm{\sum_{i=1}^n\sum_{j=i}^n (\Sigma^2)^{i,j}}
		=
		\norm{ C q_{\max} \sum_{i=1}^n\sum_{j=i}^n(E^{i,i}+E^{j,j})}
		\leq 2 C q_{\max} n,
	\end{eqnarray*}
	where the inequality holds since %$\sum_{j=1}^n\sum_{i=1}^jE^{i,i}$ and $\sum_{i=1}^n\sum_{j=i}^nE^{j,j}$ are 
	%both 
	$\sum_{i=1}^n\sum_{j=i}^n (E^{i,i}+E^{j,j})$
	is a diagonal matrix with maximum entry $2n$. 
	Then, 
	%by the concentration in lemma~\ref{lem:matrix_bern},
	by~\citet[Theorem 6.2]{tropp2012user},
	\begin{eqnarray*}
		\Prob\left[
		\norm{B - Q} \geq t
		\right]
		&=&
		\Prob\left[
		\norm{\sum_{i=1}^n\sum_{j=i}^n X^{i,j}} \geq t
		\right]\\
		&\leq& n
		\exp\left(\frac{-t^2/2}{\sigma^2 + (e/2) t}\right)\\
		&\leq& n
		\exp\left(\frac{-t^2/2}{2 C q_{\max} n + (e/2) t}\right).
	\end{eqnarray*}
	Taking $t = C''' \sqrt{n q_{\max} \log n}$ and using the fact that
	$nq_{\max}\geq c_0\log^{\xi_0} n$, $\xi_0> 2$, gives the result.
	%\begin{eqnarray*}
	%    \mathbb P\left(\norm{A-P}\geq C'\sqrt{n\rho_n\log n}\right)&\leq& n\cdot\exp\left(-\frac{C'^2n\rho_n\log n/2}{2n\rho_n+\sqrt{n\rho_n\log n}\cdot e/2}\right)\\
	%    & \leq& \exp\left(-\frac{C'^2n\rho_n\log n/2}{4n\rho_n}\right)\\
	%    &=& n^{-\frac{C'^2}{8}}
	%\end{eqnarray*}
	%where the second inequality holds as $n\rho_n\geq c_0\log^{\xi_0} n$, $\xi_0> 2$ in Assumption~\ref{assump:matrix_SSNR}.
\end{proof}

\subsubsection{Key properties of sample eigenvectors}

Consider the adjacency matrix $A$ of a Poisson
graph satisfying Assumptions~\ref{assump:matrix_SSNR} and~\ref{assump:coherence}.
%satisfying assumptions~\ref{assump:matrix_SSNR}-\ref{assump:coherence} and 
Fixing $\varepsilon \in (0,1)$, let $\tilde A$ and $\tilde A\test$ be as in Section~\ref{sec:splitting}.
%Let $\tilde A$ be the adjacency matrix after one random split in Algorithm 2. 
Let 
$P
= \rho_n P^0
=\mathbb E A=\sum_{j=1}^K\lambda_j x_j\T x_j$
with $\lambda_1\geq \lambda_2\geq\cdots\geq \lambda_K>0$. Let $\{\tilde{ x}_l\}_{l=1}^{\km}$ be the collection of eigenvectors associated with eigenvalues $\{\tilde\lambda_l\}_{l=1}^{\km}$ of $\tilde A$. Without loss of generality, we assume $\tilde\lambda_1\geq\tilde\lambda_2\geq\cdots\geq\tilde\lambda_{\km}$. 
Define 
\begin{equation}
	\label{eq:def-uhat-u}
	\hat U=(\tilde{ x}_1,\cdots,\tilde{ x}_K) \qquad \text{and} \qquad U=({x}_1,\cdots,{x}_K)\in\mathbb R^{n\times K}.
\end{equation}
We will need the following event:
\begin{align*}
	&\mathcal E^0=\left\{\norm{\tilde A-(1-\varepsilon)P}\leq C'''\sqrt{n\rho_n\log n}\right\}.
\end{align*}
Applying Lemma~\ref{lem:concentration_p} with $B := \tilde{A}$ and $Q := (1-\varepsilon) P$ shows that
$\mathcal E^0$ holds with high probability.

\paragraph{Concentration of signal eigenspace}
First, we use a version of the Davis-Kahan theorem to
show that the signal sample eigenvectors are close to
the signal population eigenspace.
%Next, We show the concentration of the sample eigenvectors by using Davis-Kahan theorem. The following theorem is a variant of the Davis-Kahan theorem proved in \citet[Theorem 2]{yu2015useful}.
\begin{lemma}[Signal eigenspace]
	\label{lem:eignvector_2}
	Under 
	%Assumption \ref{assump:matrix_SSNR} and condition on 
	event $\mathcal E^0$, there exists an orthonormal matrix $O\in \mathbb R^{K\times K}$ such that, for all $k\in[K]$,
	\begin{align*}
		\Vert\tilde{y}_k-x_k\Vert_2=O\left(\sqrt{\frac{\log n}{n\rho_n}}\right), \qquad \Vert\tilde{x}_k-y_k\Vert_2=O\left(\sqrt{\frac{\log n}{n\rho_n}}\right),
	\end{align*}
	where 
	$$
	\tilde{y}_l=\left(\hat UO\right)_{\cdot l}=\left(\sum_{i=1}^K(\tilde{x}_i)_jO_{il}\right)_{j=1}^n,\quad y_l=\left(UO\T\right)_{\cdot l}=\left(\sum_{i=1}^K(x_i)_jO_{li}\right)_{j=1}^n.
	$$
	%are the $l$th column of transfered subspaces $UO\T$ and $\hat UO$ respectively.
	Moreover, for all $k\in[K]$, $s\in[K]$, and $t\in[\km]\setminus[K]$,
	\begin{align*}
		&\langle x_s,y_k\rangle=O_{ks},\qquad \langle\tilde{x}_t,\tilde {y}_k\rangle=0.
	\end{align*}
\end{lemma}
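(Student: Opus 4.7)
The plan is to apply a Davis-Kahan type theorem to the top-$K$ eigenspaces of $\tilde A$ and of its expectation $\mathbb E \tilde A = (1-\varepsilon)P$. First, under event $\mathcal E^0$ we have $\|\tilde A - (1-\varepsilon)P\| \leq C'''\sqrt{n\rho_n \log n}$. Combined with Assumption~\ref{assump:matrix_SSNR} (which gives $\lambda_K \geq \psi_1^{-1}\psi_1' n\rho_n$) and the fact that $P$ has rank $K$, the $K$-th eigengap of the population matrix $(1-\varepsilon)P$ is
$$(1-\varepsilon)\lambda_K - (1-\varepsilon)\lambda_{K+1} = (1-\varepsilon)\lambda_K \gtrsim n\rho_n.$$
The Yu-Wang-Samworth variant of the Davis-Kahan $\sin\Theta$ theorem then produces an orthogonal matrix $O \in \mathbb R^{K \times K}$ such that
$$\|\hat U O - U\|_F \;\leq\; \frac{2^{3/2}\sqrt{K}\,\|\tilde A - (1-\varepsilon)P\|}{(1-\varepsilon)\lambda_K} \;\lesssim\; \sqrt{\frac{\log n}{n\rho_n}}.$$

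The two column-wise bounds follow immediately. Extracting the $k$-th column gives $\|\tilde y_k - x_k\|_2 \leq \|\hat U O - U\|_F$, which yields the first estimate. For the second, I use that right-multiplication by the orthogonal matrix $O^T$ is an isometry in Frobenius norm: $\|\hat U - UO^T\|_F = \|(\hat U O - U)O^T\|_F = \|\hat U O - U\|_F$, so $\|\tilde x_k - y_k\|_2 \leq \|\hat U - UO^T\|_F \lesssim \sqrt{\log n/(n\rho_n)}$.

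The inner product identities are pure bookkeeping once $O$ is fixed. Writing $y_k = UO^T e_k$ and using $U^T U = I_K$ gives $\langle x_s, y_k\rangle = e_s^T U^T U O^T e_k = (O^T)_{sk} = O_{ks}$. For the second identity, note that $\tilde x_t$ with $t > K$ is orthogonal to each $\tilde x_i$ with $i \leq K$ because they are distinct-index eigenvectors of the symmetric matrix $\tilde A$, and $\tilde y_k = \sum_{i=1}^K \tilde x_i O_{ik}$ is a linear combination of these, so $\langle \tilde x_t, \tilde y_k\rangle = 0$.

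The only real obstacle is selecting the correct form of the Davis-Kahan theorem that delivers a \emph{single} orthogonal rotation $O$ aligning the full $K$-dimensional signal eigenbasis rather than just controlling projectors; using the Yu-Wang-Samworth formulation handles this directly. Everything else follows from the operator norm bound in $\mathcal E^0$ and the signal strength lower bound in Assumption~\ref{assump:matrix_SSNR}; in particular, the coherence Assumption~\ref{assump:coherence} is not needed for this lemma and is reserved for subsequent delocalization arguments.
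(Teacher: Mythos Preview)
Your proof is correct and follows essentially the same approach as the paper: apply the Yu--Wang--Samworth variant of Davis--Kahan to $\tilde A$ versus $(1-\varepsilon)P$ under $\mathcal E^0$, use Assumption~\ref{assump:matrix_SSNR} to lower bound the eigengap $\lambda_K\gtrsim n\rho_n$, extract the column-wise bounds from the Frobenius bound (using the orthogonal invariance $\|\hat UO-U\|_F=\|\hat U-UO^T\|_F$), and read off the inner-product identities from the orthonormality of the columns of $U$ and $\hat U$. Your observation that Assumption~\ref{assump:coherence} is not used here is also correct.
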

\begin{proof}
	We use the variant of the Davis-Kahan theorem in~\citet[Theorem 2]{yu2015useful}.
	Under $\mathcal E^0$, $\norm{\tilde A-(1-\varepsilon)P}= O(\sqrt{n\rho_n\log n})$. By~\citet[Theorem 2]{yu2015useful},
	%Let $\tilde Q:=(1-\varepsilon)P$. Applying the Davis-Kahan theorem (Theorem \ref{thm:Davis-Kanhan for Stat}) on $\tilde A$ with $r=1$ and $s=K$, 
	there exists an orthonormal matrix $O\in \mathbb R^{K\times K}$ such that, for all $l\in[K]$,
	$$
	\Vert\tilde{y}_l-x_l\Vert_2\leq \Vert\hat UO-U\Vert_{\F}
	=O\left( \frac{\Vert\tilde A-(1-\varepsilon)P\Vert}{\lambda_K}\right)
	=O\left(\sqrt{\frac{\log n}{n\rho_n}}\right),$$
	and
	$$\Vert\tilde{x}_l-y_l\Vert_2\leq \Vert\hat U-UO\T\Vert_{\F}=\Vert(\hat UO-U)O\T\Vert_{\F}=\Vert\hat UO-U\Vert_{\F}=O\left(\sqrt{\frac{\log n}{n\rho_n}}\right),$$
	where we 
	%apply $\Vert\tilde A-\tilde Q\Vert=O(\sqrt{n\rho_n\log n})$ under event $\mathcal E^0$ and 
	used $\lambda_K\geq \psi_1^{-1}\psi_1'n\rho_n$, which holds under Assumption~\ref{assump:matrix_SSNR}.
	
	By the orthonormality of $\{x_l\}_l$ and $\{\tilde{x}_l\}_l$, we have for $s\in[K]$,
	$$
	\langle x_s,y_k\rangle
	=\sum_{l=1}^n(x_s)_l\left(\sum_{i=1}^K(x_i)_l O_{ki}\right)
	=\sum_{i=1}^K\left(\sum_{l=1}^n(x_s)_l(x_i)_l\right)O_{ki}
	=O_{ks},
	$$
	and for $t\in[\km]\setminus[K]$,
	$$
	\langle\tilde{x}_t,\tilde {y}_k\rangle
	=\sum_{l=1}^{n}(\tilde{x}_t)_l\left(\sum_{i=1}^K(\tilde{x}_i)_lO_{ik}\right)
	=\sum_{i=1}^K O_{ik}\left(\sum_{l=1}^{n}(\tilde{x}_t)_l(\tilde{x}_i)_l\right)
	=\sum_{i=1}^K O_{ik} \mathbf{1}_{\{i=t\}}=0.
	$$
\end{proof}

\paragraph{Bounds on population quantities}
The previous lemma implies bounds on the population
quantity of interest, $\lambda_P(\tilde{x}_l)$.
\begin{lemma}[Bounding $\lambda_P(\tilde{x}_l)$]
	\label{lem:dom_lb}
	Under 
	%the assumptions~\ref{assump:matrix_SSNR}-\ref{assump:coherence} and condition on 
	event $\mathcal E^0$, 
	\begin{align*}
		&{\tilde{x}}_l\T P{\tilde{x}}_l \gtrsim  n\rho_n,\,\,\,\qquad \forall l\in[K], \\
		&{\tilde{x}}_l\T P{\tilde{x}}_l\lesssim \log n,\qquad \forall l\in[\km]\setminus[K] .
	\end{align*}
\end{lemma}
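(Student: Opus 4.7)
}
The plan is to use the eigendecomposition $P=\sum_{j=1}^K\lambda_j x_j x_j^{\mathrm T}$ to rewrite the quadratic form as
\[
\tilde x_l^{\mathrm T} P \tilde x_l \;=\; \sum_{j=1}^K \lambda_j \langle \tilde x_l, x_j\rangle^2,
\]
so that both bounds reduce to controlling $\|U^{\mathrm T}\tilde x_l\|_2^2=\sum_{j=1}^K\langle\tilde x_l,x_j\rangle^2$ and pairing it with the appropriate extreme of $\lambda_j$. Before doing either, I would note that Assumption~\ref{assump:matrix_SSNR} forces $K$ to be bounded by a constant, since $K\lambda_K\le\tr(P)\le n\rho_n$ combined with $\lambda_K\ge\psi_1^{-1}\psi_1' n\rho_n$ yields $K\le\psi_1/\psi_1'$; this lets me absorb factors of $K$ into the $\lesssim$ and $\gtrsim$.

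For the lower bound (case $l\in[K]$), I would bound $\sum_j\lambda_j\langle\tilde x_l,x_j\rangle^2\ge\lambda_K\|U^{\mathrm T}\tilde x_l\|_2^2$ and then show $\|U^{\mathrm T}\tilde x_l\|_2\ge 1/2$ for $n$ large. Writing $\tilde x_l = y_l + (\tilde x_l - y_l)$ and using Lemma~\ref{lem:eignvector_2}, we have $\|U^{\mathrm T} y_l\|_2^2=\sum_{j=1}^K O_{lj}^2=1$ by orthonormality of $O$, while
\[
\|U^{\mathrm T}(\tilde x_l - y_l)\|_2 \;\le\; \|\tilde x_l - y_l\|_2 \;=\; O\!\left(\sqrt{\tfrac{\log n}{n\rho_n}}\right),
\]
which is $o(1)$ under the sparsity assumption $\rho_n\ge c_0\log^{\xi_0}n/n$ with $\xi_0>2$. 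A reverse triangle inequality then yields $\|U^{\mathrm T}\tilde x_l\|_2\ge 1/2$ eventually, and combining with $\lambda_K\gtrsim n\rho_n$ delivers $\tilde x_l^{\mathrm T}P\tilde x_l\gtrsim n\rho_n$.

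For the upper bound (case $l\in[\km]\setminus[K]$), I would bound $\sum_j\lambda_j\langle\tilde x_l,x_j\rangle^2\le\lambda_1\|U^{\mathrm T}\tilde x_l\|_2^2$ and then show $\|U^{\mathrm T}\tilde x_l\|_2^2\lesssim\log n/(n\rho_n)$. The key mechanism is the orthogonality relation $\langle\tilde x_l,\tilde y_k\rangle=0$ for each $k\in[K]$ from Lemma~\ref{lem:eignvector_2}. Writing $x_k = \tilde y_k + (x_k - \tilde y_k)$ and using $\|\tilde x_l\|_2=1$,
\[
|\langle \tilde x_l, x_k\rangle| \;=\; |\langle\tilde x_l, x_k - \tilde y_k\rangle| \;\le\; \|x_k-\tilde y_k\|_2 \;=\; O\!\left(\sqrt{\tfrac{\log n}{n\rho_n}}\right).
\]
Squaring and summing over $k\in[K]$ (constant) gives $\|U^{\mathrm T}\tilde x_l\|_2^2 = O(\log n/(n\rho_n))$. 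Paired with $\lambda_1\le\tr(P)\le n\rho_n$, this yields $\tilde x_l^{\mathrm T}P\tilde x_l\lesssim\log n$, as claimed.

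The argument is essentially a bookkeeping exercise on top of Lemma~\ref{lem:eignvector_2}. The only step that requires a moment of care is the upper bound: it is tempting to try to bound $\langle\tilde x_l,x_k\rangle$ directly through the Davis--Kahan bound on $\tilde x_l$, but $\tilde x_l$ for $l>K$ need not be close to any fixed vector, so the correct route is to exploit the \emph{exact} orthogonality of $\tilde x_l$ to $\tilde y_k$ and move the perturbation onto $x_k$ instead. Once that pivot is made, everything else is tight enough to get the logarithmic bound on the nose.
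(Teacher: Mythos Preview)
Your proposal is correct and follows essentially the same route as the paper: both arguments expand $\tilde x_l^{\mathrm T}P\tilde x_l=\sum_{k}\lambda_k\langle\tilde x_l,x_k\rangle^2$, handle the lower bound via the identity $\langle x_k,y_l\rangle=O_{lk}$ with $\sum_k O_{lk}^2=1$ together with $\|\tilde x_l-y_l\|_2=O(\sqrt{\log n/(n\rho_n)})$, and handle the upper bound by the exact orthogonality $\langle\tilde x_l,\tilde y_k\rangle=0$ combined with $\|x_k-\tilde y_k\|_2=O(\sqrt{\log n/(n\rho_n)})$. The only cosmetic difference is that for the lower bound the paper uses a difference-of-squares decomposition $\langle\tilde x_s,x_k\rangle^2=\langle x_k,y_s\rangle^2-\langle x_k,y_s-\tilde x_s\rangle\langle x_k,\tilde x_s+y_s\rangle$ inside the $\lambda_k$-weighted sum, whereas you first pull out $\lambda_K$ and then apply a reverse triangle inequality to $\|U^{\mathrm T}\tilde x_l\|_2$; the inputs and conclusions are identical.
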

\begin{proof}
	%Since $P$ is positive semidefinite, f
	For $s\in[K]$, expanding ${\tilde{x}}_s$ over an orthonormal basis including $\{x_l\}_{l\in K}$, we get
	\begin{eqnarray}
		%\vert{\tilde{x}}_s\T P{\tilde{x}}_s\vert&=&
		{\tilde{x}}_s\T P{\tilde{x}}_s%\nonumber\\
		&=&\sum_{k=1}^K\lambda_k\langle \tilde{x}_s, x_k\rangle^2\nonumber\\
		&=&\sum_{k=1}^K\lambda_k\left[\langle x_k,{y}_s\rangle^2-\langle x_k,y_s-\tilde{x}_s\rangle\langle x_k,\tilde{x}_s+y_s\rangle\right]\nonumber\\
		&\geq& \sum_{k=1}^K\lambda_k O_{sk}^2-\sum_{k=1}^K\lambda_k\Vert y_s-\tilde{x}_s\Vert_2\Vert x_k\Vert_2^2\left(\Vert\tilde{x}_s\Vert_2+\Vert y_s\Vert_2\right)\label{eqn:mu_s1}\\
		&\geq& \psi_1^{-1}\psi_1'n\rho_n-O\left(2 Kn\rho_n\sqrt{\frac{\log n}{n\rho_n}}\right)\label{eqn:mu_s2}\\
		&\gtrsim&n\rho_n\nonumber
	\end{eqnarray}
	where inequality \eqref{eqn:mu_s1} follows from Cauchy–Schwarz, the triangle inequality and $\langle x_k,{y}_s\rangle^2=O_{sk}$ by Lemma \ref{lem:eignvector_2}. Inequality \eqref{eqn:mu_s2} holds since $\sum_{k=1}^K O_{sk}^2=1$, $\psi_1^{-1}\psi_1'n\rho_n\leq\lambda_k\leq n\rho_n$ by Assumption~\ref{assump:matrix_SSNR},  $\Vert\tilde{x}_s-y_s\Vert_2=O\left(\sqrt{\frac{\log n}{n\rho_n}}\right)$ by Lemma \ref{lem:eignvector_2} and $\norm{\tilde x_k}_2=\norm{x_k}_2=\norm{y_s}_2=1$.
	
	For $t\in [\km]\setminus[K]$,
	\begin{eqnarray}
		\tilde{x}_t\T   P\tilde{x}_t
		&=&\sum_{k=1}^K\lambda_k\langle\tilde{x}_t, {x}_k\rangle^2\nonumber\\
		&=&\sum_{k=1}^K\lambda_k\langle\tilde{x}_t, {x}_k-\tilde{y}_k+\tilde {y}_k\rangle^2\nonumber\\
		&=& \sum_{k=1}^K\lambda_k\left[\langle\tilde{x}_t, {x}_k-\tilde{y}_k\rangle+\langle\tilde{x}_t,\tilde {y}_k\rangle\right]^2\nonumber\\
		&=&\sum_{k=1}^K\lambda_k\langle\tilde{x}_t, {x}_k-\tilde{y}_k\rangle^2 \label{eqn:mu_t_0}\\
		&\leq& K\lambda_1\max_{k\in[K]}\Vert  {x}_k-\tilde{y}_k\Vert_2^2=O(\log n) \label{eqn:mu_t}
	\end{eqnarray}
	where equality $\eqref{eqn:mu_t_0}$ follows from $\langle \tilde{x}_t, \tilde{y}_k\rangle=0$ by Lemma \ref{lem:eignvector_2}. Equation \eqref{eqn:mu_t} holds since $\norm{\tilde{ y}_k-{x}_k}_2=O\left(\sqrt{\log n/{n\rho_n}}\right)$ by Lemma \ref{lem:eignvector_2} and $\lambda_k\leq \lambda_1\leq n\rho_n$ by Assumption~\ref{assump:matrix_SSNR}.
\end{proof}

\paragraph{Delocalization of signal eigenvectors} To establish concentration
of the estimate $\tilde\lambda_{\test}({\tilde x}_l)$ around $\varepsilon\lambda_P({\tilde x}_l)$ for $l \in [K]$, we first need to show that ${\tilde x}_l$ is delocalized. That result essentially follows from an entrywise version of Lemma~\ref{lem:eignvector_2} based on a technical result of \cite{abbe2020entrywise}. 
\begin{lemma}[Delocalization of signal sample eigenvectors]
	\label{lem:eigen_p}
	There exist constants $\delta_1>0$, $C_1>0$ such that the event
	\begin{align*}
		{\mathcal E}^1=&\left\{\Vert \tilde{x}_l\Vert_\infty\leq C_1\sqrt{\frac{\mu_0}{n}}, \forall \,l\in[K]\right\}
	\end{align*}
	holds with probability at least $1-3n^{-\delta_1}$.
\end{lemma}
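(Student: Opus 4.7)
The plan is to invoke the entrywise eigenvector perturbation framework of \cite{abbe2020entrywise} to upgrade the $\ell_2$ bound of Lemma~\ref{lem:eignvector_2} into an $\ell_\infty$ bound on each signal eigenvector. Writing $\tilde{x}_l=\hat{U}_{\cdot l}$ we have $|(\tilde{x}_l)_i|\leq \|\hat{U}_{i,\cdot}\|_2$, so it suffices to bound $\|\hat{U}\|_{2,\infty}$. Using the orthogonal rotation $O$ from Lemma~\ref{lem:eignvector_2},
\[
\|\hat{U}\|_{2,\infty}\leq \|UO\T\|_{2,\infty}+\|\hat{U}-UO\T\|_{2,\infty} = \|U\|_{2,\infty}+\|\hat{U}-UO\T\|_{2,\infty},
\]
where the equality uses orthogonality of $O\T$. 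Assumption~\ref{assump:coherence} directly yields $\|U\|_{2,\infty}\leq \sqrt{K\mu_0/n}$, so the task reduces to controlling the entrywise perturbation $\|\hat{U}-UO\T\|_{2,\infty}$.

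To apply the abstract entrywise bound of \cite{abbe2020entrywise} with noise matrix $E:=\tilde A-(1-\varepsilon)P$, I would verify the three standard conditions: (i) \emph{incoherence} of $U$, which is Assumption~\ref{assump:coherence}; (ii) \emph{eigengap}, which under $\mathcal E^0$ follows from Weyl's inequality together with Assumption~\ref{assump:matrix_SSNR}, since $\lambda_K\gtrsim n\rho_n$ while $\|E\|=O(\sqrt{n\rho_n\log n})=o(n\rho_n)$ under the sparsity lower bound $\rho_n\geq c_0\log^{\xi_0}n/n$ with $\xi_0>2$; and (iii) \emph{row-wise concentration}, i.e.\ control of $|\langle E_{i,\cdot}, w\rangle|$ uniformly over unit vectors $w$ in a low-dimensional subspace spanned by $U$ (and $\hat U$ via a leave-one-out argument). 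Condition (iii) is precisely the setting of the Poisson Bernstein bound of Lemma~\ref{lem:bernstein-poisson}: each $\langle E_{i,\cdot},w\rangle$ is a weighted sum of independent centered Poisson variables with coefficient $\ell_\infty$ norm at most $\|w\|_\infty$ and variance proxy of order $\rho_n\|w\|_2^2$, yielding $|\langle E_{i,\cdot},w\rangle|\lesssim \sqrt{\rho_n\log n}$ with probability at least $1-n^{-\delta}$ for any fixed $\delta>0$. A union bound over $i\in[n]$ and over a fine net of $w$ (plus a standard leave-one-out trick to decouple $\hat U$ from $E_{i,\cdot}$) absorbs the polynomial loss because $\xi_0>2$.

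Combining these ingredients, the Abbe--Fan--Wang--Zhong theorem gives
\[
\|\hat U-UO\T\|_{2,\infty}\;\lesssim\;\frac{\sqrt{\rho_n\log n}}{\lambda_K}\,\sqrt{n}\,\|U\|_{2,\infty}+\text{(lower-order)}\;\lesssim\;\sqrt{\tfrac{\log n}{n\rho_n}}\,\sqrt{\tfrac{\mu_0}{n}},
\]
which is $o(\sqrt{\mu_0/n})$ under our sparsity assumption. Plugging this into the triangle inequality above yields $\|\tilde{x}_l\|_\infty\leq \|\hat U\|_{2,\infty}\leq C_1\sqrt{\mu_0/n}$ on an event of probability at least $1-3n^{-\delta_1}$ (where the three excluded events are $\mathcal E^0$, the row-concentration event, and the leave-one-out auxiliary event), with $K$ absorbed into $C_1$ since $K$ is treated as constant.

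The main obstacle is verifying the row-wise concentration while also handling the mild statistical dependence between $\hat U$ and the rows $E_{i,\cdot}$. This is exactly what the leave-one-out device of \cite{abbe2020entrywise} is designed to resolve: for each row $i$ one works with a surrogate $\hat U^{(i)}$ computed from the matrix obtained by zeroing out the $i$-th row and column, which is independent of $E_{i,\cdot}$; one then shows that $\hat U^{(i)}$ and $\hat U$ are close via a second application of Davis--Kahan, so that the concentration bound transfers. Tracking constants carefully so that all probability losses can be absorbed into $3n^{-\delta_1}$ for some $\delta_1>0$ is the primary bookkeeping burden, but no new probabilistic input beyond Lemmas~\ref{lem:bernstein-poisson} and~\ref{lem:concentration_p} is required.
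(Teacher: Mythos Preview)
Your proposal is correct and follows essentially the same route as the paper: both verify the hypotheses of the entrywise perturbation theorem of \cite{abbe2020entrywise} (incoherence via Assumption~\ref{assump:coherence}, spectral concentration via $\mathcal E^0$/Lemma~\ref{lem:concentration_p}, and row concentration via the Poisson Bernstein bound of Lemma~\ref{lem:bernstein-poisson}) to conclude $\|\hat U\|_{2,\infty}\lesssim \|U\|_{2,\infty}\leq\sqrt{K\mu_0/n}$. The only cosmetic difference is that the paper invokes \cite[Theorem~2.1]{abbe2020entrywise} as a black box with explicit choices of the parameters $\Delta^*,\kappa,\gamma,\varphi$ and checks their four conditions (A1)--(A4) directly, whereas your sketch partly re-narrates the leave-one-out machinery that is already internal to that theorem; you need not carry out the netting or leave-one-out yourself once (A4) is verified for a fixed deterministic $W$.
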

\begin{proof}
	We use \citet[Theorem 2.1]{abbe2020entrywise} on $\tilde A$, which requires four conditions. We check these conditions next.
	%We use lemma~\ref{thm:coherence_p}. 
	First, let ${\tilde A}^* = (1-\varepsilon) P$,
	$\Delta^* = \lambda_K$,  
	\begin{equation}
		\label{eq:kappa-condition}
		\kappa=\frac{\lambda_1}{\lambda_K}\leq \psi_1,    
	\end{equation}
	where the inequality follows from Assumption~\ref{assump:matrix_SSNR}, 
	$$
	\varphi(x)=\frac{1}{32\psi_1}\min\{\sqrt{n}x, 1\},
	$$
	and
	\begin{equation}
		\label{eq:gamma-condition}
		\gamma=C''' \psi_1(\psi_1')^{-1} \sqrt{\frac{\log n}{n\rho_n}}\gtrsim \sqrt{\frac{\log n}{n^{{1-\xi_1}}}},
	\end{equation}
	where $C'''$ is the constant in Lemma~\ref{lem:concentration_p} and $\psi_1, \psi_1' >0$, $\xi_1\in(0,1)$ are the constants in Assumption~\ref{assump:matrix_SSNR}. 
	
	\begin{itemize}
		\item[(A1)] \emph{(Incoherence)} 
		By \citet[Eq.~(2.4)]{abbe2020entrywise} and the remarks that follow it, the incoherence condition is satisfied provided
		\begin{align*}
			\mu(U) := \frac{n}{K}\Vert U\Vert_{2,\infty}^2 \leq \frac{n \gamma^2}{K \kappa^2}.
		\end{align*}
		Under Assumption~\ref{assump:coherence}, $\mu(U) \leq \mu_0$ while~\eqref{eq:gamma-condition} implies $n\gamma^2=\Omega(\log n)$ and~\eqref{eq:kappa-condition} implies $\kappa = O(1)$. Hence the condition is satisfied.
		
		\item[(A2)] \emph{(Row and columnwise independence)}
		By Lemma~\ref{lemma:poisson_binomial}, ${\tilde A}$ is the adjacency matrix of a Poisson graph with independent entries. In particular, $\{{\tilde A}_{ij} : \text{$i = m$ or $j = m$}\}$ are independent of $\{{\tilde A}_{ij} ; i \neq m, j \neq m\}$.
		
		\item[(A3)] \emph{(Spectral norm concentration)} 
		As observed previously, applying Lemma~\ref{lem:concentration_p} with $B := \tilde{A}$, $Q := (1-\varepsilon) P$ and $\delta > 0$ shows that the event
		\begin{align*}
			&\mathcal E^0=\left\{\norm{\tilde A-(1-\varepsilon)P}\leq C'''\sqrt{n\rho_n\log n}\right\},
		\end{align*}
		holds with probability $1 - n^{-\delta}$.
		Moreover, by the remark after Assumption~\ref{assump:matrix_SSNR}, 
		$$
		\gamma \Delta^*
		= C''' \psi_1(\psi_1')^{-1} \sqrt{\frac{\log n}{n\rho_n}} \lambda_K \geq C''' \sqrt{n\rho_n \log n}.
		$$
		Hence,
		$$
		\Prob\left[\norm{{\tilde A}-{\tilde A}^*}\leq \gamma \Delta^*\right]
		\geq 1 - n^{-\delta}.
		$$
		Note further that, under Assumption~\ref{assump:matrix_SSNR}, $\gamma = o(1)$, which implies
		$$
		32 \kappa \max\{\gamma, \varphi(\gamma)\}
		\leq 32 \kappa \max\left\{\gamma, \frac{1}{32 \psi_1}\right\}
		\leq 1,
		$$
		for $n$ large enough,
		as required in \citet[Assumption (A3)]{abbe2020entrywise}, where we used~\eqref{eq:kappa-condition}.
		
		\item[(A4)] \emph{(Row concentration)} As required in \citet[Assumption (A4)]{abbe2020entrywise}, the function $\varphi$ is continuous and non-decreasing on $\R_+$ with $\varphi(0) = 0$ and $\varphi(x)/x$ nonincreasing on $\R_+$. Let $W\in\mathbb R^{n\times K}$. By standard norm bounds
		\begin{align*}
			\frac{1}{\sqrt{n}}\leq\frac{\norm{W}_F}{\sqrt{n}\norm{W}_{2,\infty}}\leq 1.
		\end{align*}
		As a result, by definition of $\varphi$, 
		$$
		\varphi\left(\frac{\norm{W}_F}{\sqrt{n}\norm{W}_{2,\infty}}\right) = \frac{1}{32 \psi_1}.
		$$
		Let
		$$
		g = \Delta^* \norm{W}_{2,\infty} \varphi\left(\frac{\norm{W}_F}{\sqrt{n}\norm{W}_{2,\infty}}\right)
		= \frac{1}{32 \psi_1}\lambda_K\norm{W}_{2,\infty}.
		$$
		Fix $m\in[n]$ and $r\in[K]$. 
		Applying Lemma~\ref{lem:bernstein-poisson} on $\tilde A_{m\cdot}$ with $\max_{ij}\mathbb E\tilde A_{ij}\leq (1-\varepsilon)\rho_n$, there exist $c_2>0$, $c_2'>1$ such that
		\begin{eqnarray*}
			&&\mathbb P\left(\left\vert\sum_{i\in[n]}(\tilde A_{mi}-\tilde Q_{mi})W_{ir}\right\vert\geq {g}/{\sqrt{K}}\right)\\
			&\leq& 2\exp\left({-\frac{g^2/K}{C' (1-\varepsilon) \rho_n \|W_{\cdot r}\|_2^2+C''\norm{W_{\cdot r}}_{\infty}g/\sqrt{K}}}\right)\\
			&=& 2\exp\left({-\frac{\lambda_K^2 \norm{W}_{2,\infty}^2}{32^2\psi_1^2K C' (1-\varepsilon) \rho_n \|W_{\cdot r}\|_2^2+ 32\psi_1\sqrt{K} C''\norm{W_{\cdot r}}_{\infty} \lambda_K \norm{W}_{2,\infty} }}\right)\\
			&\leq& 2\exp\left({-\frac{\lambda_K^2 }{32^2\psi_1^2K C' (1-\varepsilon) n \rho_n + 32\psi_1\sqrt{K} C'' \lambda_K }}\right)\\
			&\leq& 2\exp(-c_2n\rho_n)\\
			&\leq& n^{-c_2'},
		\end{eqnarray*}
		where $C'$ and $C''$ are the constants in Lemma~\ref{lem:bernstein-poisson} and we used again that, by the remark after Assumption~\ref{assump:matrix_SSNR}, $\lambda_K \geq \psi_1^{-1}\psi_1'n\rho_n$.
		In the final inequality, we use that $n\rho_n \geq c_0\log^{\xi_0} n$, $\xi_0>2$ under Assumption~\ref{assump:matrix_SSNR}. Since 
		$$
		\norm{(\tilde A-\tilde Q)_{m\cdot}W}_2\leq \sqrt{K}\sup_r\left\vert\sum_{i\in[n]} ({\tilde A}_{mi}-{\tilde Q}_{mi}){W}_{ir}\right\vert,
		$$ 
		a union bound over $r$ implies
		$$
		\Prob\left[
		\norm{(\tilde A-\tilde Q)_{m\cdot}W}_2
		\leq g\right]
		\geq 1 - Kn ^{-c_2'}.
		$$
	\end{itemize}

	Recall the definition of $\hat U$ and $U$ from~\eqref{eq:def-uhat-u}. Applying 
	%lemma~\ref{thm:coherence_p}, 
	\citet[Theorem 2.1]{abbe2020entrywise} and using \citet[Eq.~(2.4)]{abbe2020entrywise} again, 
	there exists $\tilde C>0$ such that
	\begin{eqnarray*}
		\max_{l\in [K]}\norm{\tilde{x}_l}_\infty
		&\leq& \norm{\hat U}_{2,\infty}\\
		&\leq& \tilde C(2\kappa+\varphi(1))\norm{ U}_{2,\infty}\\
		&\leq& \tilde C\left(2 \psi_1 + \frac{1}{32\psi_1}\right)\sqrt{K}\sqrt{\frac{\mu_0}{n}},
	\end{eqnarray*}
	with probability $1 - n^{-\delta} - 2n^{-(c_2'-1)}$,
	where we used Assumption~\ref{assump:coherence} on the last line.
	Taking $C_1=\tilde C(2\psi_1 + \frac{1}{32\psi_1})\sqrt{K}$ and $\delta_1=\min\{\delta,c_2'-1\}>0$ gives the claim.
\end{proof}

\paragraph{Concentration of quadratic forms}
We bound the variance estimate for the signal eigenvectors.
\begin{lemma}[Bound on the variance estimate]
	\label{lem:sigma}
	Under event $\mathcal E^0\cap\mathcal E^1$, for all $l\in[K]$, 
	$$
	\tilde \sigma^2_l
	:=
	\frac{\varepsilon}{1 - \varepsilon}(\tilde{x}_l^2)\T\left(2\tilde A -\diag(\tilde A)\right)\tilde{x}_l^2 = \Theta(\rho_n).
	$$
\end{lemma}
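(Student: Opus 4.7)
The plan is to decompose $\tilde\sigma_l^2$ into a ``population'' piece plus a noise piece, show the population piece is $\Theta(\rho_n)$, and show the noise piece is $o(\rho_n)$. Writing $M := \tilde A - (1-\varepsilon)P$, a direct manipulation of the prefactor $\frac{\varepsilon}{1-\varepsilon}$ gives
$$\tilde\sigma_l^2 = \varepsilon\left[2(\tilde x_l^2)\T P\tilde x_l^2 - (\tilde x_l^2)\T \diag(P)\tilde x_l^2\right] + \frac{\varepsilon}{1-\varepsilon}(\tilde x_l^2)\T\left[2M - \diag(M)\right]\tilde x_l^2.$$

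For the noise, I would bound it by $\frac{\varepsilon}{1-\varepsilon}\bigl(2\|M\| + \max_i|M_{ii}|\bigr)\|\tilde x_l^2\|_2^2$, using the fact that $|M_{ii}| \leq \|M\|$ for any symmetric matrix. On $\mathcal E^0$ we have $\|M\| \lesssim \sqrt{n\rho_n\log n}$, and on $\mathcal E^1$ we have $\|\tilde x_l^2\|_2^2 \leq \|\tilde x_l\|_\infty^2 \|\tilde x_l\|_2^2 \leq C_1^2\mu_0/n$. Thus the noise is $O(\sqrt{\rho_n\log n/n}) = o(\rho_n)$, where the last step uses $\rho_n \gg \log n/n$ from Assumption~\ref{assump:matrix_SSNR}.

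For the population piece, the upper bound is immediate since $(\tilde x_l^2)\T P \tilde x_l^2 \leq \max_{ij} P_{ij}\cdot\|\tilde x_l\|_2^4 \leq \rho_n$ and the diagonal contribution $(\tilde x_l^2)\T\diag(P)\tilde x_l^2 \leq \rho_n\|\tilde x_l\|_\infty^2 = o(\rho_n)$ is negligible, giving $O(\rho_n)$. For the lower bound, I would invoke Lemma~\ref{lem:lb_sigma} with $Q = P$ (which is PSD with non-negative entries) and $x = \tilde x_l$, yielding
$$2(\tilde x_l^2)\T P\tilde x_l^2 - (\tilde x_l^2)\T \diag(P)\tilde x_l^2 \geq \frac{\langle d, \tilde x_l^2\rangle^2}{\sum_i d_i}, \qquad d = P\1_n.$$
Since $\sum_i d_i = \1_n\T P \1_n \leq \lambda_1 n \leq n^2\rho_n$, it suffices to show $\langle d,\tilde x_l^2\rangle \gtrsim n\rho_n$. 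The AM-GM inequality $|\tilde x_{li}\tilde x_{lj}| \leq (\tilde x_{li}^2 + \tilde x_{lj}^2)/2$ together with the non-negativity of $P$ gives
$$\tilde x_l\T P\tilde x_l \leq \sum_{ij} P_{ij}|\tilde x_{li}||\tilde x_{lj}| \leq \tfrac{1}{2}\sum_{ij} P_{ij}(\tilde x_{li}^2 + \tilde x_{lj}^2) = \langle d, \tilde x_l^2\rangle,$$
which, combined with $\tilde x_l\T P\tilde x_l \gtrsim n\rho_n$ from Lemma~\ref{lem:dom_lb} (for $l \in [K]$ on $\mathcal E^0$), yields the required $\Omega(n\rho_n)$ bound.

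The main obstacle is the lower bound, because $\tilde x_l^2$ does not live in the signal subspace and is therefore not directly controlled by the signal eigenvector lemmas. Lemma~\ref{lem:lb_sigma} reduces the problem to a lower bound on $\langle d, \tilde x_l^2\rangle$, which at first glance looks like a new inequality to establish, but the AM-GM step above supplies a clean bridge back to the already-controlled quadratic form $\tilde x_l\T P \tilde x_l$ from Lemma~\ref{lem:dom_lb}.
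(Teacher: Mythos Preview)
Your proof is correct and follows the same high-level decomposition as the paper: split $\tilde\sigma_l^2$ into a population piece plus a noise piece, control the noise by $\|M\|\cdot\|\tilde x_l^2\|_2^2 = O(\sqrt{\rho_n\log n/n}) = o(\rho_n)$ on $\mathcal E^0\cap\mathcal E^1$, and show the population piece is $\Theta(\rho_n)$ using Lemma~\ref{lem:dom_lb}.

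The only substantive difference is how you obtain the lower bound on the population quadratic form. You route through Lemma~\ref{lem:lb_sigma} and then use AM--GM to show $\langle d,\tilde x_l^2\rangle \geq \tilde x_l\T P\tilde x_l$, whereas the paper applies weighted Cauchy--Schwarz (weights $P_{ij}\geq 0$) directly to get
\[
(\tilde x_l^2)\T P\,\tilde x_l^2 \;\geq\; \frac{(\tilde x_l\T P\tilde x_l)^2}{\sum_{ij}P_{ij}}
\]
in one step. Since $\sum_i d_i=\sum_{ij}P_{ij}$, both routes land on the same inequality, so your detour through Lemma~\ref{lem:lb_sigma} is sound but unnecessary here; the paper's single Cauchy--Schwarz is more economical. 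On the other hand, your upper bound via $\max_{ij}P_{ij}\leq\rho_n$ and $\|\tilde x_l\|_2=1$ is slightly cleaner than the paper's $\lambda_1\|\tilde x_l^2\|_2^2$ bound, as it does not even need $\mathcal E^1$.
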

\begin{proof}
	Let ${\tilde Q} = (1- \varepsilon) P$.
	We first show $(\tilde{x}_l^2)\T {\tilde A}\tilde{x}_l^2$ can be controlled via $(\tilde{x}_l^2)\T {\tilde Q}\tilde{x}_l^2$. Indeed
	observe that for each $l\in[K]$
	\begin{eqnarray}
		\left\vert(\tilde{x}_l^2)\T {\tilde A}\tilde{x}_l^2-(\tilde{x}_l^2)\T {\tilde Q}\tilde{x}_l^2\right\vert
		&=&\vert(\tilde{x}_l^2)\T( {\tilde A}-{\tilde Q})\tilde{x}_l^2\vert\nonumber\\ 
		&\leq&\norm{{\tilde A}-{\tilde Q}}\norm{\tilde{x}_l^2}^2_2\nonumber\\
		&\leq&\norm{{\tilde A}-{\tilde Q}}  \Vert\tilde{x}_l\Vert_\infty^2\Vert\tilde{x}_l\Vert_2^2\nonumber\\
		&=&O\left(\sqrt{n \rho_n \log n} \cdot \frac{1}{n}\right)\nonumber\\
		&=&O\left(\sqrt{\frac{\rho_n \log n}{n}}\right)\nonumber
	\end{eqnarray}
	where we used that
	$\norm{{\tilde A}-{\tilde Q}}=O(\sqrt{n\rho_n \log n})$ under event $\mathcal E^0$ and $\norm{\tilde{x}_l^2}_\infty=\norm{\tilde{x}_l}_\infty^2=O(\frac{1}{n})$ under $\mathcal E^1$. Moreover, observe that $\sqrt{\rho_n \log n/n} \ll \rho_n$ since $n\rho_n\geq c_0\log^{\xi_0}n$ under Assumption~\ref{assump:matrix_SSNR}. So
	\begin{eqnarray}
		\left\vert(\tilde{x}_l^2)\T {\tilde A}\tilde{x}_l^2-(\tilde{x}_l^2)\T {\tilde Q}\tilde{x}_l^2\right\vert
		&\ll&\rho_n. \label{eqn:sigma_ub_2}
	\end{eqnarray}
	
	To get an upper bound on $\tilde \sigma^2_l$, note that
	\begin{eqnarray}
		(\tilde{x}_l^2)\T P\tilde{x}_l^2&\leq& \lambda_1\Vert\tilde{x}_l^2\Vert_2^2\nonumber\\ &\leq& \lambda_1\cdot\Vert\tilde{x}_l\Vert_\infty^2\cdot\Vert\tilde{x}_l\Vert_2^2\nonumber\\
		&=&O\left(n\rho_n\cdot\frac{1}{n}\cdot 1\right)\nonumber\\
		&=&O(\rho_n),\nonumber
	\end{eqnarray}
	where we used $\lambda_1\leq n\rho_n$ by Assumption~\ref{assump:matrix_SSNR}.
	Hence, we get
	\begin{eqnarray*}  
		\tilde\sigma^2_l
		&=&\frac{\varepsilon}{1- \varepsilon} \left[2(\tilde{x}_l^2)\T\tilde A \tilde{x}_l^2-(\tilde{x}_l^2)\T\diag(\tilde A)\tilde{x}_l^2\right]\\
		&\leq&\frac{2\varepsilon}{1- \varepsilon} (\tilde{x}_l^2)\T\tilde A \tilde{x}_l^2\\
		&\leq&\frac{2\varepsilon}{1- \varepsilon}\vert (\tilde{x}_l^2)\T {\tilde A}\tilde{x}_l^2-(\tilde{x}_l^2)\T {\tilde Q}\tilde{x}_l^2\vert
		+\frac{2\varepsilon}{1- \varepsilon}(\tilde{x}_l^2)\T {\tilde Q}\tilde{x}_l^2\\
		&\leq&\frac{2\varepsilon}{1- \varepsilon}\vert (\tilde{x}_l^2)\T {\tilde A}\tilde{x}_l^2-(\tilde{x}_l^2)\T {\tilde Q}\tilde{x}_l^2\vert
		+ 2\varepsilon (\tilde{x}_l^2)\T P\tilde{x}_l^2\\
		&=& O(\rho_n),
	\end{eqnarray*}
	by~\eqref{eqn:sigma_ub_2}.
	
	%By symmetry of $P$, for any vector $x\in\mathbb R^n$,
	%\begin{align*}
	%\sum_{i,j}P_{ij}{x}_i^2=\frac{1}{2}\left[\sum_{i,j}P_{ij}{x}_i^2+\sum_{i,j}P_{ij}{x}_j^2\right]&=\frac{1}{2}\sum_{i,j}P_{ij}({x}_i^2+{x}_j^2)\geq \sum_{i,j}P_{ij}{x}_i{x}_j.
	%\end{align*}
	%Combining with lemma \ref{lem:lb_sigma} and lemma~\ref{lem:dom_lb}, we have
	In the other direction, 
	by Cauchy-Schwarz, 
	\begin{align*}
		&(\tilde{x}_l^2)\T P\tilde{x}_l^2
		\geq \frac{\left(\tilde{x}_l\T P\tilde{x}_l\right)^2}{\sum_{ij}P_{ij}}
		\gtrsim \frac{(n \rho_n)^2}{n^2\rho_n}
		\gtrsim \rho_n,
	\end{align*}
	where the middle inequality follows from Lemma~\ref{lem:dom_lb}.
	Combining with \eqref{eqn:sigma_ub_2}, we have 
	\begin{eqnarray*}
		\tilde\sigma^2_l
		&=&\frac{\varepsilon}{1- \varepsilon} \left[(\tilde{x}_l^2)\T\tilde A \tilde{x}_l^2+(\tilde{x}_l^2)\T\left(\tilde A -\diag(\tilde A)\right)\tilde{x}_l^2\right]\\
		&\geq&\frac{\varepsilon}{1-\varepsilon} (\tilde{x}_l^2)\T {\tilde A}\tilde{x}_l^2\\
		&\geq&\frac{\varepsilon}{1-\varepsilon}(\tilde{x}_l^2)\T {\tilde Q}\tilde{x}_l^2
		- \frac{\varepsilon}{1-\varepsilon} \left\vert (\tilde{x}_l^2)\T {\tilde A}\tilde{x}_l^2-(\tilde{x}_l^2)\T {\tilde Q}\tilde{x}_l^2\right\vert\\
		&=& \varepsilon(\tilde{x}_l^2)\T P \tilde{x}_l^2
		- \frac{\varepsilon}{1-\varepsilon} \left\vert (\tilde{x}_l^2)\T {\tilde A}\tilde{x}_l^2-(\tilde{x}_l^2)\T {\tilde Q}\tilde{x}_l^2\right\vert\\ 
		&\gtrsim& \rho_n.
	\end{eqnarray*}
	%where the $\vert (\tilde{x}_l^2)\T A\tilde{x}_l^2-(\tilde{x}_l^2)\T P\tilde{x}_l^2\vert$ is dominated by $(\tilde{x}_l^2)\T P\tilde{x}_l^2$ as $\rho_n\geq c_0\log^{\xi_0}n/n$. 
	That concludes the proof.
\end{proof}

Finally, before moving on to the proof of the main theorem, we show next that $\tilde\lambda_{\test}({\tilde x}_l)$ is concentrated around $\varepsilon\lambda_P({\tilde x}_l)$.
\begin{lemma}[Concentration of $\tilde\lambda_{\test}(x)$]
	\label{lem:concen_A_test_p}
	Let $x \in \R^n$ be a unit vector such that
	\begin{equation}
		\label{eq:lambdatest-deloc}
		\norm{x}^2_\infty
		\leq
		\frac{\log n}{n},
	\end{equation}
	then there exists $\delta_2>1$ such that
	\begin{align*}
		\Prob\left[\left|\sum_{i,j} x_i x_j({\tilde A}\test - \varepsilon P)_{ij}\right|\leq \sqrt{\rho_n \log n} \right] \geq 1 - n^{-\delta_2}.
	\end{align*}
\end{lemma}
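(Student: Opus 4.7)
The plan is to recognize the target quantity as a linear combination of independent centered Poisson random variables and apply the Bernstein-type bound from Lemma~\ref{lem:bernstein-poisson}. Concretely, using the symmetry of $\tilde A\test$ and $P$, one rewrites
\begin{equation*}
\sum_{i,j} x_i x_j (\tilde A\test - \varepsilon P)_{ij}
= \sum_{i \leq j} \alpha_{ij}\bigl[(\tilde A\test)_{ij} - \varepsilon P_{ij}\bigr],
\end{equation*}
where $\alpha_{ij} = (2 - \mathbf{1}\{i = j\}) x_i x_j$. By Lemma~\ref{lemma:poisson_binomial}, the entries $\{(\tilde A\test)_{ij} : i \leq j\}$ are mutually independent Poisson variables with means $\varepsilon P_{ij} \leq \rho_n$, so Lemma~\ref{lem:bernstein-poisson} applies.

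Next, I would bound the two $\balpha$-norms needed by Bernstein. Using $\|x\|_2 = 1$,
\begin{equation*}
\|\balpha\|_2^2 \leq 2 \sum_{i,j} x_i^2 x_j^2 = 2\|x\|_2^4 = 2,
\end{equation*}
while the delocalization hypothesis~\eqref{eq:lambdatest-deloc} gives
\begin{equation*}
\|\balpha\|_\infty \leq 2\|x\|_\infty^2 \leq \frac{2\log n}{n}.
\end{equation*}
Applying Lemma~\ref{lem:bernstein-poisson} with threshold $t = C_\star \sqrt{\rho_n \log n}$ (where the absolute constant $C_\star$ is absorbed into the $\sqrt{\rho_n \log n}$ scaling used in the statement) and $\mumax \leq \rho_n$ yields a two-sided deviation bound of at most
\begin{equation*}
2\exp\!\left(-\frac{C_\star^2 \rho_n \log n}{2 C' \rho_n \,+\, 2 C'' C_\star \frac{\log n}{n} \sqrt{\rho_n \log n}}\right).
\end{equation*}

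The final step is to verify that the variance term dominates the subexponential term in the denominator. Their ratio scales like $\log^{3/2} n / (n\sqrt{\rho_n})$, which is $o(1)$ whenever $n^2 \rho_n \gg \log^3 n$. Since Assumption~\ref{assump:matrix_SSNR} guarantees $n\rho_n \geq c_0 \log^{\xi_0} n$ with $\xi_0 > 2$, this comfortably holds. The exponent therefore behaves like $-\tfrac{C_\star^2}{2 C'} \log n \cdot (1 + o(1))$, and choosing $C_\star$ large enough delivers the required $n^{-\delta_2}$ with $\delta_2 > 1$.

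The main obstacle I anticipate is the bookkeeping around constants: one must ensure that the variance contribution (scaling as $\rho_n$) and not the subexponential tail contribution (scaling as $\tfrac{\log n}{n}\sqrt{\rho_n \log n}$) drives the exponent of the Bernstein bound, and that $C_\star$ can be chosen uniformly so that $\delta_2$ exceeds~$1$ as needed for the subsequent union bound over the $\km$ eigenvectors. Beyond this calibration, the argument reduces to a direct invocation of the concentration toolkit already assembled in Section~\ref{sec:proof_prelim}.
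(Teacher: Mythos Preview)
Your proposal is correct and follows essentially the same approach as the paper: apply the Poisson Bernstein inequality (Lemma~\ref{lem:bernstein-poisson}) with $\mumax \lesssim \rho_n$, $\|\balpha\|_2^2 \leq 2$, and $\|\balpha\|_\infty \leq 2\|x\|_\infty^2 \leq 2\log n/n$, then check that the variance term dominates the denominator using $n\rho_n \gg \log^{\xi_0} n$. Your explicit symmetrization over $i\leq j$ is in fact slightly more careful than the paper's version, which applies Bernstein directly to the double sum over all $(i,j)$ without commenting on the dependence between $(\tilde A\test)_{ij}$ and $(\tilde A\test)_{ji}$; the effect is only on constants. One cosmetic point: introducing an auxiliary constant $C_\star$ and saying it is ``absorbed into the $\sqrt{\rho_n\log n}$ scaling'' is not quite faithful to the stated threshold, which is exactly $\sqrt{\rho_n\log n}$; the paper handles this the same (hand-wavy) way, and since the lemma is only ever used up to constants in the consistency proof, this does not affect anything downstream.
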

\begin{proof}
	We use Lemma~\ref{lem:bernstein-poisson}.
	From $\|x\|_2 = 1$, we get
	\begin{align*}
		&\Prob\left[\left|\sum_{i,j} x_i x_j({\tilde A}\test - \varepsilon P)_{ij}\right|
		\geq \sqrt{\rho_n \log n} \right] \\
		&\leq 2 \exp\left(
		- \frac{(\sqrt{\rho_n \log n})^2/2}{C' \varepsilon \rho_n \sum_{i,j} (x_i x_j)^2 + C'' \max_{ij} |x_i x_j| \, \sqrt{\rho_n \log n}}
		\right)\\
		&\leq 2 \exp\left(
		- \frac{\rho_n \log n/2}{C' \varepsilon \rho_n + C'' \|x\|_\infty^2 \, \sqrt{\rho_n \log n}}
		\right).
	\end{align*}
	By Assumption~\ref{assump:matrix_SSNR},
	$\rho_n \gg \frac{\log n}{n}$ while 
	$\sqrt{\rho_n \log n} = o(1)$. By~\eqref{eq:lambdatest-deloc}, the denominator on the last line is $\lesssim \rho_n$ and the claim follows.
\end{proof}

\subsubsection{Proof of Theorem~\ref{thm:consistency}}
\label{sec:proof_poisson}

Now, we are ready to prove Theorem \ref{thm:consistency}. 
%for the random dot product graph.
\begin{proof}[Proof of Theorem~\ref{thm:consistency}]
	By Lemmas~\ref{lem:concentration_p} and~\ref{lem:eigen_p},
	the event $\mathcal E^0 \cap \mathcal E^1$ holds with
	probability $1 - 4 n^{-\delta_1}$. Under $\mathcal E^0 \cap \mathcal E^1$, which depends only on ${\tilde A}$, the claims
	in Lemmas~\ref{lem:eignvector_2},~\ref{lem:dom_lb} and~\ref{lem:sigma} also hold. For the rest of the proof, 
	we condition on $\mathcal E^0 \cap \mathcal E^1$ and use the fact that ${\tilde A}\test$ is independent of ${\tilde A}$ by Lemma~\ref{lemma:poisson_binomial}.

	Let ${\tilde x}_l$, $l \in [\km]$, be the
	top $\km$ unit eigenvectors of ${\tilde A}$
	and let
	$$
	\tilde\sigma^2_l
	=\frac{\varepsilon}{1- \varepsilon} (\tilde{x}_l^2)\T \left(2{\tilde A}-\diag(\tilde A)\right)\tilde{x}_l^2.
	$$
	Define
	$$
	S
	=\left\{l\in[\km]:\norm{\tilde{x}_l}^2_\infty\leq\min\left\{\frac{\tilde\sigma^2_l}{\log^2 n}, \frac{\log n}{n}\right\}\right\},
	$$
	to be the subset of $[\km]$ corresponding to sufficiently delocalized eigenvectors. Recall that the test statistic associated to ${\tilde x}_l$ is 
	$$
	T_l
	=\frac{\tilde{x}_l\T \tilde A\test\tilde{x}_l}{\tilde \sigma_l}.
	$$
	We say that $l$ is rejected if  
	$$
	l \in S
	\quad\text{and}\quad
	|T_l| \geq \sqrt{n \log n} =: \tau_n. 
	$$
	
	\paragraph{No under-estimation}
	We show that the test statistic associated with the $K$ leading eigenvectors of $\tilde A$ will reject the null hypothesis with high probability, that is,
	\begin{itemize}
		\item $[K]\subset S$; and
		\item $|T_l|\geq \tau_n$, $\forall l\in[K]$.
	\end{itemize}
	
	Fix $s\in[K]$. First, we check that $s\in S$. Under ${\mathcal E^1}$, $\norm{\tilde x_s^2}_\infty=O(1/n) \ll \log n/n$. We need to check that $\norm{\tilde{\underline{x}}_{s}^2}_{\infty}\leq \tilde\sigma^2_s/\log^2 n$, for $n$ sufficiently large. This follows from the fact that $\tilde\sigma^2_s=\Theta(\rho_n)$ by Lemma~\ref{lem:sigma} and $\rho_n\geq c_0 n^{-1} \log^{\xi_0}n$ with $\xi_0 > 2$ under Assumption~\ref{assump:matrix_SSNR}.  
	
	Next, we bound $|T_s|$ from below. We have, with probability $1 - n^{-\delta_2}$, where $\delta_2>1$ is the constant in Lemma~\ref{lem:concen_A_test_p},
	\begin{eqnarray}
		\vert T_s\vert
		&=& \left\vert\frac{\tilde{x}_s\T \tilde A\test\tilde{x}_s}{\tilde \sigma_s}\right\vert\nonumber\\
		&\geq& \frac{\varepsilon\left\vert\tilde{x}_s\T P\tilde{x}_s\right\vert-\left\vert{\tilde{x}_s\T (\tilde{A}\test- \varepsilon P)\tilde{x}_s}\right\vert}{\tilde\sigma_s}\nonumber\\
		&\gtrsim& {\frac{{\varepsilon}n\rho_n-\sqrt{\rho_n\log n}}{\sqrt{\rho_n}}}\nonumber\\
		&\gtrsim& n\sqrt{\rho_n}\nonumber\\
		&\gg& \sqrt{n \log n}, \label{eqn:T_lb}
	\end{eqnarray}
	where the dominating term is controlled through $\vert\tilde{x}_s\T P\tilde{x}_s\vert\gtrsim n\rho_n \gg \log n$ 
	by Lemma~\ref{lem:dom_lb}, the term $\vert{\tilde{x}_s\T (\tilde{A}\test- \varepsilon P)\tilde{x}_s}\vert$ is bounded above by $\sqrt{\rho_n\log n} \ll \log n$ from Lemma~\ref{lem:concen_A_test_p} and the denominator satisfies $\tilde\sigma^2_s=\Theta(\rho_n)$ by Lemma~\ref{lem:sigma}. The final bound follows from Assumption~\ref{assump:matrix_SSNR}.
	By a union bound, \eqref{eqn:T_lb} 
	holds simultaneously for $s \in [K]$ 
	with probability $1 - K n^{-\delta_2}$.
	
	\paragraph{No over-estimation}
	Then, we show that the noise eigenvectors of $\tilde A$ will either be too localized or the test statistic associated with them will fail to reject the null hypothesis. In other words, we show that for any $s\in S\setminus [K]$, it holds that $|T_s|<\tau_n$ with high probability. 
	
	Let $t\in S\setminus[K]$. We bound $|T_t|$ from above as follows
	\begin{eqnarray}
		\vert T_t\vert&=&\left\vert\frac{\tilde{x}_t\T \tilde A\test\tilde{x}_t}{\tilde \sigma_t}\right\vert\nonumber\\
		&\leq& \frac{\varepsilon\left\vert\tilde{x}_t\T P\tilde{x}_t\right\vert+\left\vert{\tilde{x}_t\T (\tilde{A}\test- \varepsilon P)\tilde{x}_t}\right\vert}{\tilde\sigma_t}\label{eqn:Tub}\\
		&=&O\left(\sqrt{\frac{n}{\log^2 n}}\cdot(\log n+\sqrt{\rho_n\log n})\right)\nonumber\\
		&=&O\left(\sqrt{{n}}\right). \label{eqn:T_ub}
	\end{eqnarray}
	The first term in the numerator of \eqref{eqn:Tub} satisfies $\left\vert{{x}_t\T P{x}_t}\right\vert=O(\log n)$ by Lemma~\ref{lem:dom_lb} while the term 
	$\vert{\tilde{x}_t\T (\tilde{A}\test- \varepsilon P)\tilde{x}_t}\vert$ 
	in \eqref{eqn:Tub} is bounded above by $\sqrt{\rho_n\log n} \ll \log n$ from Lemma~\ref{lem:concen_A_test_p}. For the denominator $\tilde\sigma_t$, $t\in S$ implies that $\Vert \tilde x_t^2\Vert_\infty\leq \tilde \sigma^2_t/\log^2 n$, thus
	\begin{align*}
		\tilde \sigma^2_t\geq \log^2 n\cdot\Vert \tilde x_t^2\Vert_\infty\geq \frac{\log^2 n}{n}\cdot n\Vert \tilde x_t^2\Vert_\infty&\geq\frac{\log^2 n}{n}\cdot\Vert \tilde x_t\Vert_2^2=\frac{\log^2 n}{n}.
	\end{align*}
	By a union bound, \eqref{eqn:T_ub} holds simultaneously for $t \in S\setminus[K]$ 
	with probability at least $1 -  n^{-\delta_2+1}$.
	
	\paragraph{Consistency}
	Therefore, it follows that the algorithm outputs $\hat K=K$ with probability tending to 1.
\end{proof}

\begin{remark}\label{remark:consistency_bernoulli}
    A similar result can be obtained in the Bernoulli case. As we explained in the context of the CLT, 
    one loses the independence
    of ${\tilde A}\test$ and ${\tilde A}$ which is used to apply Lemma~\ref{lem:concen_A_test_p} in the proof of the theorem.
    But one can instead condition on which pairs of vertices $\Omega \subseteq \{(i,j) : 1 \leq i < j \leq n\}$ are set aside for the test graph -- before generating $\tilde A$ and $\tilde A\test$. In that case, the two graphs are conditionally independent. The key difference is that 
    $\varepsilon P$ is replaced with the conditional version $\mathcal{P}_\Omega P$ in Lemma~\ref{lem:concen_A_test_p}, where recall that $\mathcal{P}_\Omega P \in \mathbb{R}^{n \times n}$ is the matrix with entries 
$$
(\mathcal{P}_\Omega P)_{ij}
:= P_{ij} \mathbf{1}\{\text{$(i,j) \in \Omega$ or 
$(j,i) \in \Omega$}\}.
$$
One can then use Lemma 2 in the supplementary materials of~\citep{li_network_2020} to show that the two mean quantities are close.
We omit the details. 
\end{remark}

%\clearpage
\section{Details on motivating examples} \label{sec:motivation_details}

In the introduction, the simulated graph comes from a Degree-Corrected Stochastic Blockmodel (DCSBM).  See Section \ref{sec:numerical_simulations} for a description of this model and its parameters.  In Figures \ref{fig:motivation_scree} and \ref{fig:motivation_simulation}, the DCSBM has $k=128$ blocks.  
The 2,560 nodes are randomly assigned to the 128 blocks with uniform probabilities. On average, each block contains 20 nodes. The smallest block has 10 nodes and the largest block has 32. The degree parameters $\theta_i$ are distributed as $\text{Exponential}(\lambda =1)$ and the $B$ matrix is hierarchically structured. In order to specify the elements of $B$, let $\mathbb{T}$ be a complete binary tree with 7 generations (i.e., $2^7 = 128$ leaves).  Each leaf node is assigned to one of the $k=128$ blocks.  Define $u \wedge v \in \mathbb{T}$ as the most recent common ancestor of $u$ and $v$ (i.e., the node closest to the root along the shortest path between $u$ and $v$).  Define $g(u,v)$ as the distance in $\mathbb{T}$ from the root to $u \wedge v$.  
%Define $p(u,v)$ as a sequence of nodes in $\mathbb{T}$ that provide the shortest path between the leaf nodes $u,v$.  Define $g(u,v)$ as the shortest distance between the root of $\mathbb{T}$ and any node in $p(u,v)$.  
So, if the shortest path between $u$ and $v$ passes through the root, then $g(u,v) = 0$. Moreover, $g(u,u) = 7$ for all leaf nodes $u$.   Set $B_{u,v} = p 2^{g(u,v)}$, where $p = .0008$ is chosen so that the average expected degree of the nodes is equal to 20.

The citation graph was originally constructed and studied in \cite{rohe2020vintage}. There are roughly 220 million academic papers in the dataset.  Citations from one paper to another were converted to citations between the journals that published the papers.  If there were more than 5 citations from journal $i$ to journal $j$ using a $5\%$ sample of all edges, then $A_{ij}$ is set to one.  Otherwise, $A_{ij}$ is zero.  For simplicity, the graph was symmetrized by setting $A_{ij}=1$ if $A_{ji} = 1$.  
In this example, $\varepsilon = .05$ and for illustration, the data was divided only one time.

The illustration in Figures \ref{fig:motivation_scree} and \ref{fig:motivation_simulation} splits the edges ten separate times, each with probability $\varepsilon = .1$, and averages the results over those ten folds.

\section{Poisson vs. Bernoulli} \label{sec:poisson_vs_bernoulli}

The Poisson model has previously been used to study statistical inference with random graphs \citep{karrer_stochastic_2011, flynn2020profile,crane2018edge, cai2016edge, amini_biclustering}). In the sparse graph setting, these models produce very similar graphs (e.g. Theorem 7 in \cite{fastrg}).  Moreover, under the Poisson model, the edges are exchangeable \citep{cai2016edge, crane2018edge, fastrg}.  

This section shows that \texttt{EigCV} continues to perform well in settings where the elements of $A$ are Bernoulli random variables.
%(e.g. ). 
The technical results and derivations in this paper do not directly apply to this setting.  The key difficulty comes from the edge splitting. Under the Bernoulli model, $A_{ij} \in \{0,1\}$.  So, it is not possible for both $\tilde A$ and $\tilde A\test$ to get the edge $i,j$. This creates \textit{negative} dependence. 
Because the fitting and test graph are dependent,
a central limit theorem akin to Theorem \ref{thm:cltA} becomes more difficult (Theorem \ref{thm:bernoulli-clt}).  Nevertheless, a theorem akin to Theorem \ref{thm:consistency} still holds with minor changes to the conditions.

This simulation shows that the negative dependence created from Bernoulli edges makes the testing procedure more conservative.  That is, when we are testing 
\[H_0: \E(\lambda\test(\tilde x_\ell)) = 0 \]
for $\ell > k$, the $\alpha =.05$ test rejects with probability less than $.05$.  
This type of miscalibration is traditionally considered acceptable.

This section simulates from a $k=2$ Stochastic Blockmodel and examines the distribution of $T_3$ and $T_4$, under both the Bernoulli and Poisson models for edges.   In all of these simulations, there are $n=2000$ nodes.  Each node is randomly assigned to either block 1 or 2 with equal probabilities. 
Let $i$ and $j$ be any two nodes in the same block and $u$ and $v$ be any two nodes in different blocks.  Across simulation settings, $P_{ij}/P_{u,v} = 2.5$.
While keeping this ratio constant, the values in $P$ increase to make the expected degree of the graph go from 3.5 to 105, by increments of 3.5.  In the Poisson model, $A_{ij} \sim \text{Poisson}( P_{ij} )$ and in the Bernoulli model, $A_{ij} \sim \text{Bernoulli}( P_{ij} )$.
%Under the conditions of Theorem \ref{thm:cltA}, the distribution of the test statistic $T_\ell$ converges to the normal distribution as the number of edges in the test adjacency matrix $A\test$ goes to infinity.  
We compute $T_3$ and $T_4$ in \texttt{EigCV} (Algorithm \ref{alg:eigcv}) with edge split probability $\varepsilon = .05$ and $\mathrm{folds} = 1$. Supplementary Section \ref{app:bernoullivspoisson10fold} gives the identical simulation, but for $\mathrm{folds} = 10$.

We use the one-sided rejection region $T_\ell > 1.65$. If $T_\ell\sim N(0,1)$, then this has level $\alpha = .05$. We refer to the simulated probability that $T_\ell >1.65$ as the rejection probability.  
Figure \ref{fig:level} estimates the rejection probability in two ways.  First, each dot gives the proportion of 1000 replicates in which $T_\ell > 1.65$.  Second, the line gives the values 
\begin{equation} \label{eq:alphahat}
	\hat \alpha = 1 -\Phi((1.65-\bar T_\ell)/\mathrm{SD}(T_\ell)),
\end{equation}
where $\Phi$ is the cumulative distribution function (CDF) of the standard normal, $\bar T_\ell$ is the average value of $T_\ell$ over the 1000 replicates, and $\mathrm{SD}(T_\ell)$ is the standard deviation of these 1000 replicates.  
This is an estimate of the rejection probability, under the assumption that $T_\ell$ is normally distributed.

\begin{figure}[htbp] %  figure placement: here, top, bottom, or page
	\centering
	
	\textbf{Across simulation settings, the $\alpha = .05$ test is conservative.}
	
	\includegraphics[width=6in]{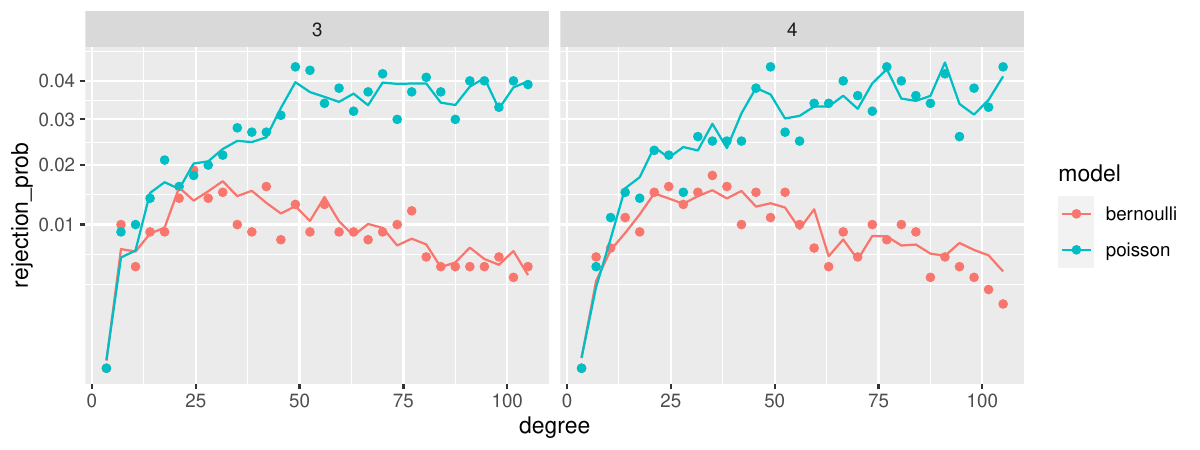} 
	\caption{The left panel gives results for $T_3$ and the right panel gives results for $T_4$. Each point corresponds to the proportion of 1000 replicates that the test statistic was greater than 1.65.  The line is a smoothed version of $\hat \alpha$. 
	}
	\label{fig:level}
\end{figure}

Under the simulation settings described above, Figure \ref{fig:level} shows four things. 
\begin{enumerate}
	\item The points and lines are  below .05.  So, the proposed test is conservative. 
	\item On the left side of both panels, 
	%in Figure \ref{fig:level}, 
	the points and lines are well below .05.  So, the proposed test is particularly conservative for very sparse graphs. Note that the bottom left point in both figures is over-plotted with two points. This is because none of the 4000 tests in the lowest degree graphs were rejected. The corresponding $\hat \alpha$'s are on the order 1/20,000.
	\item On the right side of both panels, 
	%in Figure \ref{fig:level}, 
	the red line decreases.  So,
	for Bernoulli graphs, the test is increasingly conservative for denser graphs. 
	\item The points are scattered around their respective lines, which suggests that the normal distribution provides a reasonable approximation for the right tail of the distribution (but $T_\ell$ does not have expectation zero and variance one). 
\end{enumerate}

The test is increasingly conservative for dense Bernoulli graphs.  In particular, the line giving the normal approximation $\hat \alpha$ is decreasing. So, either $\E(T_3)<0$ or $Var(T_3)<1$ or both (and similarly for $T_4$).  The next figure, Figure \ref{fig:average_z}, shows that the expectation of $T_3$ and $T_4$ decreases away from zero for dense Bernoulli graphs.

\begin{figure}[htbp] %  figure placement: here, top, bottom, or page
	\centering
	\textbf{Under the Bernoulli model, for $\ell>k$, the negative dependence makes the expectation of $T_\ell$ decrease away from zero as the graph becomes more dense. }
	
	\includegraphics[width=6in]{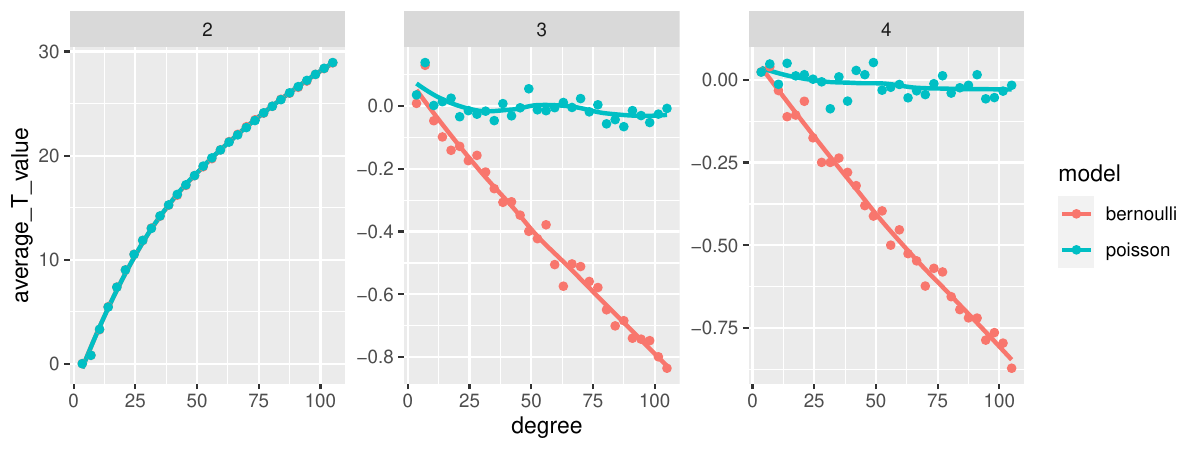} 
	\caption{ Each dot gives the average value of $T_\ell$ over 1000 replicates; $T_2$ on the left, $T_3$ in the middle, and $T_4$ on the right.  
		The previous figure, Figure \ref{fig:level}, shows that as the Bernoulli graph becomes denser, $T_3$ and $T_4$ are less likely to exceed 1.65. This figure shows that this is because their expectation decreases away from zero.
	}
	\label{fig:average_z}
\end{figure}

The expected values of $T_3$ and $T_4$ decrease due to the negative dependence between $\tilde A$ and $\tilde A\test$ under the Bernoulli model, as anticipated by the conditional nature of the CLT in Theorem \ref{thm:bernoulli-clt}. That is, if an edge appears in $\tilde A$, then it cannot appear in $\tilde A\test$, and vice versa. %Because of this, for a fixed vector $x$,  is negatively correlated with $x\T \tilde A\test x$.  
Figure \ref{fig:average_z} shows that $\tilde x_\ell\T \tilde A \tilde x_\ell$ and $\tilde x_\ell\T \tilde A\test \tilde x_\ell$ are  negatively correlated when (1) the edges are Bernoulli, (2) the graph is dense, (3) and $\ell>k$. 
In the two-block Stochastic Blockmodel, this negative dependence does not shift the expectation of $T_2$, even for dense Bernoulli graphs (see also Figure \ref{fig:difference} below). 
The fact that $\tilde x_\ell$ displays negative correlation, but only for $\ell>k$, is particularly interesting. Here is one interpretation: because these eigenvectors do not estimate signal, they only find noise in $\tilde A$.  Then, that noise disappears in $\tilde A\test$ due to the negative dependence. 

On the very left of both panels in Figure \ref{fig:level}, the expected degree is 3.5.  In this very sparse regime, which is below the weak recovery threshold \citep{mossel2015reconstruction} (see also \cite{abbe_community_2018}),  the test are conservative for both Bernoulli and Poisson graphs. 
In this very sparse regime, the eigenvectors $\tilde x_\ell$ often localize on a few nodes that have a large degree simply due to chance.  
When this happens,  $(\tilde x_\ell^2)\T {A} \tilde x_\ell^2$ over-estimates the population quantity $(\tilde x_\ell^2)\T {\E(A)} \tilde x_\ell^2$.  This causes $\tilde \sigma_\ell$ in the denominator of $T_\ell$ to be large, thus making the standard deviation of $T_\ell$ small and the test statistic less likely to exceed 1.65. Figure \ref{fig:sdsim} below displays the standard deviation of the test statistics. 
It is comforting that the test is  conservative in such scenarios because these localized eigenvectors $\tilde x$ are often particularly troubling artifacts of noise.  For example, the consistency result in Section \ref{sec:consistency} requires an additional step to \texttt{EigCV} that discards localized eigenvectors.  The current simulation suggests that the additional step is a technical requirement.  For this reason, we do not include the additional step in \texttt{EigCV}.

Supplementary Section \ref{app:bernoullivspoisson10fold} repeats these figures with  $\mathrm{folds} = 10$.  Increasing the number of folds makes the $T_3$ and $T_4$ even more conservative, while making $T_2$ more powerful.  This happens because the variation in $T_\ell$ comes from both the original data $A$ and the edge splitting. By increasing the number of folds, the second source of variation diminishes while not disrupting the expectation of $T_\ell$.

\subsection{Two clarifying figures for $\textrm{folds} = 1$}
The average value of $T_3$ and $T_4$ are different under the Poisson and Bernoulli models (see Figure \ref{fig:average_z} in Section \ref{sec:poisson_vs_bernoulli}).   To further illustrate this difference and to show that $T_2$ does not have the same property, 
Figure \ref{fig:difference} subtracts the average value of the test statistic \textit{under the Bernoulli model} from the corresponding average under the Poisson model. The difference appears for $T_3$ and $T_4$, but not for  $T_2$.
%shows the average value of the test statistics.  To clarify the difference between the Poisson and Bernoulli graphs,  However, for $\ell>k$
This shows that $\tilde x_\ell\T \tilde A \tilde x_\ell$ and $\tilde x_\ell\T \tilde A\test \tilde x_\ell$ are  negatively correlated when (1) the edges are Bernoulli, (2) the graph is dense, (3) and $\ell>k$. In the two-block Stochastic Blockmodel, this negative dependence does not shift the expectation of $T_2$, even for dense Bernoulli graphs.

\begin{figure}[htbp] %  figure placement: here, top, bottom, or page
	\centering
	\textbf{}
	
	\includegraphics[width=6in]{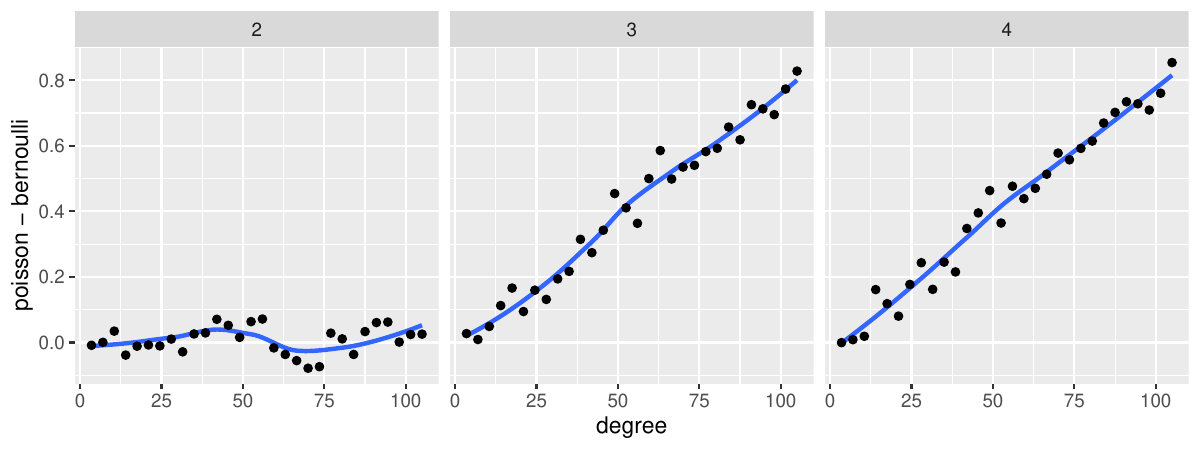} 
	\caption{ Each panel gives the difference between the Poisson and Bernoulli results from Figure \ref{fig:average_z}.   In this simulation model, there are $k=2$ blocks in the Stochastic Blockmodel.  Taken together, this suggests that the negative dependence in Bernoulli graphs between the fitting and testing adjacency matrices diminishes the expected value of $T_\ell$ when $\ell>k$ and the graph is more dense. However, the negative dependence does not appear to diminish the expected value of $T_2$. When a test is conservative under the null, one typically suffers a reduced power under the alternative.  However, this result suggests that the negative dependence require us to pay this price.  The Bernoulli model makes $T_3$ conservative, without making $T_2$ less powerful.
	}
	\label{fig:difference} 
\end{figure}

Figure \ref{fig:sdsim} displays the standard deviation for the test statistics $T_2, T_3,T_4$ over the 1000 replicates in the simulation.  This figure is repeated with $\mathrm{folds} = 10$ in Figure \ref{fig:sdsim10fold}.

\begin{figure}[htbp] %  figure placement: here, top, bottom, or page
	\centering
	\textbf{$SD(T_\ell)$ with $\mathrm{folds} = 1$}
	
	\includegraphics[width=6in]{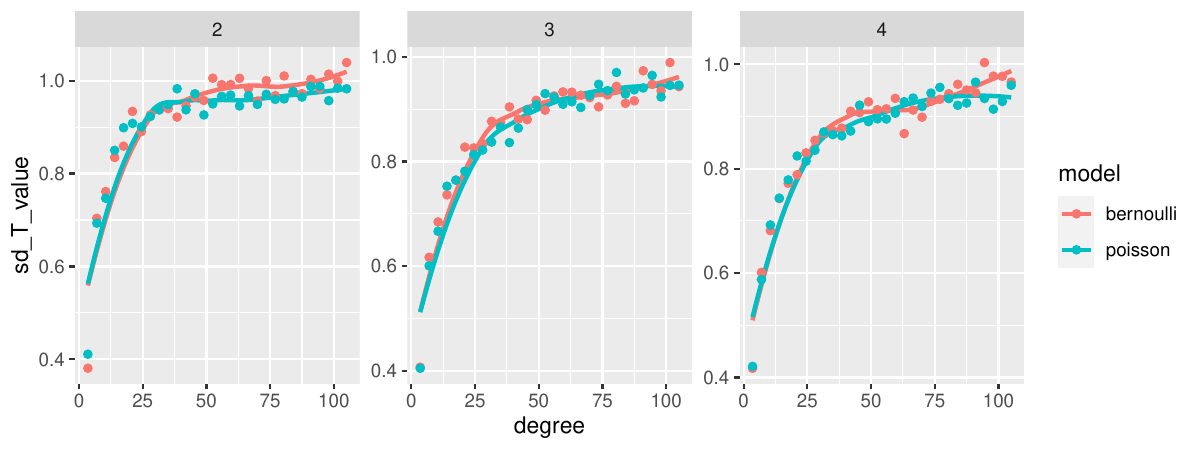} 
	\caption{ As discussed in Section \ref{sec:poisson_vs_bernoulli}, the localization of the eigenvectors on small degree graphs (on the left of each panel) makes $\tilde \sigma$ over estimate the standard error.  This makes the standard deviation of the test statistics over the 1000 replicates smaller than 1.  
	}
	\label{fig:sdsim} 
\end{figure}

\subsection{Increasing the number of folds in the Bernoulli vs. Poisson comparison} \label{app:bernoullivspoisson10fold}

Figure \ref{fig:rejection_prob_10fold} repeats Figure \ref{fig:level}, but with $\mathrm{folds}=10$ instead of 1.  It shows that increasing the number of folds makes $T_3$ and $T_4$ more conservative.  While Theorem \ref{thm:cltA} shows that $T_\ell$ is asymptotically normal with $\textrm{folds} = 1$, increasing the number of folds induces unknown dependence between the test statistics.  Figure \ref{fig:rejection_prob_10fold} suggests that even with 10 folds, the right tail of the distribution of $T_3$ and $T_4$ are well approximated by the normal distribution; this is because the bumpy line that gives $\hat \alpha$ is close to the points.  The tests are conservative because their expectation is not zero and their variance is not one.

Figure \ref{fig:sdsim10fold} shows that 
% the standard deviation of $T_2, T_3, T_4$ decrease when the number of folds increases.  
increasing the number of folds dramatically reduces the variation in $T_2, T_3, T_4$.  This makes $T_2$ more powerful and $T_3, T_4$ more conservative.

\begin{figure}[htbp] %  figure placement: here, top, bottom, or page
	\centering
	\textbf{}
	
	\includegraphics[width=6in]{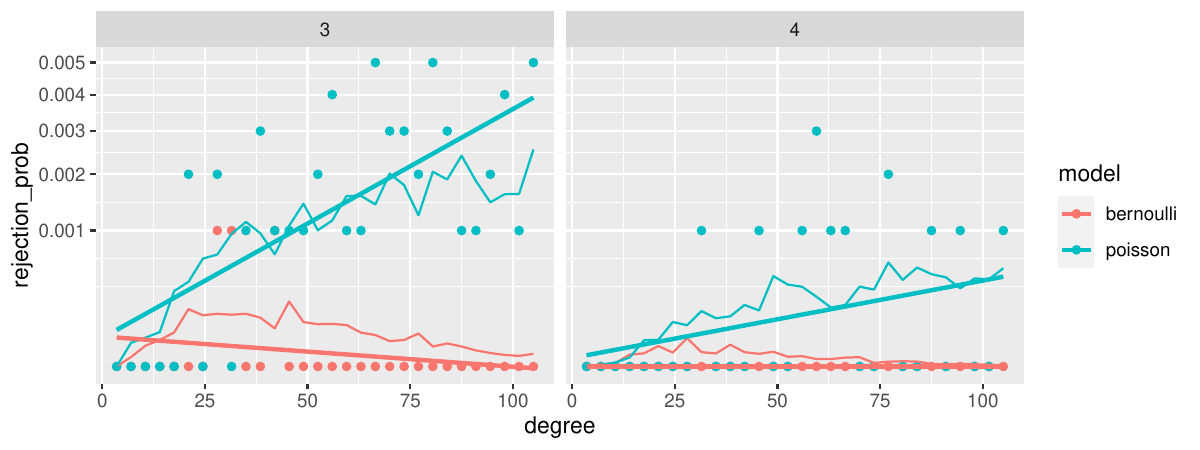} 
	\caption{ This is a repeat of Figure \ref{fig:level}, but with $\textrm{folds} = 10$ instead of 1.  The vertical axis gives estimates of the rejection probability for $T_3$ (left) and $T_4$ (right).  Each dot give the proportion of 1000 simulations for which the test statistic exceeds 1.65.  The bumpy line interpolates the values $\hat \alpha$, defined in Equation \eqref{eq:alphahat}.  Because there are so few rejections, particularly for $T_4$, it is difficult to see whether the bumpy line is close to the points.  The straight line is the ordinary least squares fit to the points. It roughly aligns with the bumpy line.  This suggests that the right tails of the distributions for $T_3$ and $T_4$ are approximately normally distributed. 
	}
	\label{fig:rejection_prob_10fold} 
\end{figure}

\begin{figure}[htbp] %  figure placement: here, top, bottom, or page
	\centering
	\textbf{$SD(T_\ell)$ with $\mathrm{folds = 10}$}
	
	\includegraphics[width=6in]{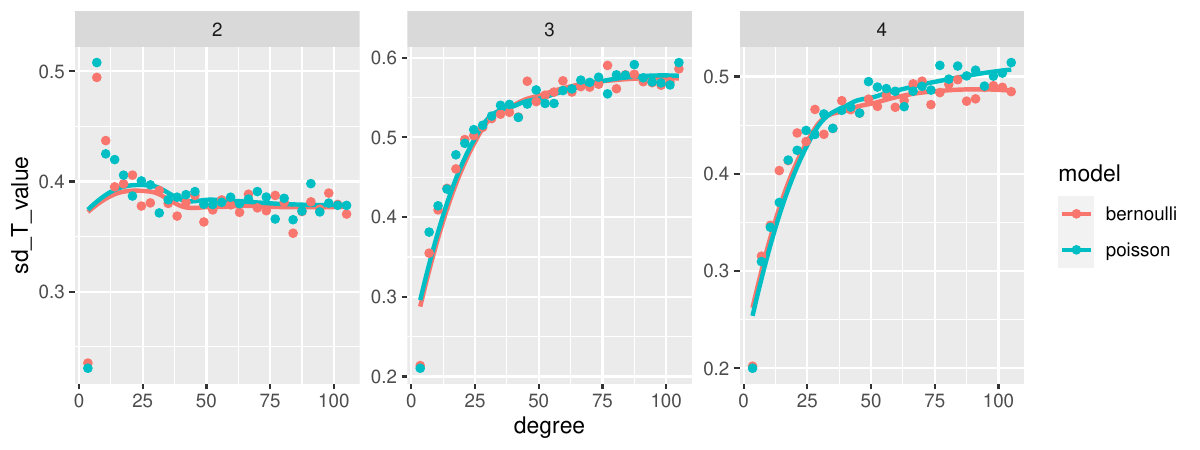} 
	\caption{ This is a repeat of Figure \ref{fig:sdsim}, but with $\textrm{folds} = 10$ instead of 1.  The vertical axis gives the standard deviation of the test statistics $T_2, T_3,T_4$ over the 1000 replicates.  Notice that they are all significantly less than one.   This explains why the rejection probabilities in Figure \ref{fig:rejection_prob_10fold} are significantly lower than .05.  
	}
	\label{fig:sdsim10fold} 
\end{figure}

\section{Additional information on the comparison of techniques}

In Figure \ref{fig:accuracy}, we evaluated the accuracy of each method when requiring the exact recovery of $k$. In order to illustrate how each method either under-estimates or over-estimates $k$, Figure \ref{fig:dev} displays the results in Figure \ref{fig:accuracy} by the relative error for each estimate $\hat k$, which is defined as
$$\text{relative error}=\frac{\hat{k}-k^*}{k^*},$$
where $k^*=10$ is the true $k$.
From the simulation results, we observed that most methods under-estimate $k$ when the average degree of the graph is smaller (i.e., sparser), except for StGoF which over-estimates it. 
In addition, from the standard deviation of the relative error, we observe that \texttt{EigCV} provides a more accurate and less variable estimation of $k$ as the graph sparsity varies.

\begin{figure}[ht]
	\centering
	\includegraphics[width=1\linewidth]{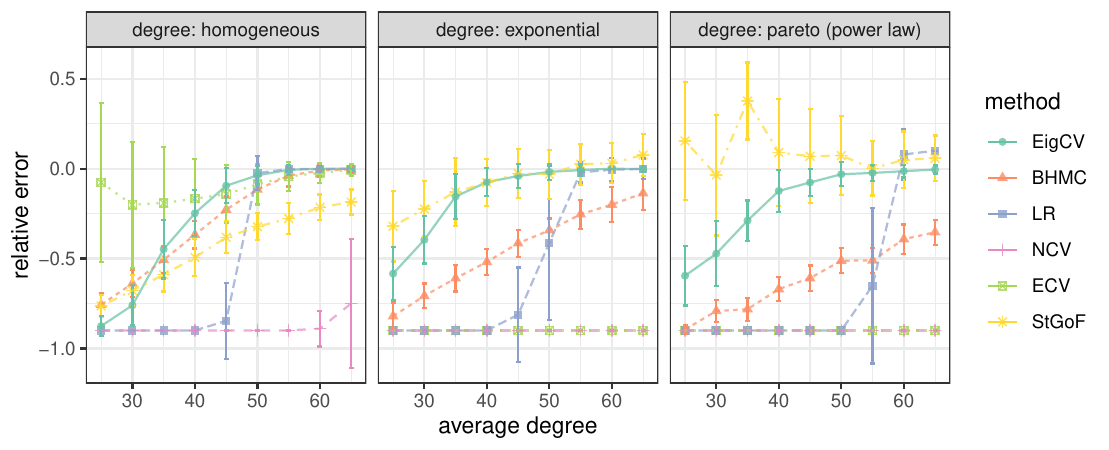}
	\caption{Comparison of relative error for different graph dimensionality estimators under the DCSBM. The panel strips on the top indicate the node degree distribution used. Within each panel, each colored line depicts the relative error of each estimation method as the average node degree increases. Each point on the lines are averaged across 100 repeated experiments. For each point, an error bar indicates the sample standard deviation of relative errors.}
	\label{fig:dev}
\end{figure}

In Section \ref{sec:email}, we removed the 14 small departments that consist of less than 10 members. Among these, two departments have only one members, and eight departments have less than five members. Table \ref{tbl:email_42} compares six methods using this email network without filtering. We observed similarly that \texttt{EigCV} provided a closer estimate of $k$ than other methods. 

\begin{table}[ht]
	\centering
	\caption{Comparison of graph dimensionality estimates using the email network among members in a large European research institution. Each members belongs to one of 42 departments.}
	\label{tbl:email_42}
	\smallskip
	\begin{tabular}{lrr}
		\hline
		Method & Estimate (mean) & Runtime (second) \\ 
		\hline 
		EigCV & 30.56 & 0.81 \\ 
		BHMC & 14.00 & 0.04 \\ 
		LR & 13.00 & 128.17 \\ 
		NCV & 6.96 & 271.15 \\ 
		ECV & 20.08 & 60.13 \\ 
		StGoF & $> 50$ & 544.66 \\ 
		\hline
	\end{tabular}
\end{table}

\clearpage

\footnotesize
\singlespace
\bibliographystyle{plainnat}
\bibliography{gdim}
	
\end{document}